\newif\ifarxiv
\definecolor{green}{RGB}{0,120,0}
\definecolor{hlyellow}{RGB}{250, 250, 190}
\definecolor{diegoeditcolor}{RGB}{210,210,255}
\definecolor{migueleditcolor}{RGB}{210,255,210}
\definecolor{light-gray}{gray}{0.9}
\newcommand{\proofcase}[1]{\noindent\colorbox{light-gray}{#1}~~}
\newcommand{\fullversion}{\href{\arxivurl}{full version}}
\renewcommand{\int}{\mathbb{Z}}
\newcommand{\pitwo}{\ensuremath{\Pi^p_2}\xspace}
\newcommand{\sigmatwo}{\ensuremath{\Sigma^p_2}\xspace}
\newcommand{\vars}{\textit{vars}}
\newcommand{\CQ}{\ensuremath{\textsf{CQ}}\xspace}
\newcommand{\CRPQ}{\ensuremath{\textsf{CRPQ}}\xspace}
\newcommand{\CRPQfin}{\ensuremath{\textsf{CRPQ}^{\textsf{fin}}}\xspace}
\newcommand{\injto}{\xrightarrow{\textit{inj}}}
\newcommand{\ainjto}{\xrightarrow{\textit{a-inj}}}
\newcommand{\collapse}{\equiv}
\newcommand{\Exp}[1]{\textup{\textsc{Exp}}^{\mathit{#1}}}
\newcommand{\ani}{\textit{a-inj}\xspace}
\newcommand{\qni}{\textit{q-inj}\xspace}
\newcommand{\decpb}[3]{%
% \medskip
\begin{center}
\begin{tabular}{ |r p{6.5cm}|  }
 \hline
 \textbf{\sc Problem} & #1\\
 \hline
 \textbf{\sc Given}   & #2\\
 \textbf{\sc Question}& #3\\
 \hline
\end{tabular}
\end{center}
% \medskip
}
\newtheorem{claim}{Claim}[section]
\newtheorem{remark}{Remark}[section]
  \providecommand\BibTeX{{%
    \normalfont B\kern-0.5em{\scshape i\kern-0.25em b}\kern-0.8em\TeX}}}
\begin{document}
\fancyhead{}

\title{Conjunctive Regular Path Queries under Injective Semantics}
\ifarxiv\else\titlenote{Full version available at \url{\arxivurl}}\fi

\author{Diego Figueira}
\email{diego.figueira@cnrs.fr}
\orcid{0000-0003-0114-2257}
\affiliation{%
  \institution{Univ. Bordeaux, CNRS,  \\Bordeaux INP, LaBRI, UMR 5800}
  % \streetaddress{UMR 5800}
  \city{F-33400, Talence}
  % \state{Talence}
  \country{France}
  \postcode{F-33400}
}
\author{Miguel Romero}
\email{miguel.romero.o@uai.cl}
\orcid{0000-0002-2615-6455}
\affiliation{%
  \institution{Faculty of Engineering and Science, \\Universidad Adolfo Ib\'a\~nez}
  % \streetaddress{}
  % \city{Santiago}
  % \state{}
  \country{Chile}
  \postcode{}
}
\thanks{Diego Figueira is partially supported by ANR QUID, grant ANR-18-CE400031.
Miguel Romero is funded by Fondecyt grant 11200956, the Data Observatory Foundation, and the National Center for Artificial Intelligence CENIA FB210017, Basal ANID}

\begin{abstract}
	We introduce injective semantics for Conjunctive Regular Path Queries (CRPQs), and study their fundamental properties.
We identify two such semantics: atom-injective and query-injective semantics, both defined in terms of injective homomorphisms.
These semantics are natural generalizations of the well-studied class of RPQs under simple-path semantics to the class of CRPQs.
We study their evaluation and containment problems, providing useful characterizations for them, and we pinpoint the complexities of these problems. 
Perhaps surprisingly, we show that containment for CRPQs becomes undecidable for atom-injective semantics, and \pspace-complete for query-injective semantics, in contrast to the known \expspace-completeness result for the standard semantics.
The techniques used differ significantly from the ones known for the standard semantics, and new tools tailored to injective semantics are needed. 
We complete the picture of complexity by investigating, for each semantics, the containment problem for the main subclasses of CRPQs, namely Conjunctive Queries and CRPQs with finite languages.

\end{abstract}

%% The code below is generated by the tool at http://dl.acm.org/ccs.cfm.
\begin{CCSXML}
<ccs2012>
   <concept>
       <concept_id>10002951.10002952.10003197.10010825</concept_id>
       <concept_desc>Information systems~Query languages for non-relational engines</concept_desc>
       <concept_significance>500</concept_significance>
       </concept>
   <concept>
       <concept_id>10003752.10003809.10010052</concept_id>
       <concept_desc>Theory of computation~Parameterized complexity and exact algorithms</concept_desc>
       <concept_significance>300</concept_significance>
       </concept>
   <concept>
       <concept_id>10003752.10010070.10010111.10011711</concept_id>
       <concept_desc>Theory of computation~Database query processing and optimization (theory)</concept_desc>
       <concept_significance>300</concept_significance>
       </concept>
 </ccs2012>
\end{CCSXML}

\ccsdesc[500]{Information systems~Query languages for non-relational engines}
\ccsdesc[300]{Theory of computation~Database query processing and optimization (theory)}

\keywords{graph databases, regular path queries (RPQ), containment, evaluation, simple paths, injective homomorphisms}
\maketitle

\section{Introduction}
\label{sec:introduction}

Graph databases are important for many applications nowadays \cite{biggraphs2021,survey-graphdbs2017}. 
In a nutshell, graph databases are abstracted as edge-labeled directed graphs, 
where nodes represent entities and labeled edges represent relations between these entities.
A fundamental way to query graph databases is by finding \emph{patterns} on the interrelation between entities. 
In this respect, a central querying mechanism for modern graph query languages is that of \emph{regular path queries} (RPQs). 
RPQs provide a simple form of recursion tailored to discovering entities linked by paths with certain properties.
These are queries of the form $x \xrightarrow{L} y$, where $L$ is a regular expression on the alphabet of database edge labels. Such a query returns all pairs of nodes $(u,v)$ in the database such that there is a (directed) path from $u$ to $v$ whose label matches $L$.

The closure under conjunction and existential quantification of RPQ yields what is known as \emph{Conjunctive RPQs} (CRPQs).
Indeed, CRPQs can be understood as the generalization of conjunctive queries with this simple form of recursion. 
CRPQs are part of SPARQL~\cite{sparql1.1}, the W3C standard for querying RDF data (a widespread format for graph databases). Some examples of RDF databases are well-known knowledge bases such as DBpedia and Wikidata. 
For example, (C)RPQs are popular for querying Wikidata \cite{MalyshevKGGB18,BonifatiMT-www19}. 
More generally, CRPQs are basic building blocks for querying graph-structured databases \cite{DBLP:conf/pods/Baeza13,survey-graphdbs2017}. They are part of G-core \cite{DBLP:conf/sigmod/AnglesABBFGLPPS18} and Cypher \cite{DBLP:conf/sigmod/FrancisGGLLMPRS18}, the latter being the query language of Neo4j, which is currently one of the most popular commercial graph databases. 
They are also part of the ongoing standardization effort GQL for graph query languages \cite{GQL-sigmod2022,GQLsite}. 

\paragraph{Alternative semantics.}

The semantics for the evaluation of an RPQ $x \xrightarrow{L} y$ as above often assumes that any, arbitrary, (directed) path from $u$ to $v$ is allowed as long as it satisfies the regular property $L$. However, there are alternative semantics in which one may restrict the path to have no repeated vertices (\aka~\emph{simple path}), or no repeated edges (\aka~\emph{trail}). 
In this way, only a \emph{finite number} of paths need to be considered.
As a matter of fact, these alternative semantics have received considerable attention both in practice and from the database theory community.
Indeed, they are part of Neo4j's Cypher query language and are included in GQL as possible ways to evaluate RPQs. 
Furthermore, these alternative semantics have been extensively studied in the literature from the late 80s onwards \cite{DBLP:conf/sigmod/CruzMW87,DBLP:journals/siamcomp/MendelzonW95,DBLP:journals/jcss/BaganBG20,DBLP:journals/tods/MartensT19,DBLP:journals/tods/LosemannM13,MartensPopp-pods22}. 
However, rather surprisingly, this body of research has focused mainly on RPQs, 
leaving the case of CRPQs under alternative semantics essentially unexplored.

\paragraph{Contribution.}
We introduce \emph{injective} semantics for CRPQs, which generalize the simple-path semantics for RPQs, and we investigate the fundamental properties of CRPQs under these semantics. 
Concretely, we identify two possible natural semantics:

(1) \textit{Atom-injective.}
The first semantics is to require that each CRPQ \emph{atom} is interpreted with the simple-path semantics of RPQs, that is, atoms of the form $x \xrightarrow{L} y$ must be mapped to \textit{simple paths} and atoms of the form $x \xrightarrow{L} x$ must be mapped to \textit{simple cycles}.
In particular, there is no requirement that paths from different atoms be disjoint. 
Under this semantics, a Boolean CRPQ like $Q=\exists x,y,z\, (x \xrightarrow{(a+b)^+} y \land x \xrightarrow{(b+c)^+} z)$
holds true in the graph database consisting of a directed path of $b$'s by mapping $x$ to the first node of the path and $y$ and $z$ to the last node.

(2) \textit{Query-injective.}
The alternative is to consider a more restrictive semantics in which paths corresponding to different atoms must be mapped to different nodes, and hence there cannot be repeated nodes neither in paths nor between paths. 
This semantics generalizes both RPQ under simple path semantics and Conjunctive Queries under injective semantics.
In this case, the query $Q$ above may only be true if the database contains two simple paths starting in the same node with the corresponding language which are disjoint (except for the origin).

We call the former \defstyle{atom-injective semantics} since the ``no repeated nodes'' condition is required separately for each atom, and the latter \defstyle{query-injective semantics} since injectivity is required for the query as a whole. 
The three semantics (standard, atom-injective, query-injective) form a hierarchy, where query-injective is the most restrictive and standard semantics is the least.

We first consider the \emph{evaluation problem} for CRPQs, that is, checking whether a tuple $\bar{v}$ belongs to the results of a query $Q$ over a particular graph database $G$. 
For standard semantics, this problem is \np-complete in \emph{combined complexity}, that is, when both query and database are part of the input,  
and \nl-complete in \emph{data complexity}, that is, when the query is considered to be fixed. 
Under both injective semantics, the evaluation problem remains \np-complete in combined complexity, 
and becomes \np-complete in data complexity. 
This follows straight from the fact that RPQ evaluation under simple-path semantics is \np-complete, even for very simple RPQs~\cite{DBLP:journals/siamcomp/MendelzonW95,DBLP:conf/pods/Baeza13}. 

We therefore turn our attention to the \emph{containment problem}, which is the main focus of this paper. 
This problem asks whether every result of a query $Q_1$ is also returned by a query $Q_2$, independently of the underlying database. Checking containment is one of the most basic static analysis tasks, and it can be a means for query optimization.
The containment problem for CRPQs is known to be \expspace-complete \cite{CGLV00,Florescu:CRPQ} under standard semantics, and the study of containment has been extended to queries with restricted shapes \cite{Figueira20} or restricted regular expressions \cite{FigueiraGKMNT20}.

Rather surprisingly, as we show, the hierarchy of the three semantics (standard, atom-injective, query-injective) is not reflected in the complexity of the containment problem: the most restrictive semantics (query-injective) is \pspace-complete, the least restrictive one (standard) is \expspace-complete, and the middle one (atom-injective) is undecidable.
These results for atom- and query-injective semantics are the main technical contributions of this paper. We complete the picture with a thorough study on the containment problem for the two main subclasses of CRPQs: Conjunctive Queries, and CRPQs with no Kleene star operator. We provide complexity completeness results for all possible combinations (\cf~Figure~\ref{fig:summary}).
\begin{figure*}
                \begin{tabular}{|r|ccccc|}
                    \hline
                        &\CQ/\CQ 
                        &\CQ/\CRPQ  
                        & \CRPQ/\CQ 
                        & \CQ/\CRPQfin
                        &\CRPQfin/\CQ
                        \\
        \hline
                    standard%$st$
                        &\np-c \cite{DBLP:conf/stoc/ChandraM77}  
                        &\np-c ($\dag$)
                        &\pitwo-c ($\ddag$)
                        &\np-c ($\dag$) 
                        &\pitwo-c ($\mathsection$,$\ddag$)
                        \\
                    query-injective%\qni 
                        &\np-c \ifarxiv(\ref{prop:cq-crpq-cq-inj})\else(\ref{thm:other} (1))\fi
                        & \np-c \ifarxiv(\ref{prop:cq-crpq-cq-inj})\else(\ref{thm:other} (1))\fi
                        &\pitwo-c \ifarxiv(\ref{thm:crpq-cq-pitwo-hard-qni},\ref{prop:crpq-cq-all-upper})\else(\ref{thm:crpq-cq-pitwo-hard-qni},\ref{thm:other} (4))\fi
                        &\np-c \ifarxiv(\ref{prop:cq-crpq-cq-inj})\else(\ref{thm:other} (1))\fi
                        &\pitwo-c \ifarxiv(\ref{thm:crpq-cq-pitwo-hard-qni},\ref{prop:crpq-cq-all-upper})\else(\ref{thm:crpq-cq-pitwo-hard-qni},\ref{thm:other} (4))\fi
                        \\
                    atom-injective%\ani
                        & \np-c \ifarxiv(\ref{cor:cq-cq-sp})\else(\ref{thm:other} (2))\fi
                        & \pitwo-c \ifarxiv(\ref{thm:crpqfin-crpqfin-lower},\ref{prop:crpqfin-crpq-upper})\else(\ref{thm:crpqfin-crpqfin-lower},\ref{thm:other} (7))\fi
                        &\pitwo-c \ifarxiv(\ref{prop:crpq-cq-ani-hard},\ref{prop:crpq-cq-all-upper})\else(\ref{thm:other} (3),\ref{thm:other} (4))\fi
                        &\pitwo-c \ifarxiv(\ref{thm:crpqfin-crpqfin-lower},\ref{prop:crpqfin-crpq-upper})\else(\ref{thm:crpqfin-crpqfin-lower},\ref{thm:other} (7))\fi
                        &\pitwo-c \ifarxiv(\ref{prop:crpq-cq-ani-hard},\ref{prop:crpq-cq-all-upper})\else(\ref{thm:other} (3),\ref{thm:other} (4))\fi
                        \\
                        \hline
                \end{tabular}
        
                \smallskip
        
                \hspace{.440cm}
                \begin{tabular}{|r|cccc|}
                    \hline
                        &\CRPQ/\CRPQfin
                        &\CRPQfin/\CRPQ
                        & \CRPQfin/\CRPQfin
                        & \CRPQ/\CRPQ
                        \\
        \hline
                    standard       
                        &\pspace-c \ifarxiv(\ref{prop:crpq-crpqfin-pspace-h},\ref{prop:crpq-crpqfin-pspace-st})\else(\ref{thm:other} (5),\ref{thm:other} (6))\fi
                        &\pitwo-c \ifarxiv($\mathsection$,\ref{prop:crpqfin-crpq-upper})\else($\mathsection$,\ref{thm:other} (7))\fi
                        &\pitwo-c \ifarxiv($\mathsection$,\ref{prop:crpqfin-crpq-upper})\else($\mathsection$,\ref{thm:other} (7))\fi
                        &\expspace-c ($\mathdollar$,\textcent) 
                        \\
                    query-injective
                        &\pspace-c \ifarxiv(\ref{prop:crpq-crpqfin-pspace-h},\ref{thm:crpq-crpq-nodeinj-pspace})\else(\ref{thm:other} (5),\ref{thm:crpq-crpq-nodeinj-pspace})\fi
                        &\pitwo-c \ifarxiv(\ref{thm:crpq-cq-pitwo-hard-qni},\ref{prop:crpqfin-crpq-upper})\else(\ref{thm:crpq-cq-pitwo-hard-qni},\ref{thm:other} (7))\fi
                        & \pitwo-c \ifarxiv(\ref{thm:crpq-cq-pitwo-hard-qni},\ref{prop:crpqfin-crpq-upper})\else(\ref{thm:crpq-cq-pitwo-hard-qni},\ref{thm:other} (7))\fi
                        &\pspace-c \ifarxiv(\ref{prop:crpq-crpqfin-pspace-h},\ref{thm:crpq-crpq-nodeinj-pspace})\else(\ref{thm:other} (5),\ref{thm:crpq-crpq-nodeinj-pspace})\fi \\
                    atom-injective
                        &undec.\ (\ref{theo:undec-atom-ni})
                        &\pitwo-c \ifarxiv(\ref{thm:crpqfin-crpqfin-lower},\ref{prop:crpqfin-crpq-upper})\else(\ref{thm:crpqfin-crpqfin-lower},\ref{thm:other} (7))\fi
                        &\pitwo-c \ifarxiv(\ref{thm:crpqfin-crpqfin-lower},\ref{prop:crpqfin-crpq-upper})\else(\ref{thm:crpqfin-crpqfin-lower},\ref{thm:other} (7))\fi
                        & undec.\ (\ref{theo:undec-atom-ni})
                        \\
                        \hline
                \end{tabular}
                \hspace{.5cm}
        
                \vspace{.1cm}
        
                {\small 
                    $\dag$: \cite[Thm~4.2]{FigueiraGKMNT20} 	
                        ~~
                    $\mathsection$: \cite[Thm~4.3]{FigueiraGKMNT20} 
                        ~~
                    $\ddag$: \cite[Thm~4.4]{FigueiraGKMNT20} 
                        ~~
                    $\mathdollar$: \cite[Thm.~6]{CGLV00} 
                        ~~
                    \textcent: \cite[Thm.~4.8]{Florescu:CRPQ}
                }
    \vspace{-.4cm}
            \caption{Complexity of the containment problem under standard, query-injective, and atom-injective semantics. Numbers in brackets reference proposition/theorem numbers\ifarxiv ~(some of them in Appendix~\ref{app:other-results}).\else. All missing proofs can be found in the \fullversion.\fi}
            \label{fig:summary}
    \end{figure*}
Our main results require the development of novel techniques, which yield insights on the 
subtle difficulties for handling static analysis under these semantics.

\paragraph{Organization.}
After a preliminary section \S \ref{sec:preliminaries}, we define and characterize the evaluation and containment problems in \S \ref{sec:eval} and \S \ref{sec:containment}, respectively. 
We study the containment problem for arbitrary CRPQs in \S \ref{sec:unrestricted}, and for subclasses of CRPQs in \S \ref{sec:subclasses}. We conclude with \S \ref{sec:discussion}.
\ifarxiv Omitted or sketched proofs can be found in the Appendix.\fi

\section{Preliminaries}
\label{sec:preliminaries}

We assume familiarity with regular languages, regular expressions and non-deterministic finite automata (NFA). 
We often blur the distinction between a regular expression and the language it defines; similarly for NFAs. 

\paragraph{Graph databases and paths.} 
A \defstyle{graph database} over a finite alphabet $\A$ is a finite edge-labeled graph $G = (V, E)$ over $\A$, where $V$ is a finite set of vertices and $E \subseteq V \times \A \times V$ is the set of labeled edges (or simply \defstyle{edges}). We write $u \xrightarrow{a} v$ to denote an edge $(u,a,v) \in E$. 
A \defstyle{path} from $u$ to $v$ in a graph database $G=(V,E)$ over alphabet $\A$ is a (possibly empty) sequence 
$\pi = v_0 \xrightarrow{a_1} v_1,\, v_1 \xrightarrow{a_2} v_2,\, \dots\, ,v_{k-1}  \xrightarrow{a_k} v_k$ 
of edges of $G$, where $k \geq 0$,  $u=v_0$ and $v=v_k$. 
An \defstyle{internal node} of such a path is any node $v_i$ with $0 < i < k$.
The {\em label} of $\pi$ is the word $a_1 \dots a_k \in \A^*$. When $k = 0$ the label of $\pi$ is the empty word $\epsilon$. 
We say that $\pi$ is a \defstyle{simple path} if all the nodes $v_i$ are pairwise distinct,
and a \defstyle{simple cycle} if $v_0=v_k$ and all the nodes $v_i$ (for $i<k$) are pairwise distinct.

\paragraph{Conjunctive queries and homomorphisms.} In the setting of graph databases, a \defstyle{conjunctive query} (CQ) $Q$ over a finite alphabet $\A$ is an expression 
$Q(x_1,\dots, x_n) = A_1\land \dots\land A_m$, for $m\geq 0$, where $(x_1,\dots, x_n)$ is a tuple of variables, and each $A_i$ is an \defstyle{atom} of the form $x \xrightarrow{a} y$, for variables $x$ and $y$, and $a\in \A$. 
We denote by $\vars(Q)$ the set of variables appearing in $Q$. We often write $Q(\bar{x})$ instead of $Q$ to emphasize 
the tuple $\bar{x} = (x_1,\dots, x_n)$ of \defstyle{free variables} of $Q$. We assume that the free variables $x_i$ are not necessarily distinct. 
The variables of $Q$ which are not in $\set{x_1, \dots, x_n}$ are (implicitly) existentially quantified.
As usual, if $\bar{x}$ is empty, we say that the CQ $Q$ is \defstyle{Boolean}.
Note that every CQ can be seen as a graph database (each atom is an edge), hence, by slightly abusing notation, 
we sometimes use graph database terminology for CQs.

A \defstyle{homomorphism} $h$ from a CQ $Q(\bar{x})$ to a graph database $G=(V,E)$ is a mapping from $\vars(Q)$ to $V$ such that $h(x) \xrightarrow{a} h(y)$ belongs to $E$ for each atom $x \xrightarrow{a} y$ of $Q$. 
We say that $h$ is \defstyle{injective} if additionally we have $h(x)\neq h(y)$ for all pairs of distinct variables $x$ and $y$.
We write $Q\to G$ if there is a homomorphism from $Q$ to $G$ and $h:Q\to G$ if $h$ is such a homomorphism. 
Similarly, for a tuple $\bar{v}$, we write $Q\to (G, \bar{v})$ if there is a homomorphism $h$ from $Q$ to $G$  such that $h(\bar{x}) = \bar{v}$ and 
$h: Q\to (G, \bar{v})$ to make such $h$ explicit. 
We use similar notation for injective homomorphisms, replacing $\to$ by $\injto$.
Homomorphisms between CQs are essentially defined as before with the  difference that free variables are mapped to free variables.
That is, given two CQs $Q_1(\bar{x}_1)$, $Q_2(\bar{x}_2)$, we have $h: Q_1 \to Q_2$  if $h:Q_1 \to (G,\bar{x}_2)$, and $h: Q_1 \injto Q_2$ if $h:Q_1 \injto (G,\bar{x}_2)$, where $G$ is the graph database denoted by $Q_2$.

We also work with CQs with \defstyle{equality atoms}, which are queries of the form $Q(\bar{x}) = P \land I$, 
where $P$ is a CQ (without equality atoms) and $I$ is a conjunction of equality atoms of the form 
$x=y$ (the variables $x$ and $y$ may not belong to $\vars(P)$). 
Again, we denote by $\vars(Q)$ the set of variables appearing in $Q$.
We define the binary relation $=_Q$ over $\vars(Q)$ to be the reflexive-symmetric-transitive closure of the binary relation $\{(x, y): \text{$x=y$ is an equality atom in $Q$}\}$. 
In other words, we have $x=_Q y$ if the equality $x=y$ is forced by the equality atoms of $Q$. 
Note that every CQ with equality atoms $Q(\bar{x}) = P \land I$ is equivalent to a CQ without equality atoms  $Q^{\collapse}$, 
which is obtained from $Q$ by collapsing each equivalence class of the relation $=_Q$ into a single variable. 
This transformation gives us a \emph{canonical} renaming, which we always denote by $\Phi$, from $\vars(Q)$ to $\vars(Q^{\collapse})$, defined by $\Phi(x) = C$, where $C$ is the equivalence class containing $x$.  
In particular, the tuple of free variables of $Q^{\collapse}$ is $\Phi(\bar{x})$.

\paragraph{Conjunctive regular path queries.} A \defstyle{conjunctive regular path query} (CRPQ) $Q$ over a finite alphabet $\A$ is an expression 
$Q(x_1,\dots, x_n) = A_1\land \dots\land A_m$, for $m\geq 0$, where each $A_i$ is an atom of the form $x \xrightarrow{L} y$, for variables $x$ and $y$, and a regular expression $L$ over $\A$.  
As before, we denote by $\vars(Q)$ the set of variables of $Q$ and often write $Q(\bar{x})$ instead of $Q$ where $\bar{x} = (x_1,\dots, x_n)$ is the tuple of (not necessarily distinct) free variables of $Q$. 
If the tuple $\bar{x}$ is empty, we say that  $Q$ is \defstyle{Boolean}. 
The class of CRPQs extends the class of CQs and the well-studied class of \defstyle{regular path queries} (RPQs). 
Indeed, each CQ can be seen as a CRPQ  where the regular expressions are single labels from $\A$. 
On the other hand, an RPQ corresponds to a CRPQ of the form $Q(x,y)= x \xrightarrow{L} y$.

In this paper we shall consider three basic classes: the class \CQ of all Conjunctive Queries, the class \CRPQ of all CRPQs, and the class \CRPQfin of CRPQs using regular expressions with no Kleene-star (denoting finite languages). 
Observe that the latter corresponds to the subclass of \CRPQ without recursion.

\subsection{Standard, atom-injective, and query-injective semantics}

We now define the \emph{standard semantics} for CRPQs (\ie, the usual semantics from the database theory literature) and we introduce the two new sorts of injective semantics.

For simplicity of exposition, we first give the semantics for CRPQ's without $\epsilon$, and we then show how to expand the semantics to languages that include $\epsilon$.
Let $Q$ be a CRPQ of the form $Q(\bar z) =  x_1 \xrightarrow{L_1} y_1 \land \dotsb \land x_n \xrightarrow{L_n} y_n$ and assume that no language $L_i$ contains $\epsilon$ (the empty word).
Given a graph database $G$, the \defstyle{evaluation} of $Q$ over $G=(V,E)$ \defstyle{under standard semantics} ($st$-semantics for short), denoted by $Q(G)^{st}$, is the set of tuples 
$\bar v$ of nodes for which there is a mapping $\mu : \vars(Q) \to V$ such that $\mu(\bar z) = \bar v$ and for each $i$ there is a path $\pi_i$ from $\mu(x_i)$ to $\mu(y_i)$ in $G$ whose label is in $L_i$.
The evaluation under \defstyle{atom-injective semantics} ($\ani$-semantics for short), denoted by $Q(G)^{\ani}$, is defined similarly, but we further require that each $\pi_i$ is a simple path (if $x_i\neq y_i$) or a simple cycle (if $x_i = y_i$).
Finally, the evaluation under \defstyle{query-injective semantics} ($\qni$-semantics for short), denoted by $Q(G)^{\qni}$, is similar to the atom-injective semantics (\ie, each $\pi_i$ must be simple), 
but we additionally require that $\mu$ is injective and that for every $i \neq j$ there are no internal nodes shared by $\pi_i$ and $\pi_j$.

The semantics for a CRPQ $Q$ with $\epsilon$-words is defined as expected: the query $Q$ is equivalent to a \emph{union} of 
$\epsilon$-free CRPQs and hence its evaluation is the union of the evaluation of these $\epsilon$-free queries. 
More formally, 
for $\star \in \set{st, \ani, \qni}$,
the semantics under $\star$-semantics of $Q(\bar z) = x \xrightarrow L y \land Q'$, where $L$ contains $\epsilon$ and $Q'$ is a CRPQ, is the union of the set of tuples given by the $\star$-evaluation of $Q(\bar z) = x \xrightarrow {L \setminus \set \epsilon} y \land Q'$ and by the $\star$-evaluation of
$Q(\bar z[x / y]) = Q'[x / y]$, where $X[x / y]$ is the result of replacing every occurrence of variable $x$ with variable $y$ in $X$.

\begin{remark}
    \label{remark:sem-comparison}
    The three semantics form a hierarchy. In particular, for every CRPQ $Q$ and every graph database $G$, 
    we have $Q(G)^{\qni}\subseteq Q(G)^{\ani}\subseteq Q(G)^{st}$. The converse inclusions do not hold in general. 
    \end{remark}
    
            \begin{figure}
                \begin{center}
                    \includegraphics[width=.47\textwidth]{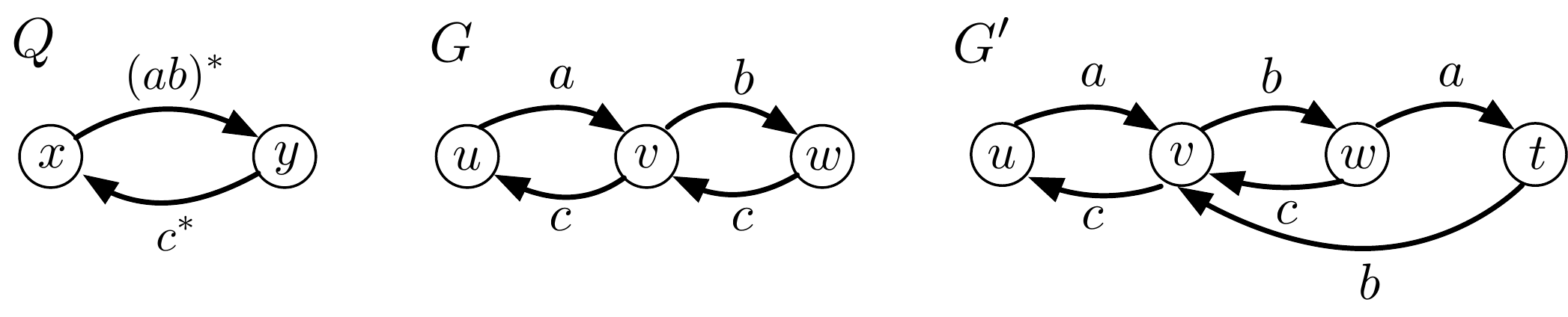}
                \end{center}
                \caption{The CRPQ $Q(x,y)$ and graph databases $G$ and $G'$ from Example~\ref{ex:1}.}
                \label{fig:example1}
            \end{figure}
    
    \begin{example}
    \label{ex:1}
    Consider the CRPQ $Q(x,y) = x \xrightarrow{(ab)^*} y \land y \xrightarrow{c^*} x$ and the graph database $G$ from Figure~\ref{fig:example1}. 
    Observe that $(u,w)\in Q(G)^{\ani}$ but $(u,w)\notin Q(G)^{\qni}$. On the other hand, it is easy to check that $Q(G)^{st}=Q(G)^{\ani}$. 
    The graph database $G'$ from Figure~\ref{fig:example1} provides a separation of the three semantics. Indeed, as before $Q(G')^{\ani}\not\subseteq Q(G')^{\qni}$, but additionally 
    we have $(u,v)\in Q(G')^{st}$ and $(u,v)\not\in Q(G')^{\ani}$. 
\end{example}

\subsection{Characterizing evaluation}
We state the semantics defined above in terms of restrictive notions of homomorphisms based on injectivity. This is based on the key notion of \emph{expansion} of a CRPQ (\aka~canonical database), which will become useful for the technical developments of the next sections.

For any atom $x \xrightarrow{L} y$ of a CRPQ $Q$ and $w \in L$, the \defstyle{$w$-expansion} of $x \xrightarrow{L} y$ is the Boolean CQ with equality atoms 
of the form 
    (i) $P = x \xrightarrow{a_1} z_1 \land z_1 \xrightarrow{a_2} z_2 \land \cdots \land z_{k-1} \xrightarrow{a_k} y$ if $w \neq \epsilon$, such that the $z_i$ are fresh new variables, or of the form
    (ii) $P = (y=z)$ if $w = \epsilon$.
We usually write $x \xrightarrow{w} y$ to denote such a $w$-expansion, with $w=a_1\cdots a_k$. An \defstyle{expansion} of $x \xrightarrow{L} y$ is a \defstyle{$w$-expansion} for some $w \in L$.
An \defstyle{expansion profile} of the CRPQ $Q$ is any function $\varphi$ mapping each atom of $Q$ to an expansion thereof. 
An \defstyle{expansion} of the CRPQ $Q(\bar{x})=A_1\land \dots\land A_m$ is a CQ $E(\bar{y})$ for which there is an expansion profile $\varphi$ of $Q$ such that $E=\widetilde{E}^{\collapse}$, where 
$\widetilde{E}$ is the CQ with equality atoms defined by $\widetilde{E}(\bar{x}) = \varphi(A_1)\land \cdots\land\varphi(A_m)$.  
We denote by $\Exp{}(Q)$ the set of all expansions of the CRPQ $Q$.
Intuitively, an expansion of $Q$ is obtained by expanding each atom of $Q$ and then collapsing equivalent variables. 
For example, one possible expansion of the query $Q(x,y) = x \xrightarrow{(ab)^*} y \land y \xrightarrow{c^*} x$ from Figure~\ref{fig:example1} is $E_1(x,x) = x \xrightarrow{a} z \land z \xrightarrow{b} x$ through the expansion profile mapping $x \xrightarrow{(ab)^*} y$ to $ab$ and $y \xrightarrow{c^*} x$ to $\epsilon$, and another expansion 
$E_2(x,y) = x \xrightarrow{a} z \land z \xrightarrow{b} y \land y \xrightarrow{c} x$ mapping the atoms to $ab$ and $c$ respectively.

\paragraph{Standard and query-injective semantics.}
The standard and $\qni$ semantics can be rephrased as follows. 

\begin{proposition}
\label{prop:semantics-st-qinj}
Let $Q$ be a CRPQ and $G$ be a graph database. 
Then $Q(G)^{st}$ \resp{$Q(G)^{\qni}$} is the set of tuples 
    $\bar{v}$ of nodes for which there is $E\in \Exp{}(Q)$ such that $E\to (G, \bar{v})$ \resp{$E\injto (G, \bar{v})$}. 
\end{proposition}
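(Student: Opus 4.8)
The plan is to prove the two equivalences (standard/query-injective) in parallel, since the arguments are mirror images of each other, with ``homomorphism'' replaced by ``injective homomorphism'' and the path-disjointness bookkeeping added in the $\qni$ case. First I would handle the $\epsilon$-free case, since the general case reduces to it by the union-of-$\epsilon$-free-CRPQs definition given just before the statement: if $Q$ is equivalent to a union $\bigvee_j Q_j$ of $\epsilon$-free CRPQs, then $\Exp{}(Q)=\bigcup_j \Exp{}(Q_j)$ (modulo the $\Phi$-renaming induced by the $x/y$ substitutions, which I would note commutes with taking expansions), and both $Q(G)^{st}$ and $Q(G)^{\qni}$ distribute over the union, so the $\epsilon$-free case suffices.

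So assume $Q(\bar z)=x_1\xrightarrow{L_1}y_1\land\cdots\land x_n\xrightarrow{L_n}y_n$ with no $L_i$ containing $\epsilon$. For the \textbf{standard} case, forward direction: suppose $\bar v\in Q(G)^{st}$, witnessed by $\mu:\vars(Q)\to V$ and paths $\pi_i$ from $\mu(x_i)$ to $\mu(y_i)$ with label $w_i\in L_i$. Form the expansion profile $\varphi$ sending $x_i\xrightarrow{L_i}y_i$ to its $w_i$-expansion; let $\widetilde E$ be the resulting CQ with equality atoms and $E=\widetilde E^{\collapse}$. I would define a homomorphism $h:\widetilde E\to (G,\bar z\mu$-image$)$ by sending each original variable $x$ to $\mu(x)$ and each fresh variable $z$ introduced along the expansion of atom $i$ to the corresponding intermediate vertex of $\pi_i$; one checks this respects every edge atom, and it respects the equality atoms because the $w_i$-expansions contribute none (all $w_i\neq\epsilon$). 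Since $h$ is constant on $=_{\widetilde E}$-classes, it factors through $\Phi$ to give $h':E\to(G,\bar v)$, so $E\in\Exp{}(Q)$ with $E\to(G,\bar v)$. Backward direction: given $E\in\Exp{}(Q)$ via profile $\varphi$ and $h':E\to(G,\bar v)$, set $h=h'\circ\Phi:\widetilde E\to G$, let $\mu$ be $h$ restricted to $\vars(Q)$, and for each $i$ read off from $\varphi(A_i)$ the chain $x_i\to z_1\to\cdots\to y_i$; applying $h$ to this chain yields a path $\pi_i$ in $G$ from $\mu(x_i)$ to $\mu(y_i)$ whose label is exactly the word $w_i\in L_i$ used in $\varphi(A_i)$. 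Hence $\bar v\in Q(G)^{st}$.

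For the \textbf{query-injective} case I would run the same correspondence but track two extra invariants. Forward: if $\bar v\in Q(G)^{\qni}$ then $\mu$ is injective and the $\pi_i$ are simple paths pairwise sharing no internal nodes. I claim the associated $h':E\to(G,\bar v)$ is then injective. This is where the one genuine subtlety lies: distinct variables of $E$ are either both original (images distinct because $\mu$ is injective), both internal to the same atom $i$ (distinct because $\pi_i$ is simple), one original and one internal (distinct because $\pi_i$ simple and its endpoints are $\mu(x_i),\mu(y_i)$), or internal to distinct atoms $i\neq j$ (distinct because $\pi_i,\pi_j$ share no internal nodes and an internal node of $\pi_i$ cannot equal an endpoint of $\pi_j$ since that would be $\mu$ of a variable — contradicting that $\pi_i$ is simple or that $\mu$ is injective once one checks the endpoint cases). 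Conversely, given injective $h':E\injto(G,\bar v)$, reading off $\mu$ and the $\pi_i$ as before, injectivity of $h'$ forces $\mu$ injective, each $\pi_i$ simple, and no shared internal nodes. \textbf{The main obstacle} is exactly this injectivity bookkeeping: after the $\Phi$-collapse, a single variable of $E$ may simultaneously be the ``origin variable'' of one atom and an internal variable of another (e.g. $E_1(x,x)$ in the running example), so the simple-path and disjointness conditions on the $\pi_i$ must be stated carefully in terms of \emph{internal} nodes only, and one must check that the $\qni$ definition's ``$\mu$ injective'' plus ``no shared internal nodes'' clauses line up precisely with ``$h'$ injective'' across all these overlap cases — including degenerate ones like $x_i=y_i$ (where simple cycle vs.\ simple path must be reconciled) and atoms whose expansion word has length one (no internal nodes at all). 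I would organize this as a short case analysis on the types of pairs of variables of $E$, which is the technical heart of the proof; everything else is the routine translation between mappings-plus-paths and homomorphisms-from-expansions sketched above.
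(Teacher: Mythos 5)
Your proposal is correct and follows essentially the same route as the paper's own proof: reduce to the $\epsilon$-free case via the union-of-$\epsilon$-free-CRPQs decomposition, then translate back and forth between the witnessing pair $(\mu,\{\pi_i\})$ and a homomorphism from the corresponding expansion, with injectivity of that homomorphism corresponding to ``$\mu$ injective, each $\pi_i$ simple, no shared internal nodes'' in the $\qni$ case. The only difference is one of emphasis: you spell out the pairwise case analysis needed to verify injectivity, whereas the paper dispatches it in a single sentence (``as the paths do not share internal nodes this is an injective homomorphism''), so your flagged ``main obstacle'' is precisely the step the paper leaves implicit.
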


\paragraph{Atom-injective semantics.}
\label{sec:atom-level}
The atom-injective semantics corresponds to the less restrictive alternative that an arbitrary homomorphism can be allowed as long as it is injective when restricted to the expansions of each atom.
Let $E(\bar{y})$ be an expansion of a CRPQ $Q(\bar{x})$, $\varphi$ be an expansion profile producing $E$, and 
$\widetilde{E}(\bar{x}) = \varphi(A_1)\land \cdots\land\varphi(A_m)$ be the associated CQ with equality atoms (in particular $E=\widetilde{E}^{\collapse}$).
Let $\Phi:\vars(\widetilde{E}) \to \vars(E) $ be the canonical renaming.  
We say that two variables $x,y \in \vars(E)$ are \defstyle{$\varphi$-atom-related} if there is some atom expansion $\varphi(A_i)$ containing some $x',y'$
such that $\Phi(x') = x$ and $\Phi(y')=y$. 

We now define a notion of injective homomorphism tailored to CRPQ expansions. 
We say that $h$ is an \defstyle{atom-injective homomorphism} from an expansion $E$ of a CRPQ $Q$ to a graph database $G$ mapping free variables to $\bar v$ 
if $h:E\to (G,\bar v)$ and there is an expansion profile $\varphi$ producing $E$ such that $h(x)\neq h(y)$ for every pair of distinct $\varphi$-atom-related variables $x$ and $y$. We write $E \ainjto (G,\bar v)$ if such an $h$ exists. 
Atom-injective homomorphisms from $E$ to CQs are defined in the obvious way.

\begin{proposition}
\label{prop:semantics-ainj}
Let $Q$ be a CRPQ and $G$ be a graph database. 
Then $Q(G)^{\ani}$ is the set of tuples 
    $\bar{v}$ of nodes for which there is $E\in \Exp{}(Q)$ such that $E\ainjto (G, \bar{v})$. 
\end{proposition}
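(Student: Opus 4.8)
The plan is to prove both inclusions by translating between the "path-based" definition of $Q(G)^{\ani}$ and the "expansion-based" formulation, using the expansion profile as the bridge. Recall that $\bar v\in Q(G)^{\ani}$ means there is $\mu:\vars(Q)\to V$ with $\mu(\bar z)=\bar v$ and, for each atom $A_i = x_i\xrightarrow{L_i}y_i$, a path $\pi_i$ from $\mu(x_i)$ to $\mu(y_i)$ whose label $w_i\in L_i$ is such that $\pi_i$ is a simple path (if $x_i\neq y_i$) or a simple cycle (if $x_i=y_i$). For the $\epsilon$-handling I would first observe, as in the paragraph defining the semantics, that it suffices to treat the $\epsilon$-free case and then note that both sides of the claimed equality commute with the finite union over the $\epsilon$/non-$\epsilon$ case split; I would dispatch this at the end in one sentence.

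\textbf{From paths to an atom-injective homomorphism.} Given $\mu$ and the paths $\pi_i$ with labels $w_i$, define the expansion profile $\varphi$ that maps each atom $A_i$ to its $w_i$-expansion $x_i\xrightarrow{w_i}y_i$, and let $E=\widetilde E^{\collapse}\in\Exp{}(Q)$ be the resulting expansion with canonical renaming $\Phi:\vars(\widetilde E)\to\vars(E)$. Writing $\pi_i = v^i_0\xrightarrow{a^i_1}v^i_1,\dots,v^i_{k_i-1}\xrightarrow{a^i_{k_i}}v^i_{k_i}$, I define $h$ on $\vars(\widetilde E)$ by sending the internal variable $z^i_j$ of $\varphi(A_i)$ to $v^i_j$ and sending $x_i,y_i$ to $\mu(x_i),\mu(y_i)$; this is well-defined and respects $=_{\widetilde E}$ (i.e. descends to $h:\vars(E)\to V$) because the endpoints agree with $\mu$, which is a function. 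That $h:E\to(G,\bar v)$ is immediate from the fact that each edge of $\varphi(A_i)$ is witnessed by an edge of $\pi_i$ in $G$ and $h(\bar z)=\mu(\bar z)=\bar v$. For atom-injectivity: two distinct $\varphi$-atom-related variables of $E$ are $\Phi$-images of two variables occurring in a single $\varphi(A_i)$, hence (tracing $\Phi$ back) they are images under $h$ of two nodes among $v^i_0,\dots,v^i_{k_i}$; since $\pi_i$ is simple (or a simple cycle), distinct such nodes are distinct, so $h$ separates them — here I must be slightly careful in the simple-cycle case $x_i=y_i$, where $v^i_0=v^i_{k_i}$ are identified in $E$ anyway, so no separation is demanded. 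Thus $E\ainjto(G,\bar v)$.

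\textbf{From an atom-injective homomorphism to paths.} Conversely, suppose $E\in\Exp{}(Q)$ and $h:E\ainjto(G,\bar v)$; fix an expansion profile $\varphi$ producing $E$ witnessing atom-injectivity, say $\varphi(A_i)$ is a $w_i$-expansion. Set $\mu(x):=h(\Phi(x))$ for $x\in\vars(Q)$ (note $\vars(Q)\subseteq\vars(\widetilde E)$, so this makes sense); then $\mu(\bar z)=\bar v$. For each atom $A_i$, reading the chain $\varphi(A_i) = x_i\xrightarrow{a^i_1}z^i_1\land\dots\land z^i_{k_i-1}\xrightarrow{a^i_{k_i}}y_i$ and applying $h\circ\Phi$ to each variable yields a walk $\pi_i$ in $G$ from $\mu(x_i)$ to $\mu(y_i)$ with label $w_i\in L_i$. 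Its internal nodes are $h(\Phi(z^i_j))$, and all the variables $x_i, z^i_1,\dots,z^i_{k_i-1}, y_i$ of $\varphi(A_i)$ are pairwise $\varphi$-atom-related, so $h$ maps their $\Phi$-images to pairwise distinct vertices — except that when $x_i=y_i$ their $\Phi$-images coincide and we get a cycle. Hence $\pi_i$ is a simple path when $x_i\neq y_i$ and a simple cycle when $x_i=y_i$, exactly as required, so $\bar v\in Q(G)^{\ani}$.

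\textbf{Main obstacle.} The bookkeeping around the canonical renaming $\Phi$ and the collapsing is the only place where something can go subtly wrong: one must check that the map built from the $\pi_i$ is constant on each $=_{\widetilde E}$-class (so that it descends to $E$) and, in the reverse direction, that "pairwise $\varphi$-atom-related" is strong enough to force \emph{all} of an atom's expansion variables — including the two endpoints when they are distinct — to be separated by $h$, while gracefully allowing the single forced identification in the simple-cycle case. I expect the simple-cycle/self-loop atoms ($x_i=y_i$) to be the fiddly corner in both directions and would state the argument with that case spelled out explicitly.
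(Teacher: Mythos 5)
Your proposal is correct and follows essentially the same route as the paper, which proves this proposition by adapting the proof of Proposition~\ref{prop:semantics-st-qinj} (reduce to the $\epsilon$-free case via the union decomposition, build the expansion and homomorphism from $\mu$ and the simple paths, and conversely recover $\mu$ and the paths from an atom-injective homomorphism). Your explicit bookkeeping of the canonical renaming $\Phi$ and the simple-cycle corner case is just a more detailed write-up of the same argument.
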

\section{The evaluation problem}
\label{sec:eval}
    
    The decision problem associated to evaluation is the \defstyle{evaluation problem} for a class $\Cc$ of CRPQs and a semantics $\star \in \set{st,\ani,\qni}$.
    \decpb{Evaluation problem for $\Cc$ under $\star$-semantics}
          {A graph database $G$, a query $Q(\bar x) \in \Cc$, and a tuple $\bar v$ of nodes.}
          {Is $\bar v \in Q(G)^\star$?}

    The evaluation problem is \np-complete under injective semantics, as it is the case 
    under standard semantics.

\begin{proposition}\label{prop:eval-np-c}
    The evaluation problem for \CRPQ and \CQ is \np-complete in combined complexity, for all semantics.
\end{proposition}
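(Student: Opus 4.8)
The plan is to prove the NP upper bound for the larger class \CRPQ under all three semantics, and NP-hardness already for the smaller class \CQ under all three semantics; since \CQ${}\subseteq{}$\CRPQ, this settles all four cases of the statement. For the upper bound I would describe a single nondeterministic polynomial-time procedure, parametrised by $\star\in\set{st,\ani,\qni}$, that decides whether $\bar v\in Q(G)^\star$ for $Q(\bar z)=x_1\xrightarrow{L_1}y_1\land\dots\land x_n\xrightarrow{L_n}y_n$ and $G=(V,E)$. The case of languages containing $\epsilon$ reduces, by the very definition of the semantics, to a polynomial number of $\epsilon$-free sub-instances, so I may assume each $L_i$ is $\epsilon$-free. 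The algorithm first guesses a mapping $\mu:\vars(Q)\to V$ with $\mu(\bar z)=\bar v$. For $\star=st$ it verifies, for each atom $i$, that $G$ has a path from $\mu(x_i)$ to $\mu(y_i)$ labelled by a word in $L_i$; this last test is polynomial, by computing the product of an NFA for $L_i$ with $G$ and checking reachability. For $\star\in\set{\ani,\qni}$ the algorithm instead guesses, for each atom $i$, a walk $\pi_i$ in $G$ of at most $|V|$ edges, and checks that $\pi_i$ runs from $\mu(x_i)$ to $\mu(y_i)$, that its label lies in $L_i$ (a polynomial NFA-membership test), and that $\pi_i$ is a simple path, or a simple cycle when $x_i=y_i$; the point is that a simple path or simple cycle visits at most $|V|$ vertices, hence has polynomial size, so it can legitimately be guessed explicitly. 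For $\star=\qni$ it additionally checks that $\mu$ is injective and that no two distinct $\pi_i,\pi_j$ share an internal node. All verifications are polynomial, so the problem is in NP for \CRPQ, and a fortiori for \CQ.

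For NP-hardness I would give one reduction from \textsc{Clique} that works uniformly for all three semantics. Given a simple undirected graph $H=(W,F)$ and an integer $k$ (and assuming $k\le|W|$, otherwise output a fixed no-instance), let $\A=\set{a}$, let $G$ have vertex set $W$ and edges $\set{(u,a,v),(v,a,u): \set{u,v}\in F}$, and let $Q$ be the Boolean \CQ over $\A$ with variables $x_1,\dots,x_k$ and one atom $x_i\xrightarrow{a}x_j$ for every ordered pair of distinct indices $i\neq j$. I claim that $()\in Q(G)^\star$ for some — equivalently, every — $\star\in\set{st,\ani,\qni}$ if and only if $H$ has a clique of size $k$. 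The key observation is that, since $H$ is simple, $G$ has no self-loop; hence any homomorphism $h:Q\to G$ satisfies $h(x_i)\neq h(x_j)$ whenever $i\neq j$ (otherwise $(h(x_i),a,h(x_i))\in E$), so $h$ is automatically injective, its image is a $k$-clique of $H$, and conversely every $k$-clique of $H$ yields such an $h$. It remains to check that injectivity is exactly what the two injective semantics demand here. As every atom of $Q$ carries a single letter, the unique expansion of $Q$ is (up to renaming) $Q$ itself, $\Phi$ is the identity, and every pair $x_i,x_j$ with $i\neq j$ is $\varphi$-atom-related; moreover a single edge $(u,a,v)\in E$ is a simple path because $u\neq v$, and it has no internal node. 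Consequently, for the expansion $E=Q$, both $E\ainjto(G,())$ and $E\injto(G,())$ amount to ``there is a homomorphism $Q\to G$'', which by the observation above is equivalent to $H$ having a $k$-clique. By Propositions~\ref{prop:semantics-st-qinj} and~\ref{prop:semantics-ainj} this means $Q(G)^\star\neq\emptyset$ iff $H$ has a $k$-clique, for each $\star\in\set{st,\ani,\qni}$. The reduction is clearly polynomial and puts both query and database in the input, so it establishes NP-hardness in combined complexity for \CQ, hence for \CRPQ, under all semantics.

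The argument is largely routine; the only place that requires genuine care is the NP membership for the injective semantics, where one must observe that simple paths and simple cycles have linear length — so that witnessing paths can be guessed explicitly in polynomial time — together with the fact that membership of a word in a regular language given by an expression or NFA is decidable in polynomial time. Without these observations the naive ``guess an expansion and a homomorphism'' strategy fails, since expansions under standard semantics may be exponentially long; this is precisely why the injective semantics need the bounded-length-path refinement rather than an expansion-based certificate.
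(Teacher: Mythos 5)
Your proof is correct. The upper bound is essentially the paper's argument: the paper phrases the polynomial certificate as a linear-size expansion $E$ with $E\injto(G,\bar v)$ or $E\ainjto(G,\bar v)$, while you phrase it as the mapping $\mu$ together with explicitly guessed simple paths of length at most $|V|$; these are the same witness seen from two sides, and both hinge on the observation that simple paths/cycles have linearly bounded length. One small slip: a CRPQ with $n$ atoms whose languages contain $\epsilon$ unfolds into up to $2^n$ $\epsilon$-free sub-queries, not polynomially many; this is harmless for \np\ membership since the machine can guess which atoms to contract, but you should not claim the number of sub-instances is polynomial. Your lower bound takes a genuinely different (and arguably cleaner) route. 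The paper reduces from injective-homomorphism testing and needs two separate devices: the qni semantics directly enforces injectivity, while for \ani\ it must pass to $Q^+$ and $G^+$ (adding an $R$-atom/edge between every pair of variables/vertices) because atom-injectivity only separates atom-related variables; standard-semantics hardness is left to the classical Chandra--Merlin result. Your single \textsc{Clique} reduction handles all three semantics uniformly: the clique query already relates every pair of variables by an atom, so no completion trick is needed for \ani, and the absence of self-loops in $G$ forces every homomorphism to be injective, which collapses the three semantics on these instances and gives standard-semantics hardness for free. What the paper's approach buys is a one-line reduction from a known problem; what yours buys is uniformity across the semantics at the cost of re-verifying, as you carefully do, that single-letter atoms make the injective evaluations coincide with plain homomorphism existence.
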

\begin{proof}
    The lower bound follows by an easy reduction from the injective-homomorphism testing problem, also known as the subgraph isomorphism problem, which is a well-known \np-complete problem. \cite{DBLP:conf/stoc/Cook71,DBLP:books/fm/GareyJ79}. 
    Indeed, a Boolean CQ $Q$ maps injectively to $G$ if{f} $Q(G)^{\qni} \neq \emptyset$ if{f} $Q^+(G^+)^{\ani}\neq\emptyset$, where $G^+$ \resp{$Q^+$} is the result of adding, for a fresh symbol $R$, an $R$-edge between every pair of vertices \resp{an $R$-atom between every pair of variables}.
    
    The upper bound is a consequence of the polynomial-sized witness property. That is, if $Q \in \CRPQ$, and $\bar{v} \in Q(G)^\star$, then there exists an expansion $E$ of $Q$ such that 
    $E\injto (G,\bar{v})$ if $\star=\qni$ and $E\ainjto (G,\bar{v})$ if $\star=\ani$. In either case, $E$ is linear in $G$ and $Q$. 
    One can then guess such an expansion and check the existence of the corresponding homomorphism.
\end{proof}

The data complexity for the alternative semantics (\ie, when the query is considered to be of constant size) is also \np-complete, since evaluation of RPQs under simple path semantics is \np-complete, even for very simple regular expressions~\cite{DBLP:journals/siamcomp/MendelzonW95}:
\begin{proposition}
    \label{prop:eval-np-c-datacomplexity}
    The evaluation problem for \CRPQ is \np-complete in data complexity, for atom-injective and query-injective semantics.
\end{proposition}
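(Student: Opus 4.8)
The plan is to establish membership in \np and \np-hardness separately; both parts are short, and the proposition follows as a corollary of Proposition~\ref{prop:eval-np-c} together with the known hardness of simple-path RPQ evaluation.

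For the upper bound, it suffices to observe that data complexity is a restriction of combined complexity: fixing the query $Q$ turns the question ``is $\bar v\in Q(G)^\star$?'' into a restriction of the problem already shown to be in \np in Proposition~\ref{prop:eval-np-c}, hence in \np for every fixed query and every semantics. Spelling this out through the polynomial-witness property used there: $\bar v\in Q(G)^{\ani}$ (resp.\ $Q(G)^{\qni}$) holds iff there is an expansion $E\in\Exp{}(Q)$ with $E\ainjto(G,\bar v)$ (resp.\ $E\injto(G,\bar v)$), and the witnessing expansion can be taken to use at each atom a word of length at most $|V|-1$, since a simple path visits at most $|V|$ vertices; for fixed $Q$ such an $E$ has size linear in $|G|$. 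One thus guesses $E$ together with the homomorphism to $G$ and checks in polynomial time that it is indeed a homomorphism meeting the atom- or query-injectivity requirement.

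For the lower bound, I would reduce from the simple-path RPQ evaluation problem, which is \np-hard already for some \emph{fixed} regular language $L_0$ over a fixed alphabet by Mendelzon and Wood~\cite{DBLP:journals/siamcomp/MendelzonW95}: given a graph database $G$ and nodes $s,t$ (which we may assume distinct, since otherwise a simple path from $s$ to $t$ is forced to be empty and the problem is trivial), it is \np-hard to decide whether $G$ has a simple path from $s$ to $t$ whose label lies in $L_0$. Take the fixed single-atom CRPQ $Q(x,y)=x\xrightarrow{L_0}y$. Because $Q$ has exactly one atom and $x,y$ are distinct variables, the $\ani$-semantics requires precisely that the unique path be simple, while the $\qni$-semantics requires in addition only that $\mu$ be injective, that is, $\mu(x)\neq\mu(y)$ (the ``no shared internal node'' clause being vacuous for a single atom). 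Hence, for $s\neq t$, we obtain $(s,t)\in Q(G)^{\ani}$ iff $(s,t)\in Q(G)^{\qni}$ iff $G$ admits a simple path from $s$ to $t$ labeled in $L_0$ --- exactly the Mendelzon--Wood problem. This gives \np-hardness of the evaluation problem for \CRPQ in data complexity under both injective semantics.

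No genuine obstacle arises; the only subtlety is to make sure the hard instances have distinct endpoints $s\neq t$ so that the injectivity constraints remain satisfiable, which already holds in the standard construction and can otherwise be forced by a trivial gadget (e.g.\ prepending a fresh edge to $s$ and adjusting $L_0$ accordingly).
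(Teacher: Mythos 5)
Your proposal is correct and follows essentially the same route as the paper: the upper bound is inherited from the combined-complexity \np{} membership of Proposition~\ref{prop:eval-np-c} (via the linear-size expansion witness), and the lower bound is the observation that a single fixed atom $x\xrightarrow{L_0}y$ makes both injective semantics coincide with simple-path RPQ evaluation, which is \np-hard for a fixed $L_0$ by Mendelzon and Wood. The paper states this only as a one-line remark, so your write-up is simply a more detailed version of the intended argument.
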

The RPQs which can be evaluated efficiently in data complexity have been characterized via a trichotomy result: they can be either \np-complete, \nl-complete, or in AC${}^0$ \cite[Theorem~2]{DBLP:journals/jcss/BaganBG20}. The generalization of this result to CRPQs under injective semantics seems highly non-trivial, and in particular it would necessitate a comprehensive understanding of the query equivalence problem, which is the focus of the next sections.

\section{The containment problem}
\label{sec:containment}

A CRPQ $Q_1$ is \emph{contained} in a CRPQ $Q_2$ under $\star$-semantics, denoted by $Q_1\subseteq_{\star} Q_2$, if $Q_1(G)^\star\subseteq Q_2(G)^\star$ for every graph database $G$. 
We define the \defstyle{containment problem}, 
which is parameterized by classes $\Cc_1$ and $\Cc_2$ of CRPQs as well as the semantics used (standard, query-injective, or atom-injective). 

\decpb{$\Cc_1$/$\Cc_2$ containment problem under $\star$-semantics}
	{CRPQs $Q_1\in \Cc_1$ and $Q_2\in \Cc_2$.}
	{Does $Q_1\subseteq_\star Q_2$ hold?}

Under standard semantics, all combinations among \CQ, \CRPQ and \CRPQfin have been studied and are decidable. In particular:
\begin{theorem}
\label{thm:cont-st} \cite{CGLV00,Florescu:CRPQ}
\label{thm:crpq-expspace-standard}
The \CRPQ/\CRPQ  containment problem under standard semantics is \expspace-complete. 
\end{theorem}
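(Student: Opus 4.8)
The plan is to prove matching \expspace upper and lower bounds.

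\medskip
\noindent\textbf{Upper bound.} I would first reduce to Boolean queries in the usual way: turn the free variables into constants by adding, for a fresh label, a distinguished atom/edge marking each free position, so that any homomorphism must respect the output tuple. By Proposition~\ref{prop:semantics-st-qinj}, $Q_1\subseteq_{st}Q_2$ holds iff for every expansion $E\in\Exp{}(Q_1)$, viewed as its own canonical database, we have $Q_2\to E$; equivalently, iff for every $E_1\in\Exp{}(Q_1)$ there is $E_2\in\Exp{}(Q_2)$ with $E_2\to E_1$. Since $\Exp{}(Q_1)$ is infinite, the heart of the argument is to recast this infinitary condition as an automata-theoretic one. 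The underlying graph shape of any expansion of $Q_1$ is essentially that of $Q_1$ itself (up to the finitely many $\epsilon$-induced variable collapses, over which one branches), so an expansion is determined by a tuple $\bar w=(w_1,\dots,w_m)$ of words with $w_i\in L_i$, one per atom. One then builds a device $\mathcal A$ reading a serialization of $\bar w$ that accepts exactly those $\bar w$ for which $Q_2$ maps into the corresponding expansion: it guesses the images of the variables of $Q_2$ (each landing on an original variable, or at a position inside some $w_i$) and, for each atom $u\xrightarrow{M}v$ of $Q_2$, runs the NFA for $M$ along the guessed path while tracking reachable states. Because the expansion graph is cyclic (it has the shape of $Q_1$) and a single atom of $Q_2$ may wind through several atoms of $Q_1$, keeping $\mathcal A$ finite-state requires a subset construction over the NFAs of $Q_2$ together with bookkeeping of the $Q_1$-atom currently traversed by each ``thread'', which is where an exponential blow-up in $|Q_1|+|Q_2|$ appears. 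Containment then becomes a language-containment instance $\prod_i L_i\subseteq L(\mathcal A)$, with $\prod_i L_i$ described by a polynomial-size NFA and $L(\mathcal A)$ by an exponential-size automaton; since NFA containment is in \pspace in the sizes of the automata, this is decidable in \expspace.

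The technically delicate part of the upper bound is the correctness and finiteness of $\mathcal A$: one must show that a homomorphism $Q_2\to E$ can always be ``streamed'' consistently while $\bar w$ is read in a single fixed order, despite the cyclicity of the shape of $Q_1$. This is why the original arguments (\cite{CGLV00,Florescu:CRPQ}) proceed via two-way automata or a carefully chosen scanning order, and why a naive one-pass guess does not obviously suffice.

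\medskip
\noindent\textbf{Lower bound.} I would reduce from a canonical \expspace-complete problem, say acceptance of a Turing machine running in space $2^n$ (equivalently, tiling of a $2^n$-wide corridor). A computation is encoded as a word read along a path in the graph database: a sequence of configurations, each of length $2^n$, with every cell annotated by its $n$-bit binary address. The query $Q_1$ is set up so that it holds on $G$ roughly when $G$ carries a \emph{syntactically well-formed} such encoding reaching an accepting configuration, while $Q_2$ holds when the encoding exhibits a \emph{local fault}: two consecutive configurations that disagree at cells of equal address in a way forbidden by the transition relation. Detecting a fault requires comparing positions $2^n$ apart, which is done by a conjunction of RPQ atoms that walk the binary counters and test the addresses bit by bit — conjunction is essential, as a single RPQ cannot compare exponentially distant positions. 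With the gadgets arranged correctly, $Q_1\subseteq_{st}Q_2$ holds iff every well-formed accepting encoding contains a fault iff the machine does not accept. I expect the main obstacle here to be robustness of the polarity rather than the counter gadgets themselves: one must guarantee that whenever $G$ satisfies $Q_1$ for a spurious reason (a pattern accidentally present in a malformed $G$), $Q_2$ also fires — typically handled by making $Q_1$ rigid enough, or by folding the alternatives into one CRPQ, since the class is not closed under union.
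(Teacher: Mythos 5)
Your proposal is correct in outline and follows essentially the same route as the proofs this theorem rests on: the paper does not prove Theorem~\ref{thm:crpq-expspace-standard} itself but cites \cite{CGLV00,Florescu:CRPQ}, and, as the authors note at the start of Section~\ref{sec:unrestricted}, all known proofs of the upper bound reduce containment to a containment/universality test on exponentially-sized NFAs encoding the set of expansions (or of non-counter-examples) of $Q_1$ -- which is exactly your automaton $\mathcal{A}$ built on top of the expansion-based characterization of Proposition~\ref{prop:cont-char-exp-st}. The \expspace lower bound via exponential-width corridor tilings (equivalently, $2^n$-space Turing machines) with $n$-bit cell addresses compared by conjunctions of counter-walking atoms is likewise the standard argument of \cite{CGLV00}, so there is nothing to flag beyond the streaming/scanning-order subtlety you already acknowledge.
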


We will dedicate the rest of the paper to study the situation for injective semantics. We show that one injective semantics becomes undecidable while the other becomes better behaved computationally under standard complexity theoretic assumptions (\cf~the \CRPQ/\CRPQ column of Figure~\ref{fig:summary}).

\subsection{Characterizing containment}
\label{subsec:characterize-cont}

For the standard semantics, it is well-known that containment of CRPQs can be characterized in terms of expansions:

\begin{proposition}
\label{prop:cont-char-exp-st} \cite{CGLV00}
Let $Q_1$ and $Q_2$ be CRPQs. Then $Q_1\subseteq_{st} Q_2$ if{f} for every $E_1\in \Exp{}(Q_1)$ there is $E_2\in \Exp{}(Q_2)$ 
such that $E_2\to E_1$. 
 \end{proposition}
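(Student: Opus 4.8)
The plan is to derive this purely from the expansion characterization of standard evaluation, Proposition~\ref{prop:semantics-st-qinj}, combined with the classical ``canonical database'' trick: any expansion $E$, being a CQ, can be read as a graph database $G_E$ with vertex set $\vars(E)$ and one edge per atom, and with distinguished tuple the tuple of free variables of $E$. Both directions of the equivalence are then one-line arguments.

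For the ($\Leftarrow$) direction I would argue by composition. Assume that for every $E_1\in\Exp{}(Q_1)$ there is $E_2\in\Exp{}(Q_2)$ with $E_2\to E_1$. Fix a graph database $G$ and a tuple $\bar v\in Q_1(G)^{st}$. By Proposition~\ref{prop:semantics-st-qinj} there are $E_1\in\Exp{}(Q_1)$ and a homomorphism $g:E_1\to(G,\bar v)$; by hypothesis there are $E_2\in\Exp{}(Q_2)$ and a homomorphism $f:E_2\to E_1$, which by definition of homomorphism between CQs maps the free variables of $E_2$ onto those of $E_1$. Then $g\circ f$ is a homomorphism $E_2\to(G,\bar v)$, so Proposition~\ref{prop:semantics-st-qinj} gives $\bar v\in Q_2(G)^{st}$. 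Hence $Q_1\subseteq_{st}Q_2$.

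For the ($\Rightarrow$) direction I would instantiate the containment on the canonical database of an expansion. Assume $Q_1\subseteq_{st}Q_2$ and fix $E_1\in\Exp{}(Q_1)$ with free-variable tuple $\bar y_1$. Reading $E_1$ as the graph database $G_{E_1}$ described above, the identity map witnesses $E_1\to(G_{E_1},\bar y_1)$, so by Proposition~\ref{prop:semantics-st-qinj} we get $\bar y_1\in Q_1(G_{E_1})^{st}$, hence $\bar y_1\in Q_2(G_{E_1})^{st}$ by containment. Applying Proposition~\ref{prop:semantics-st-qinj} once more produces $E_2\in\Exp{}(Q_2)$ and a homomorphism $E_2\to(G_{E_1},\bar y_1)$; since the vertices of $G_{E_1}$ are exactly $\vars(E_1)$ and $\bar y_1$ its free-variable tuple, this is precisely a CQ-homomorphism $E_2\to E_1$, as required.

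I do not expect a genuine obstacle: the proof is two applications of Proposition~\ref{prop:semantics-st-qinj} plus composition. The only points needing care are bookkeeping — that the canonical-database reading of $E_1$ faithfully matches free variables to free variables even when $\bar y_1$ has repeated entries, or when an $\epsilon$-expansion has collapsed variables (possibly leaving a free variable isolated, which $G_{E_1}$ keeps as an isolated vertex and which homomorphisms handle without issue) — and the observation that Proposition~\ref{prop:semantics-st-qinj} is already stated for CRPQs whose languages may contain $\epsilon$, so the ``union of $\epsilon$-free queries'' convention is transparently absorbed into $\Exp{}(\cdot)$ and requires no separate treatment.
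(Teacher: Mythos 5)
Your proof is correct and follows essentially the same route as the paper: the paper proves the query-injective analogue (Proposition~\ref{prop:cont-char-qinj}) by exactly this two-step argument --- instantiating the containment on the canonical database of an expansion $E_1$ for the forward direction, and composing homomorphisms via the expansion characterization of evaluation for the converse --- and explicitly notes that the standard-semantics proof is identical with $\injto$ replaced by $\to$. Your bookkeeping remarks about repeated free variables and $\epsilon$-collapses are fine and do not change the argument.
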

 
A similar characterization holds for query-injective semantics:
 
\begin{proposition}
\label{prop:cont-char-qinj} 
Let $Q_1$ and $Q_2$ be CRPQs. Then $Q_1\subseteq_{\qni} Q_2$ if{f} for every $E_1\in \Exp{}(Q_1)$ there is $E_2\in \Exp{}(Q_2)$ 
such that $E_2\injto E_1$. 
\end{proposition}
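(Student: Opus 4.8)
The plan is to mirror the proof of Proposition~\ref{prop:cont-char-exp-st} (the standard-semantics characterization), replacing ordinary homomorphisms by injective ones throughout. The reduction of $\qni$-evaluation to the existence of an injective homomorphism from some expansion, i.e.\ Proposition~\ref{prop:semantics-st-qinj}, does essentially all the work. The two elementary facts that make the argument go through are: \textbf{(a)} the composition of two injective homomorphisms is again an injective homomorphism (which holds because an injective homomorphism, by definition, sends distinct variables to distinct elements), and \textbf{(b)} for any CRPQ $Q$ and any $E\in\Exp{}(Q)$, the identity map is an injective homomorphism from $E$ into the graph database it denotes, sending the free-variable tuple of $E$ to itself.

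For the ``if'' direction, I would fix a graph database $G$ and a tuple $\bar v\in Q_1(G)^{\qni}$. By Proposition~\ref{prop:semantics-st-qinj} there is $E_1\in\Exp{}(Q_1)$ and an injective homomorphism $g:E_1\injto(G,\bar v)$. By assumption there is $E_2\in\Exp{}(Q_2)$ together with an injective homomorphism $f:E_2\injto E_1$; since $f$ maps the free-variable tuple of $E_2$ to that of $E_1$, fact~(a) gives that $g\circ f$ is an injective homomorphism $E_2\injto(G,\bar v)$. A second application of Proposition~\ref{prop:semantics-st-qinj} yields $\bar v\in Q_2(G)^{\qni}$, hence $Q_1\subseteq_{\qni}Q_2$.

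For the ``only if'' direction, assume $Q_1\subseteq_{\qni}Q_2$ and fix an arbitrary $E_1\in\Exp{}(Q_1)$. I would use $E_1$ itself, viewed as a graph database $G_1$ together with its free-variable tuple $\bar x_1$, as the critical instance. By fact~(b) the identity witnesses $E_1\injto(G_1,\bar x_1)$, so Proposition~\ref{prop:semantics-st-qinj} gives $\bar x_1\in Q_1(G_1)^{\qni}$; the containment assumption then gives $\bar x_1\in Q_2(G_1)^{\qni}$, and a final application of Proposition~\ref{prop:semantics-st-qinj} provides some $E_2\in\Exp{}(Q_2)$ with $E_2\injto(G_1,\bar x_1)$, which is exactly $E_2\injto E_1$.

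I do not anticipate a genuine obstacle: the statement is the injective analogue of Proposition~\ref{prop:cont-char-exp-st}, and the only points needing a little care are the bookkeeping around free variables that may occur with repetitions in the tuples $\bar x_i$ (the notion of injective homomorphism only constrains the images of distinct \emph{variables}, so repeated free variables cause no problem), and the treatment of $\epsilon$, which is already absorbed into the definitions of $\Exp{}(\cdot)$ and of $Q(G)^{\qni}$, so no separate $\epsilon$-case analysis is required.
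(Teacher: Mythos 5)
Your proposal is correct and follows essentially the same route as the paper's own proof: both directions reduce to Proposition~\ref{prop:semantics-st-qinj}, using the expansion $E_1$ viewed as a graph database (with the identity as the witnessing injective homomorphism) for the ``only if'' direction, and closure of injective homomorphisms under composition for the ``if'' direction. Your side remarks on repeated free variables and on $\epsilon$ being absorbed into $\Exp{}(\cdot)$ are accurate and require no further elaboration.
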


As it turns out, the previous characterization does not work for atom-injective semantics (replacing $\injto$ by $\ainjto$). 
In this case, the space of expansions of $Q_1$ is not enough and we need to check $Q_2$ against a \emph{larger} space of expansions of $Q_1$ we define below.

An \defstyle{atom-injective-expansion} ($\ani$-expansion for short)
of a CRPQ $Q(\bar{x})$ is a CQ $F(\bar{y})$ for which there is a CQ with equality atoms $\widetilde{F}(\bar{z})=E(\bar{z})\land J$ such that (i) $F=\widetilde{F}^{\collapse}$, 
(ii) $E(\bar{z})$ is an expansion of $Q$ produced by some expansion profile $\varphi$, 
and (iii) $J$ is a conjunction of equality atoms $x'=y'$, for variables $x', y'\in \vars(E)$, 
such that for every pair of distinct $\varphi$-atom-related variables $x,y$ in $E$, we have $x\neq_{\widetilde{F}} y$.
We denote by $\Exp{\ani}(Q)$ 
the set of all $\ani$-expansions of $Q$. 
The intuition is that these types of expansions are obtained from an ordinary expansion of $Q$ by identifying some pairs of variables which are not atom-related (the identifications are given by $J$).
We use this for the following useful result:

\begin{lemma}
\label{lemma:ainj-expansions}
Let $Q$ be a CRPQ, $E'$ be a CQ, $G$ be a graph database, and $\bar{v}$ be a tuple of nodes. The following are equivalent:
\begin{enumerate}
\item There is $E\in \Exp{}(Q)$ s.t.\ $E\ainjto (G, \bar{v})$ \resp{$E\ainjto E'$}.
\item There is $F\in \Exp{\ani}(Q)$ s.t.\ $F\injto (G, \bar{v})$ \resp{$F\injto E'$}.
\end{enumerate}
\end{lemma}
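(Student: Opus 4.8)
The plan is to prove the two implications by an explicit translation between atom-injective homomorphisms from ordinary expansions and genuine injective homomorphisms from $\ani$-expansions. The guiding observation is that an atom-injective homomorphism is allowed to collapse exactly those pairs of variables that are \emph{not} $\varphi$-atom-related, which is precisely the freedom encoded by the extra equality atoms $J$ in the definition of an $\ani$-expansion; so passing between the two views amounts to moving the collapsing from ``inside the homomorphism'' to ``inside the expansion'' and back. I will do the argument for $(G,\bar v)$; the \resp{$E'$} version is identical, viewing $E'$ as a graph database together with its tuple of free variables and using that atom-injective homomorphisms into CQs are defined exactly as into graph databases.

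For $(1)\Rightarrow(2)$: start from $E\in\Exp{}(Q)$, an expansion profile $\varphi$ producing $E$, and an atom-injective homomorphism $h:E\to(G,\bar v)$ witnessing atom-injectivity w.r.t.\ $\varphi$. Let $J$ be the conjunction of all equality atoms $x=y$ with $x,y\in\vars(E)$ and $h(x)=h(y)$, and set $\widetilde F=E\land J$, $F=\widetilde F^{\collapse}$. First I would check that $=_{\widetilde F}$ coincides, on $\vars(E)$, with the relation ``$h(x)=h(y)$'': it is contained in it because $J$ only relates $h$-equal variables and that relation is already an equivalence, and it contains $J$ by construction. Since $h$ is atom-injective w.r.t.\ $\varphi$, any two distinct $\varphi$-atom-related variables of $E$ have distinct $h$-images, hence are not $=_{\widetilde F}$-related; this is exactly condition (iii), so $F\in\Exp{\ani}(Q)$. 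Finally $h$ is constant on each $=_{\widetilde F}$-class, so it factors through the canonical renaming $\Phi:\vars(\widetilde F)\to\vars(F)$ as $h=\bar h\circ\Phi$ for a unique $\bar h:\vars(F)\to V$; one then checks that $\bar h$ respects the atoms of $F$ (each atom of $F$ is $\Phi(x)\xrightarrow{a}\Phi(y)$ for an atom $x\xrightarrow{a}y$ of $E$, which $h$ respects), sends $F$'s free tuple to $\bar v$, and is injective (if $\bar h(C)=\bar h(C')$, representatives $x\in C$, $x'\in C'$ satisfy $h(x)=h(x')$, hence $x=_{\widetilde F}x'$, hence $C=C'$). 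Thus $F\injto(G,\bar v)$.

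For $(2)\Rightarrow(1)$: start from $F\in\Exp{\ani}(Q)$, so $F=\widetilde F^{\collapse}$ with $\widetilde F=E\land J$, $E\in\Exp{}(Q)$ produced by some $\varphi$, and $J$ (a conjunction of equalities on $\vars(E)$) satisfying condition (iii); let $g:F\injto(G,\bar v)$ and let $\Phi:\vars(\widetilde F)\to\vars(F)$ be the canonical renaming. Put $h:=g\circ\Phi$ (restricted to $\vars(E)=\vars(\widetilde F)$). Since the non-equality atoms of $\widetilde F$ are exactly those of $E$ and $\Phi$ maps each atom $x\xrightarrow{a}y$ of $E$ to an atom of $F$, the map $h$ is a homomorphism $E\to(G,\bar v)$ sending $E$'s free tuple $\bar z$ to $g(\Phi(\bar z))=\bar v$. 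It remains to see $h$ is atom-injective w.r.t.\ $\varphi$: for distinct $\varphi$-atom-related $x,y$ in $E$, condition (iii) gives $x\neq_{\widetilde F}y$, i.e.\ $\Phi(x)\neq\Phi(y)$, so injectivity of $g$ yields $h(x)=g(\Phi(x))\neq g(\Phi(y))=h(y)$. Hence $E\ainjto(G,\bar v)$.

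The only real care needed, and hence the part I expect to be most error-prone rather than conceptually hard, is the bookkeeping around the two nested collapsing steps: $E$ is itself $\widetilde E^{\collapse}$ for some CQ-with-equality $\widetilde E$, and then $\widetilde F=E\land J$ is collapsed again, so one must keep the various canonical renamings straight and remember that the ``$\varphi$-atom-related'' relation lives on $\vars(E)$, i.e.\ after the first collapse. The one genuine conceptual point is the identification of $=_{\widetilde F}$ with the fibers of $h$ in the forward direction, which is what makes condition (iii) the exact counterpart of atom-injectivity; everything else is a direct unfolding of the definitions of $\Exp{\ani}(Q)$ and of atom-injective homomorphism.
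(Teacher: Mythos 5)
Your proof is correct and follows essentially the same route as the paper's: in the forward direction you build $J$ from the fibers of $h$ and factor $h$ through the canonical renaming (the paper's $g(x)=h(x')$ for any representative $x'$), and in the backward direction you compose $g$ with $\Phi$, using condition (iii) to recover atom-injectivity. The extra bookkeeping you flag (identifying $=_{\widetilde F}$ with the fibers of $h$) is exactly the point the paper leaves implicit, so nothing is missing.
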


As a corollary of Lemma~\ref{lemma:ainj-expansions} we obtain an alternative definition of atom-injective semantics:

\begin{corollary}
\label{coro:a-inj-sem}
Let $Q$ be a CRPQ, $G$ be a graph database and $\bar{v}$ be a tuple of nodes. Then $\bar{v}\in Q(G)^{\ani}$ if and only if 
there is $F\in \Exp{\ani}(Q)$ such that $F\injto (G, \bar{v})$. 
\end{corollary}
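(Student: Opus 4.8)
The statement is an immediate consequence of chaining two results already established: Proposition~\ref{prop:semantics-ainj}, which characterizes $Q(G)^{\ani}$ via ordinary expansions together with \emph{atom-injective} homomorphisms, and Lemma~\ref{lemma:ainj-expansions}, which trades atom-injective homomorphisms from ordinary expansions for plain injective homomorphisms from $\ani$-expansions. So the plan is simply to instantiate the first equivalence in Lemma~\ref{lemma:ainj-expansions} (the one phrased with $(G,\bar v)$, rather than the parenthetical variant with a CQ $E'$) and substitute it into Proposition~\ref{prop:semantics-ainj}.

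Concretely, I would argue as follows. By Proposition~\ref{prop:semantics-ainj}, $\bar v \in Q(G)^{\ani}$ holds if and only if there is some $E \in \Exp{}(Q)$ with $E \ainjto (G,\bar v)$. By the equivalence $(1)\Leftrightarrow(2)$ of Lemma~\ref{lemma:ainj-expansions}, applied with this $G$ and $\bar v$, the latter holds if and only if there is some $F \in \Exp{\ani}(Q)$ with $F \injto (G,\bar v)$. Composing the two biconditionals yields exactly the claim: $\bar v \in Q(G)^{\ani}$ iff there is $F \in \Exp{\ani}(Q)$ with $F \injto (G,\bar v)$.

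There is no real obstacle here; the corollary is essentially a repackaging, and all the technical work sits inside Lemma~\ref{lemma:ainj-expansions} (whose proof builds an $\ani$-expansion $F$ from $E$ by adding to $J$ exactly the equalities induced by the kernel of an atom-injective $h$ — which by definition never identifies two distinct $\varphi$-atom-related variables — and conversely collapses an injective homomorphism from $F$ back to an atom-injective one from $E$). The only point worth stating carefully in the write-up is that we use the first (database-valued) alternative of both clauses of the lemma, not the CQ-valued one; once that is fixed, the proof is a two-step substitution and can be given in a couple of lines.
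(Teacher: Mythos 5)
Your proposal is correct and matches the paper's route exactly: the paper states this result as an immediate corollary of Lemma~\ref{lemma:ainj-expansions} combined with the characterization of $Q(G)^{\ani}$ in Proposition~\ref{prop:semantics-ainj}, which is precisely the two-step substitution you describe. Nothing is missing.
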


We now give our characterization of atom-injective containment:

\begin{proposition}
\label{prop:cont-char-ainj} 
For every pair $Q_1$, $Q_2$ of CRPQs, the following are equivalent:
\begin{enumerate}
	\item \label{it:prop:cont-char-ainj:1} $Q_1\subseteq_{\ani} Q_2$.
	\item \label{it:prop:cont-char-ainj:2} For every $F_1\in \Exp{\ani}(Q_1)$ there is $E_2\in \Exp{}(Q_2)$ 
	such that $E_2\ainjto F_1$.
	\item For every $F_1\in \Exp{\ani}(Q_1)$ there is $F_2\in \Exp{\ani}(Q_2)$ 
	such that $F_2\injto F_1$.
\end{enumerate}
\end{proposition}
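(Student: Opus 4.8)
The plan is to prove the three equivalences by establishing $(\ref{it:prop:cont-char-ainj:1}) \Rightarrow (\ref{it:prop:cont-char-ainj:2}) \Rightarrow (3) \Rightarrow (\ref{it:prop:cont-char-ainj:1})$, relying throughout on Lemma~\ref{lemma:ainj-expansions} and Corollary~\ref{coro:a-inj-sem} to translate between atom-injective homomorphisms from ordinary expansions and plain injective homomorphisms from $\ani$-expansions.

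\medskip
\noindent\textbf{From $(\ref{it:prop:cont-char-ainj:1})$ to $(\ref{it:prop:cont-char-ainj:2})$.} First I would fix an arbitrary $F_1 \in \Exp{\ani}(Q_1)$ and view it as a graph database $G_{F_1}$ (each atom an edge) together with the tuple $\bar{y}_1$ of its free variables. The key observation is that $F_1$, being an $\ani$-expansion of $Q_1$, satisfies $\bar{y}_1 \in Q_1(G_{F_1})^{\ani}$: indeed, by Corollary~\ref{coro:a-inj-sem} this amounts to exhibiting some $F \in \Exp{\ani}(Q_1)$ with $F \injto (G_{F_1}, \bar{y}_1)$, and $F_1$ itself works via the identity map (which is trivially injective). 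By hypothesis $(\ref{it:prop:cont-char-ainj:1})$, $\bar{y}_1 \in Q_2(G_{F_1})^{\ani}$, so by Proposition~\ref{prop:semantics-ainj} there is $E_2 \in \Exp{}(Q_2)$ with $E_2 \ainjto (G_{F_1}, \bar{y}_1)$, i.e.\ $E_2 \ainjto F_1$, which is exactly $(\ref{it:prop:cont-char-ainj:2})$. The one point requiring a little care is checking that the identity map on $\vars(F_1)$ really is an injective homomorphism $F_1 \injto (G_{F_1}, \bar{y}_1)$ as CQ-to-CQ maps, i.e.\ that distinct variables of $F_1$ are distinct nodes of $G_{F_1}$ and free variables go to free variables — both immediate from the definitions.

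\medskip
\noindent\textbf{From $(\ref{it:prop:cont-char-ainj:2})$ to $(3)$.} This is a direct application of Lemma~\ref{lemma:ainj-expansions} with $Q := Q_2$ and $E' := F_1$: the lemma states that ``there is $E \in \Exp{}(Q_2)$ with $E \ainjto F_1$'' is equivalent to ``there is $F \in \Exp{\ani}(Q_2)$ with $F \injto F_1$''. Since $(\ref{it:prop:cont-char-ainj:2})$ gives the former for every $F_1 \in \Exp{\ani}(Q_1)$, we obtain the latter for every such $F_1$, which is $(3)$. (The reverse direction, $(3) \Rightarrow (\ref{it:prop:cont-char-ainj:2})$, is the same lemma read the other way, so in fact $(\ref{it:prop:cont-char-ainj:2}) \Leftrightarrow (3)$ for free.)

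\medskip
\noindent\textbf{From $(3)$ to $(\ref{it:prop:cont-char-ainj:1})$.} Take any graph database $G$ and any $\bar{v} \in Q_1(G)^{\ani}$; I must show $\bar{v} \in Q_2(G)^{\ani}$. By Corollary~\ref{coro:a-inj-sem}, there is $F_1 \in \Exp{\ani}(Q_1)$ with an injective homomorphism $g : F_1 \injto (G, \bar{v})$. Apply $(3)$ to this $F_1$: there is $F_2 \in \Exp{\ani}(Q_2)$ with an injective homomorphism $f : F_2 \injto F_1$. Composing, $g \circ f : F_2 \to (G, \bar{v})$; since the composition of injective homomorphisms is injective, $F_2 \injto (G, \bar{v})$. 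By Corollary~\ref{coro:a-inj-sem} applied to $Q_2$, this yields $\bar{v} \in Q_2(G)^{\ani}$, closing the cycle. The mild subtlety here is that $f$ is a CQ-to-CQ homomorphism (mapping free variables of $F_2$ to free variables of $F_1$) while $g$ maps free variables of $F_1$ to the tuple $\bar{v}$, so the composite sends the free variables of $F_2$ to $\bar{v}$ as required; this is routine bookkeeping.

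\medskip
\noindent I expect the main obstacle to be purely expository rather than mathematical: the proof is short once Lemma~\ref{lemma:ainj-expansions} and Corollary~\ref{coro:a-inj-sem} are in hand, and the real work was already done in establishing that lemma. The only genuine care needed is in the $(\ref{it:prop:cont-char-ainj:1}) \Rightarrow (\ref{it:prop:cont-char-ainj:2})$ step, making sure that the ``canonical database'' $G_{F_1}$ witnesses membership $\bar{y}_1 \in Q_1(G_{F_1})^{\ani}$ via the identity, which hinges on $F_1$ already being an $\ani$-expansion (so no further identifications are needed) and on the fact that in a CQ-as-database distinct variables are distinct elements.
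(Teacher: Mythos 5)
Your proof is correct and follows essentially the same route as the paper: the paper likewise invokes Lemma~\ref{lemma:ainj-expansions} to identify items (2) and (3), proves $(1)\Rightarrow(2)$ via the canonical-database argument with the identity injection, and closes the cycle by composition with the witnessing homomorphism from Corollary~\ref{coro:a-inj-sem}. The only cosmetic difference is that you close via $(3)\Rightarrow(1)$ by composing two injective homomorphisms, whereas the paper closes via $(2)\Rightarrow(1)$ by composing an atom-injective homomorphism with an injective one; both work.
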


From the characterizations above, for every pair of CRPQs $Q_1,Q_2$, we have that $Q_1 \subseteq_{\qni} Q_2$ implies $Q_1 \subseteq_{st} Q_2$ and that $Q_1 \subseteq_{\ani} Q_2$ implies $Q_1 \subseteq_{st} Q_2$ (while the converse implications do not hold in general).
In contrast, and in spite of the hierarchy between the semantics, there is no such implication between query-injective and atom-injective containment, as the following example shows.

\begin{example}
Consider the Boolean CRPQs $Q_1 = x\xrightarrow{a} y\land y\xrightarrow{b} z$, $Q_2 = x\xrightarrow{ab} y$, $Q_1'=x\xrightarrow{a} y\land x\xrightarrow{b} y$ and $Q_2'=x\xrightarrow{a} y\land x'\xrightarrow{b} y'$. 
We have $Q_1'\subseteq_{\ani} Q_2'$ (and $Q_1'\subseteq_{st} Q_2'$) but $Q_1'\not\subseteq_{\qni} Q_2'$ as there cannot be an injective homomorphism from the unique expansion of $Q'_2$ to the unique expansion of $Q'_1$. 
On the other hand, we have $Q_1\subseteq_{\qni} Q_2$ (and $Q_1\subseteq_{st} Q_2$) but $Q_1\not\subseteq_{\ani} Q_2$. Indeed, we can take the $\ani$-expansion $F$ of $Q_1$ obtained from $x\xrightarrow{a} y\land y\xrightarrow{b} z$ by identifying $x$ and $z$. Then, there cannot be an atom-injective homomorphism from the unique expansion of $Q_2$ to $F$. 
\end{example}

In view of the characterizations above, in the sequel we will sometimes write \defstyle{st-expansions} or \defstyle{\qni-expansions} to denote a (normal) expansion. 
For $\star\in\{st, \qni, \ani\}$, we say that $E_1(\bar y)$ is a \defstyle{counter-example} for $\star$-semantics if $E_1$ is a $\star$-expansion of $Q_1$ such that  $\bar y \not\in Q_2(E_1)^\star$ (recall that any CQ, in particular $E_1$, can be seen as a graph database). Note that the latter condition $\bar y \not\in Q_2(E_1)^\star$ is equivalent to the non-existence of a (normal) expansion $E_2$ of $Q_2$ such that either (a) $E_2\to E_1$ if $\star=st$; (b) $E_2\injto E_1$ if $\star=\qni$; or (c) $E_2\ainjto E_1$ if $\star=\ani$. 
Hence, $Q_1 \not\subseteq_{\star} Q_2$ if, and only if, there exists a counter-example for $\star$-semantics.

\section{Containment for unrestricted CRPQs}
\label{sec:unrestricted}
In this section we study the \CRPQ/\CRPQ containment problem under query-injective and atom-injective semantics. We show that the former is in \pspace while the latter is undecidable. Both proofs are non-trivial and provide novel insights on how the semantics can be exploited for static analysis problems: In the first case by reducing the space needed from exponential to polynomial, and in the second case by enforcing counter-examples to witness solutions of the PCP problem, through an intricate encoding.

The \CRPQ/\CRPQ containment problem for standard semantics is in \expspace \cite{Florescu:CRPQ} and all proofs \cite{Florescu:CRPQ,CGLV00,Figueira20} of which we are aware reduce the problem to test containment or universality on exponentially-sized NFA's, encoding the set of expansions or the set of (non-)counter-examples. The \pspace bound for query-injective semantics is, however, quite different in nature, and uses some exponential number of polynomial-sized certificates to ensure that the containment holds.

\begin{theorem}\label{thm:crpq-crpq-nodeinj-pspace}
        The \textup{\CRPQ/\CRPQ} containment problem under query-injective semantics is in \pspace.
\end{theorem}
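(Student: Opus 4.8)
The plan is to build on the expansion characterization of Proposition~\ref{prop:cont-char-qinj}: $Q_1\not\subseteq_{\qni}Q_2$ holds iff there is a counter-example for $\qni$-semantics, that is, an expansion $E_1\in\Exp{}(Q_1)$ such that no expansion $E_2\in\Exp{}(Q_2)$ satisfies $E_2\injto E_1$. Unlike the $\Pi^p_2$ cases of Figure~\ref{fig:summary}, here a minimal counter-example may be of exponential size, so one cannot simply guess and test it; instead, I would classify counter-examples by a polynomially-describable but exponentially-large family of \emph{abstractions}, and reduce containment to a predicate over these abstractions that is decidable in polynomial space.

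First I would describe the anatomy of expansions and of injective homomorphisms. Up to finitely many ``$\epsilon$-shapes'', an expansion $E_1$ of $Q_1$ is a subdivision of the fixed pattern multigraph of $Q_1$ on $\vars(Q_1)$: each atom $x_i\xrightarrow{L_i}y_i$ becomes a directed path carrying a label $w_i\in L_i$. The vertices of $E_1$ of underlying degree $\ge 3$, together with the images of free variables, form a set $B_1$ of \emph{hubs} with $|B_1|\le|\vars(Q_1)|$; every other vertex lies in the interior of a unique atom-path. The same holds for $Q_2$, whose hub set $B_2$ does not depend on the chosen expansion. If $h:E_2\injto E_1$, then $h$ must send each hub of $E_2$ to a hub of $E_1$ (an injective homomorphism cannot suppress or merge branching, nor move a free variable off a free variable); and it sends each \emph{thread} of $Q_2$ --- a maximal path between hubs whose interior vertices have degree $\le 2$ --- to a simple path of $E_1$ matching it in edge labels and directions. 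Such a path necessarily traverses a sequence of entire atom-paths of $E_1$ (forwards or backwards), together with at most a proper prefix and a proper suffix of two further atom-paths, the latter only when the corresponding thread-endpoint is a leaf of $Q_2$ (hence unconstrained in where it lands). Injectivity of $h$ means the images of distinct threads are pairwise vertex-disjoint except at the hubs already shared in $Q_2$; as at most $|\vars(Q_2)|$ threads meet any atom-path, each along a contiguous segment, that atom-path is cut by these segments into $O(|\vars(Q_2)|)$ consecutive factors.

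Next I would introduce two combinatorial objects. An \emph{embedding scheme} $\sigma$ is a blueprint for a potential $h$: a map $B_2\to B_1$ together with, for each thread of $Q_2$, the ordered list of atoms of $Q_1$ whose paths it traverses (and the sides of partial traversals). The \emph{profile} of an expansion $E_1$ records, for each atom-path label $w_i$ and each of the polynomially many relevant NFAs --- the NFA of $L_i$, and the NFAs of the \emph{thread languages} of $Q_2$, each a chained product of atom-NFAs of $Q_2$ and hence of polynomial size --- the state-to-state relations achievable by reading $w_i$, or a prefix, suffix, or infix of $w_i$, in either direction, plus the $\epsilon$-shape. A profile has polynomial size, and there are exponentially many of them. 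The heart of the argument is an \emph{abstraction lemma}: every constraint that a scheme $\sigma$ imposes on an expansion $E_1$ --- the membership constraints ``the factors used by a thread, concatenated in order, form a word of its thread language'', their prefix/suffix versions at leaf-ends, and the alignment constraints --- depends on $E_1$ only through its profile; and conversely, any witnessing assignment of runs can be turned into an actual injective homomorphism $E_2\injto E_1$. Granting this, $Q_1\subseteq_{\qni}Q_2$ holds iff for every \emph{realizable} profile $P$ (one arising from a genuine expansion of $Q_1$) there exists a scheme $\sigma$ that is \emph{satisfiable given $P$}.

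This yields the $\pspace$ procedure. Enumerate all candidate profiles $P$ with a polynomial-bit counter; for each, first test \emph{realizability} in polynomial space --- by guessing witnessing words symbol by symbol while tracking polynomially many automaton states --- and, when $P$ is realizable, enumerate all candidate schemes $\sigma$ with a second polynomial-bit counter, checking for each whether $\sigma$ is satisfiable given $P$, which by the abstraction lemma reduces to a polynomial-time reachability check over the state-relation data recorded in $P$. If some realizable $P$ admits no satisfiable $\sigma$, report non-containment; otherwise report containment. The two nested exponential-length loops are driven by polynomial-size counters and the inner computations reuse polynomial working space, so the whole procedure runs in polynomial space. The main obstacle will be the abstraction lemma: proving that a polynomial-size profile of an atom-path captures everything relevant to the injective embeddability of expansions of $Q_2$. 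This needs a careful justification of (i) the claimed decomposition of injective homomorphisms, with threads routed along whole atom-paths plus prefixes/suffixes only at leaf-ends; (ii) the fact that the global disjointness requirement touches each atom-path through only a bounded number of contiguous zones, so that the (exponentially long) interior of an atom-path is irrelevant beyond its recorded automaton behaviour; and (iii) a range of routine but delicate special cases --- languages containing $\epsilon$, self-looping atoms, threads that are a single self-looping atom, repeated free variables, and the bookkeeping of edge directions.
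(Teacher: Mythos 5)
Your proposal follows essentially the same route as the paper's proof: your \emph{profiles} are the paper's polynomial-size abstractions (recording the run/prefix/suffix/infix behaviour of the $Q_2$-automata on each atom expansion of $Q_1$), your \emph{embedding schemes} are its morphism types, your \emph{abstraction lemma} is its compatibility claim, and the enumerate-profiles-then-check-all-schemes procedure is the same \pspace algorithm (the paper guesses rather than enumerates, which is equivalent). One discrepancy worth noting: you allow $O(|\vars(Q_2)|)$ disjoint image segments inside a single atom expansion, whereas your profile --- like the paper's, which caps the internal landing points at two via a normalization of $Q_2$ and its length-3 template graph --- records only a single prefix/suffix/infix relation per atom, and reconciling these two is exactly the delicate point of the abstraction lemma you flag as the main obstacle.
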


\begin{proof}[Proof sketch\ifarxiv (full proof in Appendix~\ref{sec:pr:lem:crpq-crpq-nodeinj-pspace})\fi]
	Let $Q_1, Q_2$ be CRPQs; we want to test $Q_1 \subseteq_{\qni} Q_2$.
	We only give a high-level description of the proof due to space constraints.
	We will work with polynomial-sized `abstractions' of expansions of $Q_1$. These abstractions contain, for each atom $A$ of $Q_1$, all the information on how the languages of $Q_2$ can be mapped into it. For example, it includes the information ``\textsl{there is a partial run from state $q$ to state $q'$ of the NFA $\+A_L$ of language $L$ from $Q_2$ reading the expansion word of $A$}'', or ``\textsl{there is a partial run from the initial state of $\+A_L$ to $q$ reading some suffix of the expansion word of $A$}''. Such an abstraction contains all the necessary information needed to retain from an expansion to check whether it is a counter-example. Indeed, any expansion having the same abstraction as a counter-example will be a counter-example. 

	In order to test whether an abstraction $\alpha$ abstracts a counter-example, we need to consider all possible ways of injectively mapping an expansion of $Q_2$ to an expansion of $Q_1$. We call this a \emph{morphism type}, which contains the information of where each atom expansion of $Q_2$ is mapped to. For example, we can have the information that the path to which the expansion of atom $A$ of $Q_2$ is mapped starts at some internal node of the expansion of atom $A_1= x \xrightarrow{L_1} y$ of $Q_1$ then arrives to variable $y$ with state $q$ and continues reading the full expansion of atom $A_2 = y \xrightarrow{L_2} z$ arriving to variable $z$ with state $q'$, and it ends its journey by reading a prefix of the expansion of atom $A_3 = z \xrightarrow{L_3} t$ arriving to a final state at some internal node. For each morphism type, we can check if it is \emph{compatible} with an abstraction by checking, for example, that $\alpha$ indeed contains the information of having a partial run from $q'$ to a final state reading a prefix of the expansion of $A_3$. 

	More concretely,
	\newcommand{\lGraph}{\mathbf{G}}
	consider the directed graph $\lGraph$ consisting of replacing each atom $A = x \xrightarrow{L} y$ of $Q_1$ with a path $\pi^\lGraph_A$ of length 3 (\ie, adding two new internal vertices). 
	A \defstyle{morphism type} from $Q_2$ to $Q_1$ is a pair $(H,h)$ such that $h: H \injto \lGraph$ and $H$ is a graph resulting from replacing each atom $A =  x \xrightarrow{L} y$ of $Q_2$ with a path $\pi^H_A$ from $x$ to $y$. 
	Figure~\ref{fig:lem:crpq-crpq-nodeinj-pspace-sketch-body} has an example of a morphism type.
	\begin{figure}
		\includegraphics[width=.47\textwidth]{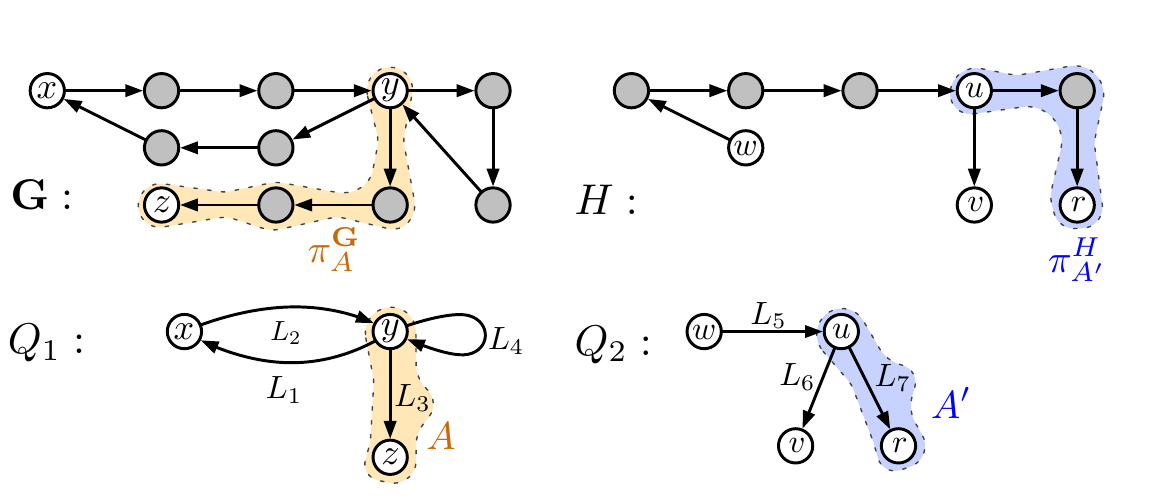}
		\caption{\textnormal{Example of definition of $\lGraph$ and morphism type $(H,h)$ from $Q_1,Q_2$. In this case, the injective morphism $h$ from $H$ to $\lGraph$ maps each node of $H$ to the node in the same position on $\lGraph$ (\eg, the lower-right node $r$ of $H$ maps to the lower-right node of $\lGraph$).}}
		\label{fig:lem:crpq-crpq-nodeinj-pspace-sketch-body}
	\end{figure}
	By injectivity, the size of $H$ in any morphism type is linearly bounded on $Q_1$.

	A morphism type $(H,h)$ is \defstyle{compatible} with an abstraction $\alpha$ if there is a mapping $\lambda$ from the internal nodes of paths $\pi^H_A$ to states of $A$, such that for every atom $A=x_1 \xrightarrow{L} x_2$ of $Q_1$, all the expected properties must hold. 
	For example, if there is an atom $A'$ of $Q_2$ and an infix $\pi$ of the path $\pi_{A'}^H$ with $h(\pi) = \pi^\lGraph_A$, then the abstraction $\alpha$ ensures that there is a run of $\+A_{A'}$ from $\lambda(src(\pi))$ to $\lambda(tgt(\pi))$, where $\+A_{A'}$ is the NFA of the language of $A'$, and $src(\pi)$ and $tgt(\pi)$ are the first and last nodes of $\pi$, respectively. 
	Or, as another example, one must also check that if there is an atom $A'$ of $Q_2$ and a suffix $\pi$ of $\pi^H_{A'}$ with $h(\pi)$ being a prefix of $\pi^\lGraph_A$, then $\alpha$ ensures that there is a run from $\lambda(src(\pi))$ to some final state in $\+A_{A'}$ on the prefix of the expansion of $A$. There are actually many other possible cases (17 in total), but each of these cases can be easily checked with the information compiled in an abstraction.

	The key property of compatibility is that it captures whether an abstraction contains a $Q_1$-expansion  which is a counter-example:
	\begin{claim}
		The following are equivalent:
		\begin{enumerate}
			\item There is a morphism type compatible with an abstraction $\alpha$;
			\item for every expansion $E_1 \in \Exp{}(Q_1)$ with abstraction $\alpha$ there exists some expansion $E_2 \in \Exp{}(Q_2)$ such that $E_2 \injto E_1$;
			\item there is an expansion $E_1 \in \Exp{}(Q_1)$ with abstraction $\alpha$ and an expansion $E_2 \in \Exp{}(Q_2)$ such that $E_2 \injto E_1$.
		\end{enumerate}
	\end{claim}
  Finally, the \pspace algorithm guesses a mapping $\alpha$ from the atoms of $Q_1$ to subsets of $P$, checks that $\alpha$ is an abstraction of $Q_1$, and checks that there is no morphism type $(H,h)$ which is compatible with $\alpha$. Due to the Claim above, if the algorithm succeeds, then any expansion $E_1$ of $Q_1$ is a counter-example, and thus $Q_1 \not\subseteq_{\qni} Q_2$; otherwise, for every expansion $E_1$ with abstraction $\alpha$ there is a compatible morphism type, which means that $E_1$ is not a counter-example and hence $Q_1 \subseteq_{\qni} Q_2$. 
\end{proof}

On the other hand, the \CRPQ/\CRPQ containment problem for atom-injective semantics becomes undecidable. 
Remarkably, the bound holds even when the right-hand side query has no infinite languages, and both queries are of very simple shape (\cf~Figure~\ref{fig:query-q1-q2-simple}).
\begin{theorem}
    \label{theo:undec-atom-ni}
    The \textup{\CRPQ/\CRPQ} and \textup{\CRPQ/\CRPQfin} containment problems under atom-injective semantics are undecidable.
\end{theorem}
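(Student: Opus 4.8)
The plan is to reduce from the \emph{Post Correspondence Problem} (PCP), which is undecidable already over the binary alphabet $\set{0,1}$: given pairs $(u_1,v_1),\dots,(u_n,v_n)$ of words over $\set{0,1}$, decide whether there is a non-empty sequence $i_1,\dots,i_k$ with $u_{i_1}\cdots u_{i_k}=v_{i_1}\cdots v_{i_k}$. From such an instance I would build a CRPQ $Q_1$ and a CRPQ $Q_2$ with only finite languages (i.e.\ $Q_2\in\CRPQfin$), both of a very simple shape, such that $Q_1\subseteq_\ani Q_2$ holds if and only if the instance has \emph{no} solution. Since $\CRPQfin\subseteq\CRPQ$, the same pair of queries witnesses undecidability for both the \CRPQ/\CRPQfin and the \CRPQ/\CRPQ case at once. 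Throughout I would use the characterization of Proposition~\ref{prop:cont-char-ainj}, together with the observation made in \S\ref{subsec:characterize-cont} that $Q_1\not\subseteq_\ani Q_2$ holds exactly when there is a \emph{counter-example}: an $\ani$-expansion $F_1\in\Exp{\ani}(Q_1)$ into which no ordinary expansion $E_2\in\Exp{}(Q_2)$ admits an atom-injective homomorphism. Everything is engineered so that counter-examples coincide with the $\ani$-expansions that faithfully encode a genuine PCP solution.

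The query $Q_1$ uses a bounded number of atoms carrying starred languages over an alphabet that, besides $\set{0,1}$, contains bookkeeping symbols playing the role of \emph{separators} and \emph{position coordinates}. Its ordinary expansions are essentially pairs of long paths: a ``$U$-spine'' meant to spell $u_{i_1}\cdots u_{i_k}$ interleaved with a copy of the index sequence, and a ``$V$-spine'' meant to spell $v_{i_1}\cdots v_{i_k}$ interleaved with the same index sequence. The decisive ingredient is that an $\ani$-expansion may additionally identify variables coming from \emph{different} atoms --- such variables are never $\varphi$-atom-related --- which is precisely the extra freedom that ordinary expansions, and hence the standard semantics, do not have. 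Exploiting it, an $\ani$-expansion can ``fold'' the two spines onto one another position by position, producing the intended \emph{solution diagram}: a graph database in which the two spines are merged and in which a genuine solution is witnessed by a purely structural coincidence. It is exactly this gluing power that lets the $\ani$-expansions of a fixed finite $Q_1$ realize canonical databases of unbounded combinatorial complexity, which is what takes the problem out of the scope of the (decidable) automata-theoretic techniques used for standard semantics.

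The query $Q_2$ is likewise of a very simple shape, and since its languages are finite it has only finitely many ordinary expansions, each a small ``forbidden gadget''. The task of $Q_2$ is to admit an atom-injective homomorphism into every $\ani$-expansion of $Q_1$ that is \emph{not} a faithful solution diagram. I would split its gadgets into two families. The \emph{structural} gadgets detect every way in which the gluing departed from the intended template --- a separator merged with a non-separator, a coordinate out of alignment, a spine left (partially) un-merged, or spurious extra identifications creating a short forbidden cycle or an edge between the wrong pair of nodes --- each deviation being arranged so as to create a bounded-size witness. The \emph{semantic} gadget detects that a structurally faithful diagram nonetheless fails to encode a solution: either the index sequence is empty, or the two spines disagree at some position. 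Here I would crucially use that PCP is over $\set{0,1}$, so a disagreement is witnessed by a $0$-labelled edge and a $1$-labelled edge between the same ordered pair of nodes, i.e.\ by the \emph{single} two-atom pattern $(x\xrightarrow{0}y)\land(x\xrightarrow{1}y)$, which occurs in the merged diagram precisely when some position of $u_{i_1}\cdots u_{i_k}$ differs from the corresponding position of $v_{i_1}\cdots v_{i_k}$. Combining the two families, an $\ani$-expansion is free of all gadgets of $Q_2$ exactly when it is a faithful diagram with matching spines and a non-empty index sequence, i.e.\ exactly when it encodes a genuine PCP solution; hence $Q_1\subseteq_\ani Q_2$ iff the instance has none, and undecidability follows.

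The main obstacle is the \emph{faithfulness} argument, in both directions. The easier direction asks that the intended merge of a genuine solution produce no instance of any gadget of $Q_2$; this constrains the auxiliary alphabet so that legitimate identifications never accidentally create the patterns the structural gadgets look for. The hard direction asks that \emph{every} $\ani$-expansion which is not such a clean diagram contain some gadget of $Q_2$. This is delicate because $\Exp{\ani}(Q_1)$ is a very rich class --- arbitrary identifications of internal variables across atoms, on top of an already infinite set of word choices --- and because ``not fully merged'' is not a priori a local property; the burden therefore falls on the separators and coordinates of $Q_1$ and on the structural gadgets of $Q_2$ to be designed so that any deviation from the template is forced to leave a local trace, which in turn requires an exhaustive case analysis of how an $\ani$-expansion can be assembled. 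Reconciling the two sides --- enough rigidity to rule out every form of cheating, enough slack to admit every honest solution --- is the technical heart of the proof, and is the ``intricate encoding'' referred to in the introduction.
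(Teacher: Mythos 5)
Your architecture is the paper's: reduce from PCP, arrange that the counter-examples among the $\ani$-expansions of $Q_1$ are exactly the faithful encodings of a solution, and let the finitely many expansions of $Q_2$ be small forbidden patterns catching every deviation. But the two steps you defer are exactly where the construction can fail, and your specific design breaks at the first of them. You propose to detect a mismatch between $u_{i_1}\cdots u_{i_k}$ and $v_{i_1}\cdots v_{i_k}$ via the pattern $(x\xrightarrow{0}y)\land(x\xrightarrow{1}y)$, which presupposes that the two spines are merged edge-on-edge \emph{in the same direction}, so that corresponding letters become parallel edges on the same ordered pair of nodes. Under atom-injective semantics, however, the only way a forbidden pattern can certify that two nodes are distinct --- and hence force an identification on pain of creating the pattern --- is a simple path or simple cycle inside a single atom expansion: a two-atom pattern such as $z\xrightarrow{a}w_1\land z\xrightarrow{b}w_2$ cannot require $h(w_1)\neq h(w_2)$, because $w_1$ and $w_2$ are not atom-related. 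So every forcing gadget must be a directed simple path or cycle; and if both spines leave a common source in the same direction, there is no directed path between their $j$-th internal nodes until they have already been identified, so the domino argument that is supposed to force the folding cannot even start. The paper resolves precisely this by orienting one spine \emph{into} the hub variable $x$ and the other \emph{out of} it and reversing one encoded sequence: once positions $1,\dots,j-1$ are identified, the $j$-th positions of the two spines lie on a directed path of length $2$ through the merged node, a letter mismatch shows up as a short simple path with label in $\sum_{a\neq b}\widehat{a}\,b$, and the next identification is forced by further short forbidden paths and cycles. Your layout has to be reworked along these lines before the ``exhaustive case analysis'' you defer to can be carried out.

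The second gap is the disjunction. A CRPQ is a conjunction of atoms; the only disjunction available is the choice of word inside each atom's finite language, and in any expansion \emph{every} atom must map. Your gadgets mix simple paths of various lengths, cycles, and a non-path two-atom parallel-edge pattern, yet you need $Q_2$ to map whenever \emph{at least one} gadget occurs. Packaging this into a single query in $\CRPQfin$ is not routine: the paper first proves undecidability against a \emph{union} of two single-atom queries (one cycle-shaped, one path-shaped, each carrying a finite union of all forbidden labels) and then devotes a substantial argument to simulating that union with the single query $x\xrightarrow{K}x\land y\xrightarrow{L}x\land y\xrightarrow{M}z$, padding $K$, $L$, $M$ with dummy words that are mappable exactly when some other atom does the real work. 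Your proposal does not address this step at all, and your parallel-edge gadget --- not expressible as an expansion of any single atom --- makes it strictly harder. Neither gap is a matter of routine verification; together they are the actual content of the paper's proof.
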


\begin{proof}[Proof sketch\ifarxiv (full proof in Appendix~\ref{sec:app-full-undec})\fi]
    We reduce from the \emph{Post Correspondence Problem} (PCP), a well-known undecidable problem. 
    An instance of the PCP is a sequence of pairs $(u_1,v_1),\dots, (u_\ell, v_\ell)$, where $u_i$ and $v_i$ are non-empty words over an alphabet $\Sigma$. 
    The goal is to decide whether there is a \emph{solution}, that is, a sequence $i_1,\dots,i_k$ of indices from $\{1,\dots, \ell\}$, with $k\geq 1$, such that  
    the words $u_{i_1}\cdots u_{i_k}$ and $v_{i_1}\cdots v_{i_k}$ coincide. 
    
    We provide a high-level description of the reduction. 
    The idea is to construct Boolean CRPQs $Q_1$ and $Q_2$ such that the PCP instance 
    $(u_1,v_1),\dots, (u_\ell, v_\ell)$ has a solution if and only if $Q_1\not\subseteq_{\ani}Q_2$. 
    In particular, the PCP instance has a solution if and only if there exists a counterexample for $\ani$-semantics, \ie, an $\ani$-expansion $F$ of $Q_1$ such that there is no expansion $E\in \Exp{}(Q_2)$ with $E\ainjto F$.  
    The general structure of $Q_1$ is shown in Figure~\ref{fig:query-q1-q2-simple}. We have a ``middle'' variable $x$, 
    two ``incoming'' atoms and two ``outgoing'' atoms:
    $$  y_1\xrightarrow{L_{I}} x \land y_2\xrightarrow{\widehat{L}_{a}} x \land x\xrightarrow{\widehat{L}_{I}} z_1 \land x \xrightarrow{L_{a}} z_2 $$
    Words in the languages $L_{I}$ and $\widehat{L}_{I}$ encode sequences of indices from $\{1,\dots,\ell\}$, 
    using special symbols from $\Ic:=\{I_1,\dots, I_\ell\}$ and  $\widehat{\Ic}:=\{\widehat{I}_1,\dots, \widehat{I}_\ell\}$, respectively.  On the other hand, 
    words from $L_{a}$ and $\widehat{L}_{a}$ encode sequences of words from $\{u_1,\dots, u_\ell\}$ and $\{v_1,\dots, v_\ell\}$,  using symbols from the PCP alphabet $\Sigma$ and $\widehat{\Sigma}:=\{\widehat{a}: a\in\Sigma\}$, respectively. In the four languages, we have some extra symbols to make the reduction work. 
We stress that the finite alphabet used for the CRPQ $Q_1$ (and also for $Q_2$) depends on the PCP instance.
    
        \begin{figure}
            \begin{center}
                \includegraphics[width=.37\textwidth]{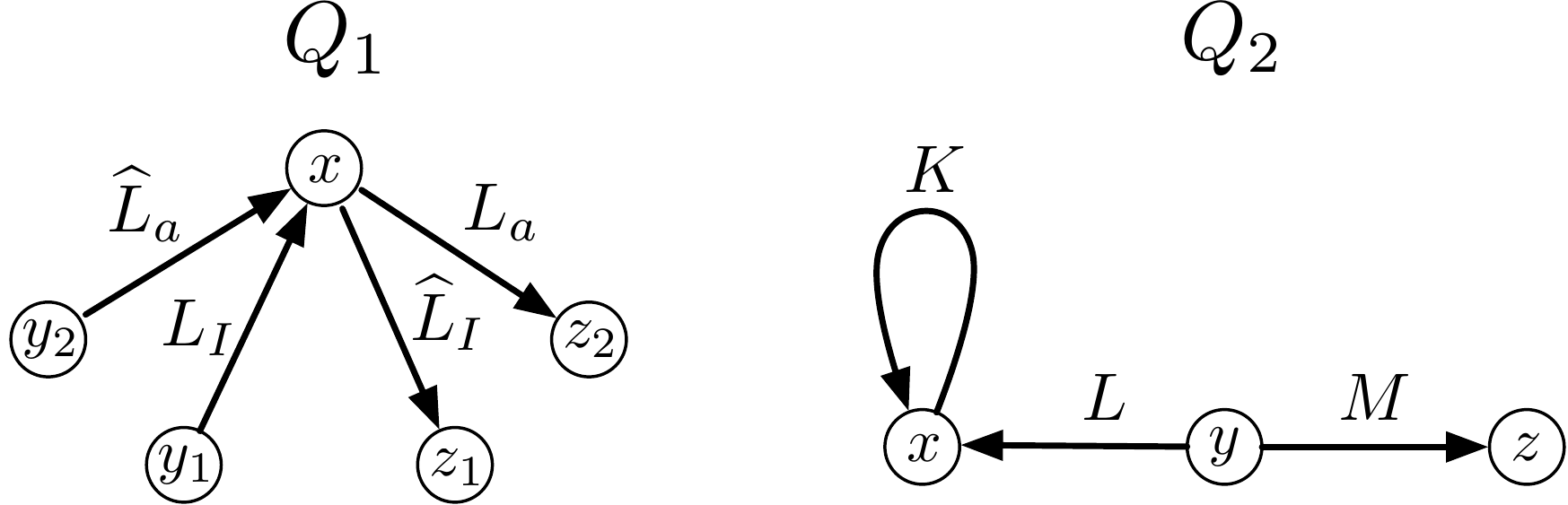}
            \end{center}
            \caption{\textnormal{The general structure of Boolean CRPQs $Q_1$ and $Q_2$ from the reduction.}}
            \label{fig:query-q1-q2-simple}
        \end{figure}

        We are interested in a particular type of $\ani$-expansions of $Q_1$ that we call \emph{well-formed}. 
        The idea is that well-formed $\ani$-expansions correspond to solutions of the PCP instance. In particular, 
        if there is a well-formed $\ani$-expansion of $Q_1$ then there is a solution to the PCP instance and vice versa. 
       We then construct $Q_2$ such that an $\ani$-expansion of $Q_1$ is well-formed if and only if 
       it is a counterexample for $Q_1\subseteq_{\ani}Q_2$. 
       
       Let $F$ be an $\ani$-expansion of $Q_1$ such that  $F=\widetilde{F}^{\collapse}$ for $\widetilde{F} = E\land J$ (here $E\in \Exp{}(Q_1)$ and $J$ are the equality atoms). The  $\ani$-expansion $F$ 
       is \emph{well-formed} if it satisfies four structural conditions we call $I$-$\widehat{I}$-, 
       $I$-$a$-, $\widehat{a}$-$\widehat{I}$-, and $\widehat{a}$-${a}$-condition. 
       Intuitively, the $I$-$\widehat{I}$-condition requires that the words $w_I\in L_{I}$ and $\widehat{I}\in \widehat{L}_I$ chosen in the expansion $E$ encode the \emph{same} sequence of indices from $\{1,\dots, \ell\}$. 
       On the other hand, the $I$-$a$-condition ensures that the word $w_a \in L_a$, chosen in the expansion $E$,
       encodes a sequence of words from $\{u_1,\dots,u_\ell\}$ \emph{according} to the sequence encoded in $w_I\in L_{I}$.  Similarly, the $\widehat{a}$-$\widehat{I}$-condition requires that the chosen word $\widehat{w}_a \in \widehat{L}_a$
       encodes a sequence of words from $\{v_1,\dots,v_\ell\}$ according to $\widehat{w}_I\in \widehat{L}_{I}$. 
       Finally, the $\widehat{a}$-${a}$-condition requires that the chosen words $w_a \in L_a$ and 
       $\widehat{w}_a \in \widehat{L}_a$ ``coincide'' after removing the $\,\widehat{\cdot}\,$ superscripts from $\widehat{w}_a$ 
       and focusing on the symbols from $\Sigma$. In other words, the $\widehat{a}$-${a}$-condition ensures that 
       the sequence of indices chosen by $w_I$ (and $\widehat{w}_I$) is actually a solution to the PCP. 
       In the four cases, we additionally need to require some conditions on the equality atoms $J$ to make the reduction work (see Figure~\ref{fig:well-formed} for an example).
       
             \begin{figure}
            \begin{center}
                \includegraphics[width=.37\textwidth]{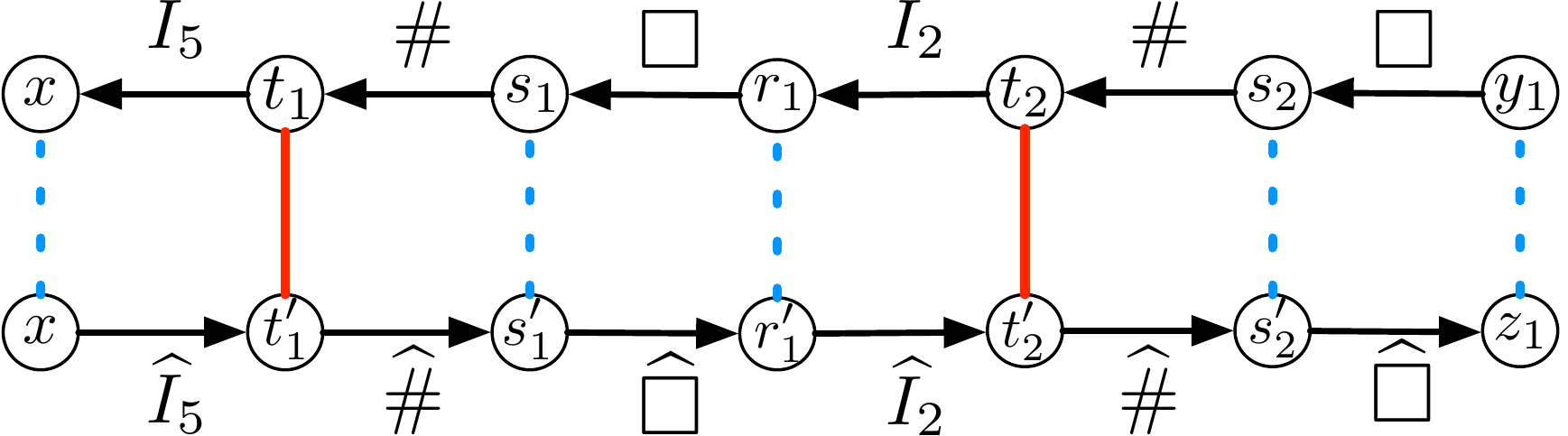}
            \end{center}
            \caption{\textnormal{Example of the $I$-$\widehat{I}$-condition of well-formed expansions of $Q_1$. 
            We show the expansions $y_1\xrightarrow{w_{I}} x$ and 
             $x\xrightarrow{\widehat{w}_{I}} z_1$ of the atoms $y_1\xrightarrow{L_{I}} x$ and 
             $x\xrightarrow{\widehat{L}_{I}} z_1$, respectively. The words $w_{I} = \square\, \#\, I_2 \, \square\, \#\, I_5$ and 
             $\widehat{w}_{I} =  \widehat{I}_5\, \widehat{\#}\, \widehat{\square} \, \widehat{I}_2 \, \widehat{\#}\, \widehat{\square}$ encode the sequence of indices $5,2$. 
             Dotted blue lines indicate pairs of equal variables while red lines indicate distinct variables. We have some extra symbols $\#, \widehat{\#}, \square, \widehat{\square}$.  
             }}
            \label{fig:well-formed}
        \end{figure}
       
The key property of well-formedness is that it can be characterized in terms of the non-existence of a finite number of simple cycles and simple paths having certain labels. Let us illustrate this for the case of the 
$I$-$\widehat{I}$-condition and the expansion $F$ of Figure~\ref{fig:well-formed}. 
The forbidden labels for simple cycles are given by the finite language $K_{I \widehat{I}} = \Ic\, \widehat{\Ic}$. 
In the case of simple paths, these are given by $M_{I \widehat{I}} = \sum_{i\neq j} I_i \widehat{I}_j \, + \widehat{\Ic} \, \# \, + \widehat{\#}\, \Ic \,+ \#\, \Ic \, \widehat{\Ic} \, \widehat{\#} \,+ \square\, \widehat{\square}$.
We have that an $\ani$-expansion $F$ of $Q_1$ satisfies the $I$-$\widehat{I}$-condition if and only if 
$F$ does not contain simple cycles with labels in $K_{I \widehat{I}}$ nor simple paths with labels
in $M_{I \widehat{I}}$.  

To see the backward direction, note that $t_1$ and $t_1'$ cannot be identified, otherwise 
we have a simple cycle from $t_1$ to itself
with label in $\Ic\, \widehat{\Ic}\subseteq K_{I \widehat{I}}$. 
Now, the symbols $I_{5}$ and $\widehat{I}_{5}$ need to correspond to the same index from $\{1,\dots, \ell\}$; 
otherwise we find a simple path from $t_1$ to $t_1'$ with label in $\sum_{i\neq j} I_i \widehat{I}_j \subseteq M_{I \widehat{I}}$. To see the identification between $s_1$ and $s_1'$, note first that $t_1$ and $s_1'$ cannot be identified, 
as this would imply a simple path from $t_1'$ to $x$ with label in $\widehat{\#}\,  \Ic \subseteq M_{I \widehat{I}}$. 
Analogously, we have that $t_1'$ and $s_1$ cannot be identified. This implies that $s_1$ and $s_1'$ are actually 
identified, otherwise we have a simple path from $s_1$ to $s_1'$ with label in $\#\, \Ic \, \widehat{\Ic} \, \widehat{\#} \subseteq M_{I \widehat{I}}$. Finally, $r_1$ and $r_1'$ are identified, 
otherwise there would be a simple path from $r_1$ to $r_1'$ with label in $\square\, \widehat{\square} \subseteq M_{I \widehat{I}}$. Note that we can repeat this argument from ``left-to-right'' starting from $r_i=r_i'$ instead of $x$, 
and obtain the $I$-$\widehat{I}$-condition. In order to ensure that the words $w_I$ and $\widetilde{w}_I$ have the
same length, we need to slightly modify the construction of $Q_1$, $K_{I \widehat{I}}$ and $M_{I \widehat{I}}$\ifarxiv (see Appendix~\ref{sec:app-full-undec} for details)\fi. The forward direction follows directly from the definition of 
the $I$-$\widehat{I}$-condition.

Since all the four conditions can be characterized via forbidden finite sets of simple cycles and paths, it is possible 
to write two CRPQs from $\CRPQfin$  of the form $Q_2^{\circlearrowright} = x \xrightarrow{K^{\circlearrowright}} x$ 
and $Q_2^{\rightarrow} = y \xrightarrow{M^{\rightarrow}} z$ such that 
for every $\ani$-expansion $F$ of $Q_1$, $F$ is well-formed if and only if $Q_2^{\circlearrowright} \lor Q_2^{\rightarrow}(F)^{\ani} = \emptyset$, where $Q_2^{\circlearrowright} \lor Q_2^{\rightarrow}$ is the \emph{union} of both CRPQs. 
In particular, there is a solution to the PCP instance if and only if $Q_1\not\subseteq_{\ani}Q_2^{\circlearrowright} \lor Q_2^{\rightarrow}$. 
We finally show how to simulate the union $Q_2^{\circlearrowright} \lor Q_2^{\rightarrow}$ with a single query $Q_2\in \CRPQfin$ as in Figure~\ref{fig:query-q1-q2-simple}.
\end{proof}

\section{Containment for CRPQ subclasses}
\label{sec:subclasses}

With the two previous results in place for the containment of unconstrained CRPQs, we now explore 
the $\Cc_1$/$\Cc_2$ containment problem under all the possible semantics, where $\Cc_1$ and/or $\Cc_2$ belong to simpler classes of queries, namely either Conjunctive Queries or CPRQs with no Kleene star (and hence with finite languages).

In many cases one can apply or adapt previously established techniques or reductions. There are, however, two noticeable exceptions: the lower bounds for \CRPQfin/\CQ under query-injective semantics and for \CQ/\CRPQfin under atom-injective semantics. We highlight only these two results. The remaining proofs can be found in \ifarxiv   Appendix~\ref{app:other-results}. \else the {\fullversion} of the paper. \fi In particular, as mentioned in Section~\ref{sec:containment}, almost all the results for the standard semantics follow from previous work (in particular~\cite{CGLV00,Florescu:CRPQ,FigueiraGKMNT20}).

\begin{theorem}\label{thm:crpq-cq-pitwo-hard-qni}
    The \textup{\CRPQfin/\CQ} containment problem under query-injective semantics is \pitwo-hard.
\end{theorem}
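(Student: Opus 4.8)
The plan is to prove \pitwo-hardness by reduction from the validity problem for $\Pi_2$ quantified Boolean formulas $\Psi = \forall x_1\cdots x_n\, \exists y_1\cdots y_q\; \varphi(\bar x,\bar y)$, where $\varphi = C_1\wedge\cdots\wedge C_m$ is a $3$-CNF; this problem is \pitwo-complete. The starting observation is Proposition~\ref{prop:cont-char-qinj} together with the fact that a CQ is its own unique expansion: $Q_1 \subseteq_{\qni} Q_2$ holds if and only if \emph{for every} $E_1 \in \Exp{}(Q_1)$ \emph{there is} an injective homomorphism $Q_2 \injto E_1$. The aim is to build, in polynomial time from $\Psi$, a Boolean $Q_1 \in \CRPQfin$ and a Boolean $Q_2 \in \CQ$ so that (a) the expansions of $Q_1$ are in bijection with the truth assignments $\sigma : \bar x \to \{0,1\}$ --- write $E_1^\sigma$ for the one corresponding to $\sigma$ --- and (b) $Q_2 \injto E_1^\sigma$ holds if and only if $\sigma$ extends to a satisfying assignment of $\varphi$. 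Then the outer ``$\forall E_1$'' plays the role of ``$\forall\bar x$'' and the inner ``$\exists$ injective hom'' the role of ``$\exists\bar y\,\varphi$''.

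For (a) I would take $Q_1 = S \wedge \bigwedge_{i=1}^{n}\big(a_i \xrightarrow{\tau_i + \phi_i} b_i\big)$, where $S$ is a fixed ``scaffold'' all of whose atoms carry singleton languages (so $S$ has a unique expansion), $a_i, b_i \in \vars(S)$, and the $\tau_i, \phi_i$ are fresh alphabet symbols. Since no language is empty and there are no equality atoms, $\Exp{}(Q_1)$ is exactly $\{E_1^\sigma : \sigma\}$, where $E_1^\sigma$ is the unique expansion of $S$ plus, for each $i$, the single edge $a_i\xrightarrow{\tau_i} b_i$ if $\sigma(x_i)=1$ and $a_i\xrightarrow{\phi_i} b_i$ if $\sigma(x_i)=0$; in particular there are no ``spurious'' expansions, and $Q_1$ lies in \CRPQfin since only two-letter languages and singletons occur. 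The scaffold $S$ would be designed to be \emph{rigid} relative to a distinguished ``frame'' sub-CQ of $Q_2$ --- a unique homomorphism of the frame into the expansion of $S$, arranged e.g.\ by giving each frame node a distinctive incidence pattern of frame-only labels --- so that in every homomorphism $Q_2 \to E_1^\sigma$ the images of the $a_i, b_i$ are pinned down and hence the choice edges can be ``read''.

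For (b), $Q_2$ would follow the blueprint of a homomorphism-based encoding of $3$-SAT, in the style of the standard-semantics \pitwo-hardness proof for \CRPQfin/\CQ containment (\cite[Thm.~4.4]{FigueiraGKMNT20}): a ``selector'' gadget per existential variable $y_k$, whose image (chosen by the homomorphism, among two disjoint options offered by $S$) records a value $\tau(y_k)$; a ``tester'' gadget per clause $C_j$ that can be discharged in three alternative ways realized by the target's ``$S_{\mathsf{sat}}$''-style choice, way $t$ being usable exactly when literal $\ell_{j,t}$ is true --- for an $\bar x$-literal this hinges on which of $a_i\xrightarrow{\tau_i}b_i$, $a_i\xrightarrow{\phi_i}b_i$ is present in $E_1^\sigma$ (hence on $\sigma$), and for a $\bar y$-literal it is tied, via shared variables, to where the corresponding selector was mapped. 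The new ingredient, forced by the passage to injective homomorphisms, is that the standard reduction routes all clause checks through a constant number of shared ``value'' nodes, which injectivity forbids; I would instead (i) keep one \emph{private copy per clause-occurrence} of each value slot and clause gadget in $S$, distinguished by occurrence-specific labels (keeping $|S|$ polynomial), and (ii) enforce consistency of the guessed $\bar y$-assignment by a \emph{monochromatic-chain} gadget: $S$ holds, for each $y_k$, two node-disjoint paths spelling the same word (its ``true'' and ``false'' copy), with the ``negation'' edges crossing them, while $Q_2$ holds a path of fresh variables for $y_k$ that --- the only same-word path from its forced start being one of the two colored copies --- is forced into exactly one of them (a directed-path-into-directed-path argument makes this map position-preserving, hence injective); occurrence $s$ of $y_k$ then uses the $s$-th node of this chain (through the negation edge for negative occurrences), so all occurrences of $y_k$ and its selector must agree. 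Universal variables need no such chain: since $\sigma$ is fixed from outside, the different occurrences of $x_i$ are automatically coherent.

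Correctness is then routine. If $\Psi$ is valid, for each $\sigma$ pick a satisfying $\tau$, map the frame rigidly, realize each selector per $\tau$, and for each clause pick a way witnessing a true literal; private copies and disjoint labels make the homomorphism injective, so $Q_2 \injto E_1^\sigma$ for all $\sigma$, i.e.\ $Q_1 \subseteq_{\qni} Q_2$. Conversely, if $Q_1 \subseteq_{\qni} Q_2$, then for every $\sigma$ some injective $h : Q_2 \injto E_1^\sigma$ exists; rigidity fixes the $a_i, b_i$-images, the selectors read off a $\tau$, the monochromatic chains force $\tau$ to be used consistently throughout, and each tester's image exhibits a true literal of its clause, so $(\sigma,\tau)\models\varphi$; as $\sigma$ was arbitrary, $\Psi$ is valid. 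The step I expect to be the main obstacle is precisely reconciling injectivity with the SAT encoding: ``spreading out'' the gadgets so injective homomorphisms suffice tends to also make the containment hold too easily (extra room creates spurious homomorphisms), so the delicate part is showing that the monochromatic chains and the disjoint-label discipline still rigidly force the intended semantics, together with handling the universal variables correctly through the very limited interaction allowed by the two-word choice atoms --- all while keeping the scaffold in \CRPQfin and of polynomial size. Combined with the matching upper bound (Proposition~\ref{prop:crpqfin-crpq-upper}), this yields \pitwo-completeness.
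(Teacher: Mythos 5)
Your route is genuinely different from the paper's, and as written it has a real gap at exactly the point you flag. The paper does \emph{not} reduce from $\forall\exists$-QBF here: it reduces the \sigmatwo-complete \emph{Generalized Two-Coloring Problem} (given an undirected graph $G$ and $n$ in unary, partition $V(G)$ into $V_1,V_2$ so that neither induced subgraph contains an $n$-vertex clique) to \emph{non}-containment. The query $Q_1$ attaches to each vertex of (the CQ encoding of) $G$ an atom $x \xrightarrow{1+2} x$, so expansions of $Q_1$ are exactly the partitions; the query $Q_2$ is essentially two clique gadgets $\ext{1}(K_n)$ and $\ext{2}(K_n)$ joined by $\#$-edges to padding, so that $Q_2 \injto E$ holds iff some part contains $K_n$ \emph{as a subgraph}. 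The point of this choice of source problem is that the inner existential quantifier becomes a subgraph-isomorphism question, which is precisely what an injective homomorphism natively tests --- no consistency machinery is needed at all.

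Your QBF-based plan, by contrast, must force a \emph{consistent} existential assignment $\tau$ across all clause gadgets, and under query-injective semantics the standard mechanism for this (all occurrences of $y_k$ in $Q_2$ being one shared variable mapped to a single shared value node) is unavailable: distinct $Q_2$-variables must land on distinct nodes, and several clause gadgets can no longer reuse the same ``accepting'' region of the target. Your proposed fixes --- private per-occurrence copies plus a monochromatic-chain gadget --- are only described, not constructed: you do not specify the labels that prevent the chain from weaving between the two colored copies via the crossing negation edges, you do not show that the three ``discharge'' options of a clause tester remain mutually exclusive and correctly correlated with literal truth once everything is spread out into disjoint private regions, and you yourself note that the spreading-out tends to create spurious injective homomorphisms that would make containment hold for unsatisfiable instances. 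Until those gadgets are written down and both directions of ``$Q_2 \injto E_1^\sigma$ iff $\sigma$ extends to a satisfying assignment'' are verified, the reduction is not established. (Note also that the paper's own QBF-style reduction in Theorem~\ref{thm:crpqfin-crpqfin-lower} is for \emph{atom}-injective semantics, where only per-atom injectivity is required and gadget images may still overlap; it does not transfer to the query-injective case for exactly the reason above.) If you want to complete a proof along your lines, the honest advice is that the consistency gadget is the whole theorem; alternatively, switching the source problem to one whose inner quantifier is already a subgraph question, as the paper does, dissolves the difficulty.
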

\begin{proof}
	We show that even when the languages of the left-hand query are unions of alphabet symbols, \pitwo hardness still holds.
	We show a reduction from the following problem on graphs, which is known to be \sigmatwo-complete \cite[Theorem~5]{DBLP:conf/mfcs/Rutenburg86}, to non-containment. For a graph $G$ let $V(G)$ denote its sets of vertices and let $G|_{V'}$ denote the subgraph induced by $V' \subseteq V(G)$.
	\newcommand{\GCPb}{\text{GCP${}_{\text{2}}$}\xspace}
	\decpb{Generalized Two-Coloring Problem (\GCPb)}
		{An undirected 	graph $G$, a number $n \in \N$ (in unary).}
		{Is there a partition $V_1 \dcup V_2 = V(G)$ s.t.\ neither $G|_{V_1}$ nor $G|_{V_2}$  
		contains an $n$-vertex clique as subgraph?}
	We will produce two Boolean queries $Q_1, Q_2$ over the alphabet $\A = \set{E,1,2,\#}$ such that:
		(1) $Q_1 \not\subseteq_\qni Q_2$ if{f} the \GCPb instance is positive; and
		(2) $Q_2$ is a CQ, and every language of $Q_1$ is a set of one-letter words.
	Consider the input graph $G$, and the associated \CQ $Q_G$ on over the alphabet $\set{E}$, where for each edge $\set{u,v}$ in $G$ we have atoms $u \xrightarrow{E} v \land v \xrightarrow{E} u$ in $Q_G$. Similarly, let $K_n$ be the \CQ associated to the $n$-vertex clique.
	\newcommand{\ext}[1]{#1\textit{-ext}}%
	For a \CQ $Q$ and $i \in \set{1,2}$, let us define $\ext{i}(Q)$ \resp{$\ext{(1+2)}(Q)$; $\ext{(12)}(Q)$} as the extension of $Q$ by adding one atom $x \xrightarrow{i} x$ \resp{one atom $x \xrightarrow{1 + 2} x$; two atoms $x \xrightarrow{1} x \land x \xrightarrow{2} x$} for every variable $x \in \vars(Q)$.
	We define $Q_1,Q_2$ as in Figure~\ref{fig:lem:crpq-cq-pitwo-hard-qni}.
	\begin{figure}
		\begin{center}
			\includegraphics[width=.47\textwidth]{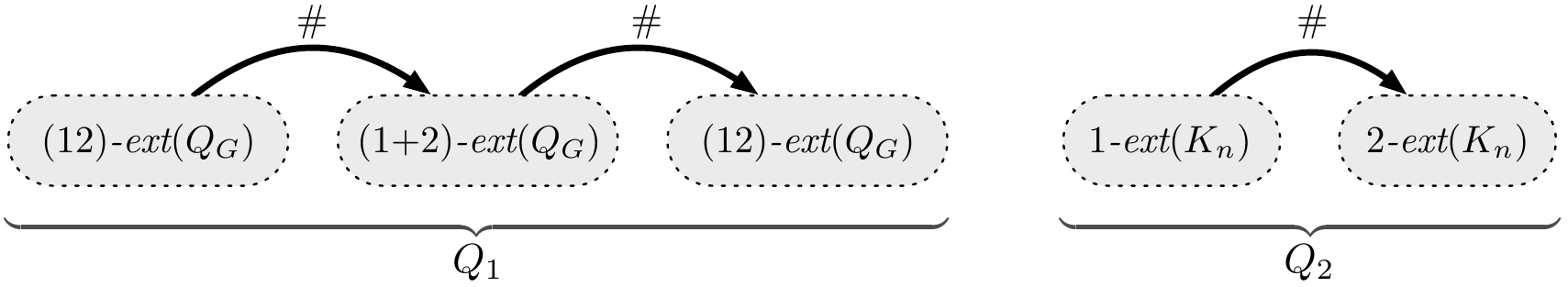}
		\end{center}
		\caption{\textnormal{Definition of $Q_1$ and $Q_2$ in terms of $G$ and $n$ for the reduction of Theorem~\ref{thm:crpq-cq-pitwo-hard-qni}. The $\#$-labeled thick arrows denote that there is an atom $x \xrightarrow{\#} y$ for each variable $x$ from the source query to each variable $y$ of the target query.}}
		\label{fig:lem:crpq-cq-pitwo-hard-qni}
	\end{figure}
	On the one hand, if $Q_1 \not\subseteq_{\qni} Q_2$ there must be some expansion $E$ of $Q_1$ which is a counter-example. From $E$ we can derive the partitioning $V_1 \dcup V_2$ of $V(G)$ where $V_i$ is the set of vertices labeled with an $i$-loop in the middle gadget of $Q_1$. Now observe that, for every $i$, $K_n$ is not injectively mapped to $G|_{V_i}$, as otherwise we would have 
	that  $i\textit{-ext}(K_n) \injto E$, implying $Q_2 \injto E$ and contradicting that $E$ is a counter-example. This means that the \GCPb instance is positive. 
	On the other hand, if there is a partitioning $V_1 \dcup V_2$ of $V(G)$ avoiding the $n$-clique as a subgraph, then the corresponding expansion $E$ of $Q_1$ (by choosing to have an $i$-loop for each node $x \in V_i$) is such that $Q_2$ cannot be injectively mapped to $E$; in other words showing that $E$ is a counter-example and thus $Q_1 \not\subseteq_{\qni} Q_2$.
\end{proof}

\begin{theorem}\label{thm:crpqfin-crpqfin-lower}
	The \textup{\CQ/\CRPQfin} containment problem under atom-injective semantics is \pitwo-hard.
\end{theorem}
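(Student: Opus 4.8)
The plan is to reduce from the Generalized Two-Coloring Problem --- shown \sigmatwo-complete in~\cite{DBLP:conf/mfcs/Rutenburg86} and already the source of the lower bound in Theorem~\ref{thm:crpq-cq-pitwo-hard-qni} --- to the \emph{non}-containment problem. From an undirected graph $G$ and a number $n\in\N$ in unary I would build a Boolean \CQ $Q_1$ and a Boolean \CRPQfin $Q_2$ such that $Q_1\not\subseteq_{\ani}Q_2$ if and only if there is a partition $V_1\dcup V_2=V(G)$ with neither $G|_{V_1}$ nor $G|_{V_2}$ containing an $n$-vertex clique. Since the \CQ/\CRPQfin containment problem under $\ani$-semantics is in \pitwo (Proposition~\ref{prop:crpqfin-crpq-upper}) and the reduction is polynomial, this yields \pitwo-hardness.

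The reduction is guided by the characterization of Proposition~\ref{prop:cont-char-ainj}: $Q_1\not\subseteq_{\ani}Q_2$ iff some $F_1\in\Exp{\ani}(Q_1)$ is a counter-example, i.e.\ admits no $E_2\in\Exp{}(Q_2)$ with $E_2\ainjto F_1$. Since $Q_1$ is a \CQ, its only ordinary expansion is $Q_1$ itself, so $\Exp{\ani}(Q_1)$ is exactly the set of quotients of $Q_1$ obtained by identifying variables not joined by an atom. I would design $Q_1$, over a fresh alphabet depending on $G$, as the disjoint union of: (i) the canonical \CQ $Q_G$ of $G$ over a fresh symbol $E$ (one $E$-edge in each direction per edge of $G$); (ii) two ``colour terminals'' $t_1,t_2$, where $t_i$ carries a private self-loop labelled $i$, joined by an edge $t_1\xrightarrow{N}t_2$ so that they can never be identified; and (iii) for each $v\in V(G)$ a fresh ``colour pendant'' $p_v$ with a single atom $v\xrightarrow{P}p_v$. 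The intended quotients identify each $p_v$ with some $t_i$, recording the colour of $v$; private self-loops on the $G$-vertices and on the pendants are added to forbid unintended identifications across the three parts. Dually, $Q_2$ is the \CRPQfin consisting of a single ``colour atom'' $c\xrightarrow{1+2}c$ (whose two expansions pick a colour $i\in\{1,2\}$), a hard-coded $n$-clique $w_1,\dots,w_n$ over $E$, and atoms $w_j\xrightarrow{P}c$ for every $j$; the expansion $E_2^{(i)}$ then maps atom-injectively into $F_1$ precisely when $F_1$ contains $n$ pairwise $E$-adjacent $G$-vertices each with a $P$-edge to a vertex carrying an $i$-self-loop --- that is, when $G|_{V_i}$ has an $n$-clique, where $V_i$ is the colour-$i$ class recorded by $F_1$. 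Unwinding Proposition~\ref{prop:cont-char-ainj}, $F_1$ is a counter-example iff neither induced subgraph has an $n$-clique.

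For the direction ``good partition $\Rightarrow$ counter-example'' one takes the clean quotient folding each $p_v$ onto the appropriate $t_i$ and uses the private symbols to verify that no $E_2^{(i)}$ maps; the converse extracts a partition from an arbitrary counter-example $F_1$, and I expect this converse to be the main obstacle. The point is that $\Exp{\ani}(Q_1)$ is exponentially large and contains many ``degenerate'' quotients --- pendants identified with one another or with $G$-vertices, and in particular pendants left \emph{unfolded}, which leaves some $v$ ``uncoloured'' and in no class $V_i$. Because a positive query such as $Q_2$ can only witness the \emph{presence} of a monochromatic clique, not the \emph{absence} of a colour on a vertex, the delicate part of the construction will be to enrich $Q_1$ (with additional pendant structure and private symbols) and to add a few auxiliary expansions to $Q_2$ so that every degenerate quotient --- in particular every one with an uncoloured vertex --- admits some $E_2\ainjto F_1$ and hence fails to be a counter-example. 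Once counter-examples coincide with the totally-coloured quotients having no monochromatic $n$-clique, these are in bijection with the positive instances of the Generalized Two-Coloring Problem, the reduction is clearly polynomial, and correctness follows by unwinding the characterization.
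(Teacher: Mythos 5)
Your high-level strategy --- encode the universal choice in the choice of $\ani$-expansion of a left-hand \CQ $Q_1$, and the existential choice in the expansion of $Q_2$ together with the atom-injective homomorphism --- is the same as the paper's, but the paper reduces from $\forall\exists$-QBF rather than from the Generalized Two-Coloring Problem. The change of source problem is not the issue; the issue is that your concrete encoding of the universal choice has a gap that you name but do not close, and that does not appear closable within the framework you set up. In an $\ani$-expansion of a \CQ, \emph{every} identification is optional: the identity quotient $F_1 = Q_1$ always belongs to $\Exp{\ani}(Q_1)$, and nothing forces a pendant $p_v$ to be folded onto one of $t_1,t_2$. In the identity quotient no $G$-vertex has a $P$-edge to a vertex carrying a colour self-loop, so neither expansion $E_2^{(i)}$ of your $Q_2$ maps into $F_1$, and $F_1$ is a counter-example \emph{regardless} of the input instance; hence $Q_1 \not\subseteq_{\ani} Q_2$ always holds and the reduction fails. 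You acknowledge this (``uncoloured vertices'') and propose to repair it by adding ``auxiliary expansions'' to $Q_2$ so that every degenerate quotient is matched. But $Q_2$ must be a single \CRPQfin, not a union: all of its atoms, including the hard-coded $n$-clique, must map simultaneously, and a quotient with an uncoloured vertex provides no monochromatic clique onto which to map that part. A single positive pattern cannot detect the \emph{absence} of a colour, so there is no evident way for $Q_2$ to accept all degenerate quotients while rejecting exactly the totally-coloured, clique-free ones. This is the crux of the construction and it is missing. A secondary error: private self-loops do \emph{not} forbid identifications under atom-injective semantics --- a self-loop involves only one variable, and two variables may be identified unless some (non-loop) atom contains both, as you correctly arrange only for $t_1,t_2$.

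For contrast, the paper's proof sidesteps this entirely: it reduces from $\forall\exists$-QBF and encodes each universally quantified $x_i$ by a pair of nodes in a gadget $D$ that are deliberately \emph{not} atom-related. The $\ani$-expansion either identifies them ($x_i$ false, which creates an $(x_i\,f)$-path and destroys any injective image of an $(x_i\,t\,t)$-path) or keeps them distinct ($x_i$ true, with the opposite effect). Every $\ani$-expansion therefore induces a \emph{total} truth assignment --- there is no third ``unassigned'' outcome --- which is precisely the property your pendant-folding encoding lacks. If you want to salvage your approach, you would need to re-engineer the colour choice so that it, too, is a forced binary ``identify or not'' per vertex rather than a choice among several optional foldings.
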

\begin{proof}[Proof sketch\ifarxiv (full proof in Appendix~\ref{sec:app-cq-crpq-sf-atom-full})\fi]
	We show that even when all languages on the right-hand side are of the form $\set{ w }$ with $|w| \leq 2$ we have \pitwo-hardness for containment. For this, we show how to adapt the proof of \pitwo-hardness of \cite[Theorem~4.3]{FigueiraGKMNT20}, which shows \pitwo-hardness for \CRPQfin/\CQ containment for the standard semantics.
	We reduce from $\forall \exists$-QBF (\ie, $\Pi_2$-Quantified Boolean Formulas).
	Let
	$
	\Phi
	=
	 \forall x_1, \ldots, x_n\; \exists y_1, \ldots, y_\ell\;
	\varphi(x_1, \ldots, x_n,y_1, \ldots, y_\ell)
	$
	be an instance of
	$\forall \exists$-QBF such that $\varphi$ is quantifier-free and in
	3-CNF. We construct boolean queries $Q_1$ and $Q_2$ such that
	$Q_1 \subseteq_\ani Q_2$ if, and only if, $\Phi$ is satisfiable.

	The query $Q_1$ is defined in Figure~\ref{fig:pi-p-3-hard}, over the alphabet of labels $\{a,x_1,\dots,\allowbreak x_n, \allowbreak y_1,\dots, \allowbreak y_\ell, \allowbreak t,f,r\}$. We now explain how we define $Q_2$, over the same alphabet. Every clause of $\Phi$ is represented
  by a subquery in $Q_2$, as depicted in Figure~\ref{fig:pi-p-3-hard}. All nodes
  with identical label ($y_{1,t}$ and $y_{1,f}$ in gadgets $D,
  E$) in the figures are the same
  node. Note that for every clause and every existentially quantified literal $y_i$ therein we have one node named $y_{i,tf}$ in $Q_2$. The $E$-gadget is designed
  such that every represented literal can be homomorphically embedded, while 
  exactly one literal has to be embedded in the $D$-gadget.

  The intuitive idea is that the valuation of the $x$-variables is given by the
  \ani-expansion $E_1$ of $Q_1$, whether the two nodes under $x$ incident to $t$ are equal or not: if they are equal this corresponds to a \emph{false} valuation, otherwise a \emph{true} valuation. 
  On the other hand, the valuation of the $y$-variables is
  given by the homomorphism of an expansion of $Q_2$ into $E_1$ (i.e., whether the corresponding node
  is mapped to the node $y_{\mbox{\textvisiblespace},t}$ or
  $y_{\mbox{\textvisiblespace},f}$). The homomorphism of $y$-variables across several
  clauses has to be consistent, as all clauses share the same nodes
  $y_{\mbox{\textvisiblespace},tf}$, which uniquely get mapped either into 
  $y_{\mbox{\textvisiblespace},t}$ or $y_{\mbox{\textvisiblespace},f}$. Hence, when the
  formula $\Phi$ is satisfiable, for any assignment to the variables $\{x_i\}$
  (given by the choice of $t$/$f$ edges in $D$), there is a mapping from
  $y_{\mbox{\textvisiblespace},tf}$ to one of $y_{\mbox{\textvisiblespace},f}$ or
  $y_{\mbox{\textvisiblespace},t}$. This gives $Q_1 \subseteq_\ani Q_2$. Conversely, if
  an expansion of
  $Q_2$ can be mapped into $K$, then, for a choice of $t$/$f$ edges in $D$, we
  have an embedding of each clause gadget of $Q_2$ in $K$. In particular, we can
  always map a literal in each clause of $Q_2$ to $D$, ensuring that $\varphi$ is
  satisfied. As this is true for any expansion $K$ obtained by any $t$/$f$ assignment to $\{x_i\}$, we obtain that $\Phi$ is satisfiable.
  \newcommand{\tileN}{\lozenge}
  \newcommand{\tileY}{\blacklozenge}
  \newcommand{\tilen}{\mbox{$\vartriangle$}}
  \newcommand{\tiley}{\blacktriangle}
  	\begin{figure}
		\includegraphics[width=.47\textwidth]{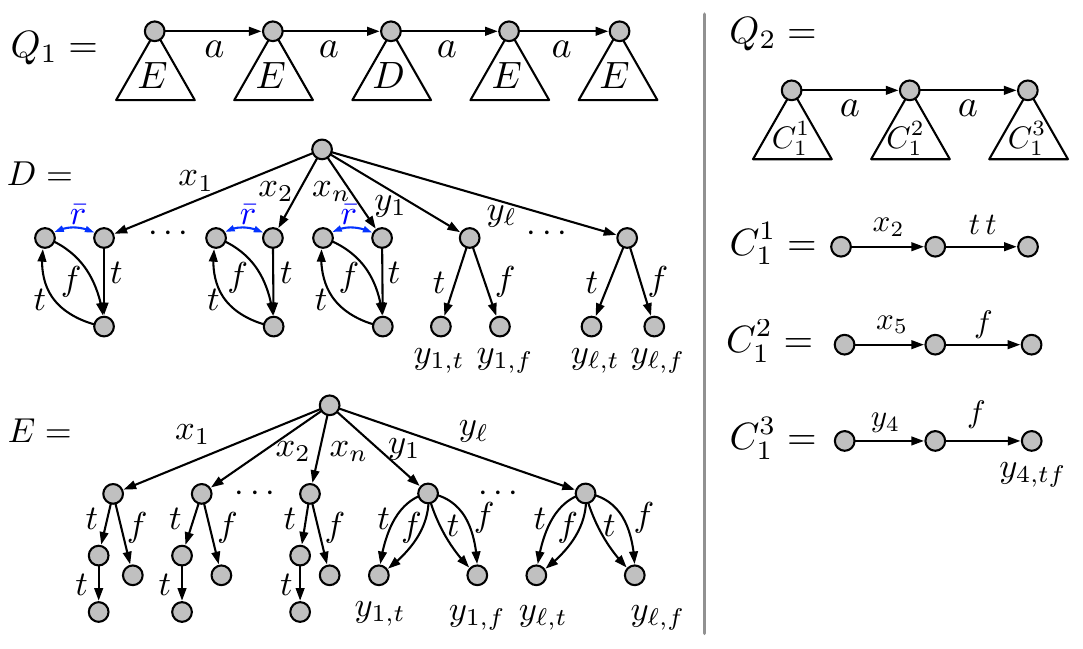}
		\caption{Left: \textnormal{Query $Q_1$ used in Theorem~\ref{thm:crpqfin-crpqfin-lower} and the gadgets $D$, and $E$ used in its definition. The $\bar r$ blue edges depict the edges of the complement of $r$ (\ie, the edges which are \emph{not} in relation $r$) for clarity.} Right: \textnormal{Example of $Q_2$ for 
		$\phi = (x_2 \lor \lnot x_5 \lor \lnot y_4)$.}}
		\label{fig:pi-p-3-hard}
	\end{figure}
\end{proof}

\paragraph{Other results}
The remaining results are summarized on Figure~\ref{fig:summary} 
and the following theorem, 
whose proofs can be found in \ifarxiv Appendix~\ref{app:other-results}\else the \fullversion\fi.
\begin{theorem}
\label{thm:other}
    \hfill
    \begin{enumerate}
        \item The \textup{\CQ/\CRPQ} and \textup{\CQ/\CQ} containment problems are \np-complete under query-injective semantics. \ifarxiv(Proposition~\ref{prop:cq-crpq-cq-inj} in Appendix~\ref{app:other-results}.)\fi
        \item The \textup{\CQ/\CQ} containment problem under atom-injective semantics is \np-complete. \ifarxiv(Corollary~\ref{cor:cq-cq-sp} in Appendix~\ref{app:other-results}.)\fi
        \item The \textup{\CRPQ/\CQ} and \textup{\CRPQfin/\CQ} containment problems are \pitwo-hard, under standard and atom-injective semantics. \ifarxiv(Proposition~\ref{prop:crpq-cq-ani-hard} in Appendix~\ref{app:other-results}.)\fi
        \item The \textup{\CRPQ/\CQ} and \textup{\CRPQfin/\CQ} containment problems are in \pitwo, under all semantics. \ifarxiv(Proposition~\ref{prop:crpq-cq-all-upper} in Appendix~\ref{app:other-results}.)\fi
        \item The \textup{\CRPQ/\CRPQfin} containment problem is \pspace-hard under all semantics. \ifarxiv(Proposition~\ref{prop:crpq-crpqfin-pspace-h} in Appendix~\ref{app:other-results}.)\fi
        \item The \textup{\CRPQ/\CRPQfin} containment problem is in \pspace under standard semantics. \ifarxiv(Proposition~\ref{prop:crpq-crpqfin-pspace-st} in Appendix~\ref{app:other-results}.)\fi
        \item The \textup{\CRPQfin/\CRPQ} containment problem is in \pitwo, under all semantics. \ifarxiv(Proposition~\ref{prop:crpqfin-crpq-upper} in Appendix~\ref{app:other-results}.)\fi
    \end{enumerate}
\end{theorem}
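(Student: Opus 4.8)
The plan is to derive all seven items from the expansion-based characterizations of containment of Section~\ref{sec:containment} (Propositions~\ref{prop:cont-char-exp-st},~\ref{prop:cont-char-qinj},~\ref{prop:cont-char-ainj}), using three recurring ingredients: collapsing a containment to a single homomorphism test when one side is a conjunctive query; bounding the size of a minimal counter-example by pumping on the NFAs of the languages involved; and an automata/abstraction argument in the spirit of the proof of Theorem~\ref{thm:crpq-crpq-nodeinj-pspace} when the right-hand side lies in \CRPQfin.

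For items (1) and (2), I would use that a conjunctive query has a single ordinary expansion (itself), so by Propositions~\ref{prop:cont-char-exp-st} and~\ref{prop:cont-char-qinj}, for conjunctive queries $Q_1\subseteq_{st}Q_2$ iff $Q_2\to Q_1$ and $Q_1\subseteq_{\qni}Q_2$ iff $Q_2\injto Q_1$. For the $\ani$ case of (2), $\Exp{\ani}(Q_1)$ is the exponentially large set of edge-respecting quotients of $Q_1$, each of polynomial size; here I would first prove a \emph{monotonicity lemma}: if $F_1'\in\Exp{\ani}(Q_1)$ is a further identification of $F_1\in\Exp{\ani}(Q_1)$, then composing an atom-injective homomorphism $Q_2\ainjto F_1$ with the quotient map $F_1\to F_1'$ stays atom-injective, because the endpoints of any edge of $F_1'$ are images of two variables forming a common atom of $Q_1$, and those are kept distinct by the defining condition of $\Exp{\ani}$; by Proposition~\ref{prop:cont-char-ainj} this means the ``hardest'' member of $\Exp{\ani}(Q_1)$ is $Q_1$ itself, so $Q_1\subseteq_{\ani}Q_2$ iff $Q_2\ainjto Q_1$. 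Each resulting test is in \np by guessing the homomorphism; for \CQ/\CRPQ under $\qni$ one additionally guesses a polynomial-size expansion of the right-hand query, whose atom-words may be assumed short since their image must fit injectively into the fixed left-hand conjunctive query. The \np-hardness of these cases is inherited from \CQ/\CQ: from subgraph isomorphism for $\qni$ (take $Q_1=G$ and $Q_2$ the $n$-clique query, so $Q_1\subseteq_{\qni}Q_2$ iff $G$ contains an $n$-clique), and from $3$-colorability for the standard and $\ani$ semantics (take $Q_1$ the triangle query $K_3$ and $Q_2=G$, so $Q_1\subseteq_{\ani}Q_2$ iff $G$ is $3$-colorable, any homomorphism $G\to K_3$ being automatically atom-injective). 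For the $\pitwo$ upper bounds of (4) and (7): when the left-hand query lies in \CRPQfin its expansions (resp.\ $\ani$-expansions) are of polynomial size, in the remaining case \CRPQ/\CQ a pumping argument tailored to the fixed right-hand conjunctive query bounds the size of a minimal counter-example polynomially, and in all cases any expansion of the right-hand query that admits the relevant homomorphism into a polynomial-size counter-example is itself of polynomial size. Non-containment then reads ``guess a polynomial-size counter-example $E_1$ and verify that no polynomial-size expansion of $Q_2$ admits the relevant (plain, injective, or atom-injective) homomorphism into $E_1$'', a $\sigmatwo$ statement, whence containment is in $\pitwo$.

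For the hardness in (3) I would reduce from $\forall\exists$-QBF: under standard semantics this is essentially \cite{FigueiraGKMNT20}, and under $\ani$ I would adapt that reduction so that the $\ani$-expansions of the left-hand query behave like its ordinary expansions --- for instance by tagging its variables with fresh private symbols so that no two can be usefully identified --- forcing the answer to coincide with the standard semantics. For (5) I would reduce acceptance of a fixed polynomial-space Turing machine $M$: let $Q_1=x\xrightarrow{L_1} y$ be a single RPQ atom whose Kleene-starred language $L_1$ comprises all syntactically well-formed candidate computations of $M$ on the input word, and $Q_2=z\xrightarrow{K} z'$ a single-atom query of \CRPQfin whose finite language $K$ collects the local error patterns (incorrect initial segment, incorrect final segment, inconsistent transition window), anchored by endmarkers so that prefix and suffix errors are detected as factors. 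Then $Q_1\not\subseteq_{st}Q_2$ iff some candidate avoids every error pattern iff $M$ accepts; and since $Q_1$ and $Q_2$ are single RPQ atoms, their expansions are simple paths, on which plain, injective and atom-injective homomorphisms between the two coincide, so the same reduction works for all three semantics. For the \pspace upper bound of (6), I would exploit that expansions of a query in \CRPQfin have bounded length: whether some expansion of $Q_2$ maps homomorphically into $E_1$, regarded as a condition on the expansion profile of $Q_1$, is recognized by an automaton with polynomial-size states (a subset-type certificate), so non-containment reduces to non-emptiness of the intersection of this polynomial-space-traversable automaton with the regular set of legal expansion profiles of $Q_1$, decidable in \pspace.

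The main obstacle I expect is the \pspace algorithm of (6): one must design the polynomial-size abstraction of a $Q_1$-expansion so that it records exactly the information needed to decide whether some $Q_2$-expansion maps in, and prove the corresponding ``compatibility captures counter-examples'' claim --- the analogue, for non-injective homomorphisms and a right-hand side in \CRPQfin, of the key Claim in the proof of Theorem~\ref{thm:crpq-crpq-nodeinj-pspace}. A secondary difficulty is the polynomial bound on a minimal counter-example used in the \CRPQ/\CQ cases of (4): under $\ani$ semantics this requires pumping not only the expansion words but also controlling the equality atoms defining the $\ani$-expansion.
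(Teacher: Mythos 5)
Most of your plan tracks the paper's appendix closely. Your ``monotonicity lemma'' for item (2) is exactly the paper's Lemma~\ref{lem:char-CQ-CQ-sp} on \emph{non-contracting} homomorphisms: a homomorphism $Q_2\to Q_1$ separating the endpoints of every atom composes with the quotient map $Q_1\to F_1$ onto any $\ani$-expansion to give another such homomorphism, because those endpoints land on atom-related variables of $Q_1$; so this part is sound and identical in substance. Items (1) and (7) are the paper's arguments (reduce to evaluation over $Q_1$ seen as a database, resp.\ guess a linear-size $\star$-expansion and invoke co-\np evaluation). For (5) you rebuild the hard instance from Turing-machine error patterns where the paper simply cites \cite[Thm~4.5]{FigueiraGKMNT20}, but the load-bearing observation --- that for Boolean single-atom queries all three semantics coincide, since a homomorphism between directed simple paths is forced to be injective --- is the same, so your version is a valid self-contained substitute. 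For (3) the paper does not modify the reduction of \cite[Thm~4.3]{FigueiraGKMNT20}: it observes that the queries there already use single-letter languages and have no self-loops, so by the non-contracting-homomorphism lemma the $\ani$ and standard containments coincide on them; your ``fresh private symbols'' retagging is unnecessary, and you would need to check it does not itself perturb the reduction.

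The genuine gaps are in (4) and (6), precisely where you flag obstacles. Your \pitwo algorithm for \CRPQ/\CQ rests on the unproved claim that a minimal counter-example has polynomial size. Naive pumping does not establish this: cutting a factor out of an atom-expansion word creates new factors of length at most $N$ straddling the cut, into which a connected component of $Q_2$ may suddenly map, destroying the counter-example; pumping only at positions sharing the same $N$-letter window repairs this but yields an exponential, not polynomial, bound (and under $\ani$ semantics you must additionally control the identifications, as you note). The paper avoids bounding the counter-example altogether: it guesses a \emph{compressed} expansion $E_1^{\#}$ in which each long atom-expansion word is replaced by its $N$-prefix, a placeholder $\#$, and its $N$-suffix, picks a connected component $\hat Q_2$ of $Q_2$, and exploits that $\hat Q_2$ (having at most $N$ atoms) must embed either into the $N$-neighbourhood of a $Q_1$-variable --- checked in co-\np on the polynomial-size $E_1^{\#}$ --- or wholly inside a single atom expansion --- checked as a separate condition on the middle word $w$ with $uwv\in L$. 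For (6), where $Q_2\in\CRPQfin$ makes $N$ polynomial, the same scheme applies and the ``wholly inside one atom'' test becomes an instance of intersection emptiness of regular languages, which is \pspace-complete; no profile-automaton or abstraction machinery in the style of Theorem~\ref{thm:crpq-crpq-nodeinj-pspace} is needed. Your proposed automaton over expansion profiles would have to carry as state the set of partial embeddings of $Q_2$-expansion fragments realized so far, which is a priori exponential; the polynomial-size state you assert is exactly the construction that is missing.
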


% !TeX root = main.tex

\section{Discussion and Outlook}
\label{sec:discussion}

We have defined two possible injective semantics for CRPQs, providing two ways to extend the simple-path semantics of RPQs to the realm of CRPQs. On these semantics, we have shown that the containment problem differs drastically from the standard semantics, in some cases improving the complexity, and in some cases making the problem directly undecidable.

%\color{red}
Both of these semantics are natural generalizations of simple-path semantics of RPQs. 
For instance, if we revert the role of edges and nodes, CRPQs under atom-injective semantics is present in the popular graph database Neo4j. While query-injective semantics is less common in practice, we still believe that this semantics, and in particular, looking for disjoint paths, may be useful for users and may provide an interesting feature for graph query languages. 
Further empirical investigation is needed to assess the practical usefulness of these two semantics. 
%\color{black}

While the fragments of the class of CRPQs we have studied are probably the three most fundamental subclasses, there are other more fine-grained fragments based on the form of regular expressions used in the CRPQs, which are practically very relevant \cite{BonifatiMT-vdlbj20,BonifatiMT-www19}. These fragments have been studied under standard semantics \cite{FigueiraGKMNT20}, and it would be interesting to understand how they behave under injective semantics. A different direction is to consider larger classes of queries, understanding how injective semantics are extended, and the impact on the bounds for containment -- such as CRPQ with two-way navigation and union (UC2RPQ) \cite{CGLV00}, Extended CRPQ (ECRPQ) \cite{DBLP:journals/tods/BarceloLLW12}, or Regular Queries \cite{DBLP:journals/mst/ReutterRV17}.

We have limited our investigation to (extensions of) the simple-path semantics. Another possibility is to consider \emph{trail} semantics, which can be extended in a similar way to CRPQs, obtaining again two alternative semantics: query-edge-injective and atom-edge-injective, based on the notion of edge-injective homomorphisms. Many of our results can be extended to these semantics, and we suspect that complexities for query-edge-injective and query-injective coincide on all studied fragments, and likewise for atom-edge-injective and atom-injective.
%\color{red}
In particular, while neither the undecidability nor the \pspace upper-bound seem to go through when simply reversing the role of nodes and edges, we believe that both proofs can be adapted.
% \diego{Regarding lifting the results for trail semantics (Section 7), I am wondering is this not simply a matter of reversing the role of nodes and edges in the graph?\\
% ANSWER: Unfortunately, neither the undecidability nor the PSpace upper-bound go through when simply reversing the role of nodes and edges. However, we have concrete ideas on how to adapt the proofs for trail semantics, following the same proof strategies.}
%\color{black}

One possible research direction is on another fundamental static analysis problem for CRPQs, namely the \emph{boundedness} problem, which checks whether a CRPQ is equivalent to a finite union of CQs. 
This problem is decidable for standard semantics \cite{BarceloF019}.

\bibliographystyle{ACM-Reference-Format}
\balance
\bibliography{long,bib}

\ifarxiv
  \newpage
  \appendix
  
\section{Appendix to Section~\ref{sec:preliminaries}}

\begin{proof}[Proof of Proposition~\ref{prop:semantics-st-qinj}]
The result is well-known for the case of standard semantics so we focus on query-injective semantics. 
Suppose $G=(V,E)$ and $Q$ is of the form $Q(\bar{x}) = A_1\land \dots\land A_m$. 
Recall $Q(\bar{x})$ is equivalent to a union $\mathcal{Q}_{\epsilon-free}$ of $\epsilon$-free CRPQs. 
Assume first that $\bar{v}\in Q(G)^{\qni}$. 
Then $\bar{v}\in Q'(G)^{\qni}$ for some $Q'\in \mathcal{Q}_{\epsilon-free}$ of the form $Q'(\bar{z}) = A'_1\land \dots\land A'_k$. 
Without loss of generality, assume $A_{k+1},\dots, A_m$ are precisely the atoms of $Q$ collapsed in the construction 
of $Q'$ (that is, we take the $\epsilon$-word on those atoms). 
In particular, the language of the atom $A'_i$ is the language of $A_i$ minus $\epsilon$, for $i\in \{1,\dots, k\}$. 
There exists then an injective mapping $\mu$ from $\vars(Q')$ to $V$ (satisfying $\mu(\bar{z}) = \bar{v}$), 
and for each atom $A'_i = x_i \xrightarrow{L_i'} y_i$, 
a simple path $\pi_i$ from $\mu(x_i)$ to $\mu(y_i)$ such that distinct paths do not share internal nodes. 
We can take the expansion $E\in\Exp{}(Q)$ produced by the expansion profile of $Q$ that maps $A_i$ to the word $\epsilon$, if $i\in\{k+1,\dots, m\}$, 
and maps $A_i$ to the label of $\pi_i$, for $i\in\{1,\dots, k\}$. We can define a homomorphism $h: E\to (G,\bar{v})$ 
by mapping the non-internal variables of $E$ according to $\mu$ and each atom expansion to the corresponding simple path in $G$. 
As the paths do not share internal nodes this is an injective homomorphism, and hence $E\injto (G, \bar{v})$.

For the other direction, suppose that $h: E\injto (G, \bar{v})$ for some expansion $E\in\Exp{}(Q)$. 
Let $\varphi$ be the expansion profile generating $E$. We construct an $\epsilon$-free CRPQ $Q'\in \mathcal{Q}_{\epsilon-free}$ 
as follows: if $\varphi$ assigns the word $\epsilon$ to $A_i$, then collapse $A_i$, otherwise, remove $\epsilon$ from its language. 
Suppose that $Q'$ is of the form $Q'(\bar{z}) = A'_1\land \dots\land A'_k$. 
We have that  $\bar{v}\in Q'(G)^{\qni}$. Indeed, 
define the mapping $\mu:\vars(Q')\to V$ as the restriction of $h$ to the non-internal variables of $E$, 
and define the path $\pi_i$ to be the image via $h$ of the expansion of the atom $A'_i$ in $E$. 
Since $h$ is injective, the $\pi_i$'s are simple paths and do not share internal nodes. Moreover, 
the mapping $\mu$ is injective. Hence $\bar{v}\in Q'(G)^{\qni}$, which implies that $\bar{v}\in Q(G)^{\qni}$ as required.

\end{proof}

\begin{proof}[Proof of Proposition~\ref{prop:semantics-ainj}]
The proof is analogous to the proof of Proposition~\ref{prop:semantics-st-qinj}, replacing injective homomorphisms
by atom-injective homomorphisms. 
\end{proof}

\section{Appendix to Section~\ref{sec:containment}}

\begin{proof}[Proof of Proposition~\ref{prop:cont-char-qinj}]
The proof is identical to the proof of Proposition~\ref{prop:cont-char-exp-st} replacing homomorphisms by injective homomorphisms. 
For the sake of completeness, we give the proof below.

Assume $Q_1\subseteq_{\qni} Q_2$ and take $E_1(\bar{y})\in \Exp{}(Q_1)$. 
Recall that we can see the query $E_1(\bar{y})$ as a graph database 
(of the same name) 
where each atom is interpreted as an edge. 
We have $E_1 \injto (E_1, \bar{y})$ and hence $\bar{y}\in Q_1(E_1)^{\qni}$. 
By hypothesis, $\bar{y}\in Q_2(E_1)^{\qni}$, that is, there is $E_2\in \Exp{}(Q_2)$ such that $E_2\injto (E_1, \bar{y})$, 
i.e, $E_2\injto E_1$. 
For the other direction, assume $\bar{v}\in Q_1(G)^{\qni}$ for some graph database $G$ and tuple $\bar{v}$ of nodes. There is an expansion $E_1\in \Exp{}(Q_1)$ 
such that $E_1 \injto (G, \bar{v})$. By hypothesis, there exists $E_2\in \Exp{}(Q_2)$ with $E_2\injto E_1$. By composition, 
we obtain $E_2 \injto (G, \bar{v})$, hence $\bar{v}\in Q_2(G)^{\qni}$. 
\end{proof}

\begin{proof}[Proof of Lemma~\ref{lemma:ainj-expansions}]
We start with (1)$\Rightarrow$(2). Let $h$ be a witness for $E\ainjto (G, \bar{v})$. 
Define the query $\widetilde{F}=E \land J$ where $J$ is the conjunction of all equality atoms $x=y$ with $x,y\in\vars(E)$ and $h(x)=h(y)$. 
Since $h$ is atom-injective, we have $F:=\widetilde{F}^{\collapse}\in \Exp{\ani}(Q)$. 
Moreover, there is $g:F\injto (G,\bar{v})$ as required. 
Indeed, let $\Phi:\vars(E)\to \vars(F)$ be the canonical renaming. 
For each $x\in\vars(F)$, 
we set $g(x) = h(x')$, where $x'$ is any variable in $E$ with $\Phi(x')=x$. By construction, $g$ is an injective homomorphism. 
Conversely, suppose $F=\widetilde{F}^{\collapse}$ for $\widetilde{F}=E\land J$ with expansion $E\in \Exp{}(Q)$ and let $g$ be a witness for $F\injto (G, \bar{v})$. 
Let $\Phi:\vars(E)\to \vars(F)$ be the canonical renaming. Observe that $\Phi$ is actually a homomorphism from $E$ to $F$. 
By definition of $\ani$-expansions, $\Phi$ is an atom-injective homomorphism from $E$ to $F$. 
By composing $\Phi$ with $g$ we obtain $E\ainjto (G, \bar{v})$. The case of $E'$ instead of $G$ is analogous. 
\end{proof}

\begin{proof}[Proof of Proposition~\ref{prop:cont-char-ainj}]
By Lemma~\ref{lemma:ainj-expansions} it suffices to consider the equivalence between items \eqref{it:prop:cont-char-ainj:1} and \eqref{it:prop:cont-char-ainj:2}.
Suppose first $Q_1\subseteq_{\ani} Q_2$ and take $F_1\in \Exp{\ani}(Q_1)$. 
We can see $F_1(\bar{y})$ as a graph database (of the same name) where each atom is interpreted as an edge. 
We have $F_1\injto (F_1, \bar{y})$ and, by Corollary~\ref{coro:a-inj-sem}, we obtain $\bar{y}\in Q_1(F_1)^{\ani}$. 
By hypothesis, $\bar{y}\in Q_2(F_1)^{\ani}$, that is, there is $E_2\in \Exp{}(Q_2)$ such that $E_2\ainjto (F_1, \bar{y})$, 
i.e, $E_2\ainjto F_1$. 
For the other direction, assume $\bar{v}\in Q_1(G)^{\ani}$ for some graph database $G$ and tuple $\bar{v}$ of nodes. 
By Corollary~\ref{coro:a-inj-sem}, there is $F_1\in \Exp{\ani}(Q_1)$ and $g:F_1\injto (G,\bar{v})$. 
By hypothesis, there is $E_2\in \Exp{}(Q_2)$ 
and $f: E_2\ainjto F_1$.
By composing $f$ with $g$ we obtain that $E_2\ainjto (G,\bar{v})$. 
We conclude that $\bar{v}\in Q_2(G)^{\ani}$.
\end{proof}

\section{Full proof of Theorem~\ref{thm:crpq-crpq-nodeinj-pspace}}
\label{sec:pr:lem:crpq-crpq-nodeinj-pspace}

	Let $Q_1(\bar x_1), Q_2(\bar x_2)$ be CRPQs; we want to test $Q_1 \subseteq_{\qni} Q_2$.

	We often blur the distinction between a CRPQ and an edge-labeled graph, whose edges are regular expressions. Hence, the \defstyle{degree} \resp{\defstyle{in-degree}; \defstyle{out-degree}{}} of a variable is the number of atoms containing it \resp{as a second variable; as a first variable}.

	\emph{High-level idea.}
	We first give a high-level description of the proof.
	We will work with polynomial-sized `abstractions' of expansions of $Q_1$. These abstractions contain, for each atom $A$ of $Q_1$, all the information on how the languages of $Q_2$ can be mapped into it. For example, it includes the information ``there is a partial run from state $q$ to state $q'$ of the NFA $\+A_L$ of language $L$ from $Q_2$ reading the expansion word of $A$'', or ``there is a partial run from the initial state of $\+A_L$ to $q$ reading some suffix of the expansion word of $A$''. Such an abstraction contains all the necessary information needed to retain from an expansion to check if it is a counter-example. Indeed, any expansion with the same abstraction as a counter-example will be a counter-example. 

	In order to test whether an abstraction $\alpha$ abstracts a counter-example, we need to consider all possible ways of injectively mapping an expansion of $Q_2$ to an expansion of $Q_1$. We call this \emph{morphism type}, which contains the information of where each atom expansion of $Q_2$ is mapped. For example, we can have the information that the path to which the expansion of atom $A$ of $Q_2$ is mapped starts at some internal node of the expansion of atom $A_1= x \xrightarrow{L_1} y$ of $Q_1$ then arrives to variable $y$ with state $q$ and continues reading the full expansion of atom $A_2 = y \xrightarrow{L_2} z$ arriving to variable $z$ with state $q'$, and it ends its journey by reading a prefix of the expansion of atom $A_3 = z \xrightarrow{L_3} t$ arriving to a final state at some internal node. For each morphism type, we can check if it is \emph{compatible} with an abstraction by checking, for example, that $\alpha$ indeed contains the information of having a partial run from $q'$ to a final state reading a prefix of the expansion of $A_3$. 
	
	The important property is that an abstraction $\alpha$ is compatible with a morphism type $\tau$ if{f} for every expansion $E_1$ of $Q_1$ with abstraction $\alpha$ there is an expansion $E_2$ of $Q_2$ with morphism type $\tau$ such that  $E_2 \injto E_1$. Hence, the \pspace algorithm simply guesses $\alpha$ and checks that $\alpha,\tau$ are not compatible, for every possible morphism type $\tau$. We now give some more details for these ideas. 

	\smallskip

	\begin{remark}\label{rk:one-var-per-atom}
		Any CRPQ is \qni-equivalent to one in which there is no variable $y$ incident to two atoms, with in-degree and out-degree equal to one. This is because $x \xrightarrow{L} y \land y \xrightarrow{L'}x'$ is equivalent (under \qni or standard semantics) to $x \xrightarrow{L \cdot L'} x'$ (assuming $y \not\in\set{x,x'}$).
	\end{remark}
	Due to Remark~\ref{rk:one-var-per-atom}, we can assume that the mapping from the expansion of $Q_2$ to the expansion of $Q_1$ is such that no two variables can be mapped to two internal nodes of an atom expansion.
	
	\begin{remark}\label{rk:no-epsilon}
		For every CRPQ $Q$ one can produce an equivalent union $Q'$ of CRPQs such that (i) no language of $Q'$ contains $\epsilon$ and (ii) there are no two distinct atoms $x \xrightarrow{L} y$ and $x \xrightarrow{L'} y$ in $Q'$ with some single-letter word $a \in \A$ in $L \cap L'$. Further, $Q'$ is an exponential union of polynomial-sized CRPQs, and testing whether a CRPQ is in the union is in \pspace.
	\end{remark}

	\emph{Terminology.} By \defstyle{path} (of an expansion or directed graph) we mean a directed path, that is, a sequence of edges of the form $\pi = (v_0,v_1) (v_1,v_2) \dotsb (v_{n-1},v_n)$. An \defstyle{internal node} of a path is any node excluding the initial and final ones (\ie, $v_0$ and $v_n$).
	For a path $\pi$, and a morphism $h$, we denote by $h(\pi)$ the path obtained by replacing each vertex $v$ with $h(v)$. 
	For a (directed) path $\pi$, we denote by \defstyle{$src(\pi)$ \resp{$tgt(\pi)$}} the  first \resp{last} vertex.
	A \defstyle{subpath} of a path $\pi$ as above, is a path of the form $(v_i,v_{i+1}) \dotsb (v_{j},v_{j+1})$ where $0 \leq i \leq j <n$ (in particular of length at least 1).
	An \defstyle{infix \resp{prefix, suffix}} of a path of $\pi$ is a subpath which does not contain $src(\pi)$ or $tgt(\pi)$  \resp{contains $tgt(\pi)$ and excludes $src(\pi)$, contains $src(\pi)$ and excludes $tgt(\pi)$}.
	We often blur the distinction between regular languages, regular expressions, NFA, and CRPQ atoms containing a regular language. For instance, we may write ``$q$ is a final state of atom $x \xrightarrow{L} y$'', meaning that it is a final state of the NFA representing $L$.

	\emph{Restriction of queries.}
	To simplify the proof, we will assume that $Q_1,Q_2$ have the following properties:
	\begin{itemize}
		\item there is no $\epsilon$ in any of the languages;
		\item there are no two atoms $x \xrightarrow{L} y$ and $x \xrightarrow{L'} y$ with some single-letter word $a \in \A$ in $L \cap L'$;
		\item the queries are connected.
	\end{itemize}
	We will later show how to lift these assumptions.

	\begin{remark}\label{rk:nodeinj-two-vars-in-atom}
		As a consequence of Remark~\ref{rk:one-var-per-atom}, for the \CRPQ/\CRPQ containment problem of $Q_1 \subseteq_\qni Q_2$ under \qni semantics, and assuming the properties above, we can restrict our attention to injective homomorphisms $E_1 \injto E_1$ (where $E_i \in \Exp{}(Q_i)$) such that if two distinct variables  $x,y \in \vars(Q_1)$ are mapped to distinct internal nodes of an atom expansion, then they must both be of degree 1.	That is, in view of the characterization of Proposition~\ref{prop:cont-char-qinj}, $Q_1 \subseteq_\qni Q_2$ if{f} for every $E_1 \in \Exp{}(Q_1)$ there is $E_2 \in \Exp{}(Q_2)$ such that $h:E_2 \injto E_1$, where $h$ has the property above.
	\end{remark}

	Without any loss of generality, let us assume that all the NFA of the languages of $Q_2$ have pairwise disjoint sets of states, and that they are complete and co-complete (\ie, for every letter and state there is an incoming and an outgoing transition with that letter). Let us consider $\+A_{Q_2}$ as an automaton having as transitions the (disjoint) union of all the transitions for the automata of $Q_2$. In this context we will denote by \emph{initial state} \resp{\emph{final state}} a state which is initial \resp{final} in the automaton from which it comes. 

	An \defstyle{abstraction of an expansion} $E_1$ of $Q_1$ is a mapping $\alpha$ from the atoms of $Q_1$ to subsets of $P$, where 
	\newcommand{\runST}[2]{\tup{#1\text{-}#2}}%
	\newcommand{\runSxT}[2]{\tup{#1\text{-}\!{|}\!\text{-}#2}}%
	\newcommand{\runSxxT}[2]{\tup{#1\text{-}\!{|}{\cdot}{\cdot}{|}\!\text{-}#2}}%
	\newcommand{\runxSTx}[2]{\tup{{\cdot}{\cdot}#1\text{-}#2{\cdot}{\cdot}}}%
	\begin{align*}
		P ={} 	& \set{\runST{q}{q'} : q,q' \text{ states of } \+A_{Q_2}} \cup {}
				\set{\runSxT{q}{q'} : q,q' \text{ states of } \+A_{Q_2}} \cup {}\\
				&\set{\runSxxT{q}{q'} : q,q' \text{ states of } \+A_{Q_2}} \cup {}
				\set{\runxSTx{q}{q'} : q,q' \text{ states of } \+A_{Q_2}}
	\end{align*} such that for every expansion $x \xrightarrow{w} y$ of an atom $A$ we have: 
	\begin{itemize}
		\item $\runST{q}{q'}\in \alpha(A)$ if there is a partial run of $\+A_{Q_2}$ from $q$ to $q'$ reading $w$; 
		\item $\runSxT{q}{q'}\in \alpha(A)$ if for some $w = u \cdot v$ with $u,v \neq \epsilon$ there is a partial run of $\+A_{Q_2}$ from $q$ to a final state reading $u$, and a partial run from an initial state to $q'$ reading $v$; 
		\item $\runSxxT{q}{q'}\in \alpha(A)$ if for some $w = u \cdot s \cdot v$ with $s,u,v \neq \epsilon$ there is a partial run of $\+A_{Q_2}$ from $q$ to a final state reading $u$, and a partial run of $\+A_{Q_2}$ from an initial state to $q'$ reading $v$;
		\item $\runxSTx{q}{q'}\in \alpha(A)$ if for some $w = u \cdot s \cdot v$ with $s,u,v \neq \epsilon$ there is a partial run of $\+A_{Q_2}$ from $q$ to $q'$ reading $s$.
	\end{itemize}
	Observe that the size of any abstraction is polynomial in $Q_1,Q_2$.

	The set of \defstyle{abstractions of $Q_1$}, is the set of abstractions of all its expansions. 
	\begin{claim}\label{cl:abstraction-pspace}
		Testing whether a mapping is an abstraction of $Q_1$ is in \pspace.
	\end{claim}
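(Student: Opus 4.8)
The plan is to decide the question atom by atom and, for each atom, to reduce it to non-emptiness of an exponentially large but succinctly represented automaton. First I would observe that whether $\alpha$ is an abstraction of $Q_1$ depends on each atom of $Q_1$ separately: an expansion $E_1 \in \Exp{}(Q_1)$ is obtained by choosing, independently for each atom $A = x \xrightarrow{L} y$ of $Q_1$, a word $w_A \in L$ (the subsequent collapsing of variables is irrelevant to $\alpha$), and by definition $\alpha$ is the abstraction of $E_1$ exactly when $\alpha(A)$ equals $\mathrm{prop}(w_A) := \set{p \in P : w_A \text{ satisfies the defining condition of } p}$ for every atom $A$. Hence $\alpha$ is an abstraction of $Q_1$ if{f} for every atom $A = x \xrightarrow{L} y$ of $Q_1$ there is some $w \in L$ with $\mathrm{prop}(w) = \alpha(A)$, and it remains to show that this last test is in \pspace.

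Fix an atom $A = x \xrightarrow{L} y$ and write $S := \alpha(A)$. For each $p \in P$, the language $R_p := \set{w \in \A^* : w \text{ satisfies the defining condition of } p}$ is recognised by an NFA $\+B_p$ of polynomial size. For $p = \runST{q}{q'}$ this is just $\+A_{Q_2}$ with initial state $q$ and unique final state $q'$. For the other three shapes $\+B_p$ guesses the relevant cut point(s): for $\runSxT{q}{q'}$ it reads a non-empty block while simulating $\+A_{Q_2}$ from $q$, may switch phase upon reaching a final state (consuming at least one further letter, restarting the simulation from an initial state), and accepts if the simulation is at $q'$ at the end of the input; for $\runSxxT{q}{q'}$ one additionally inserts a non-empty ``spacer'' block between the two simulated blocks; and for $\runxSTx{q}{q'}$ one reads a non-empty prefix block, then simulates $\+A_{Q_2}$ from $q$ over a non-empty block ending when the state $q'$ is reached, and then reads a non-empty suffix block. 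In each case $\+B_p$ has $O(|\+A_{Q_2}|)$ states.

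Now $\set{w \in L : \mathrm{prop}(w) = S} = L \cap \bigcap_{p \in S} R_p \cap \bigcap_{p \in P \setminus S} \overline{R_p}$, and I would recognise it by the product automaton $\+C_A$ whose components are: the polynomial-size NFA for $L$; the polynomial-size NFAs $\+B_p$ for $p \in S$; and, for each $p \in P \setminus S$, the automaton obtained from $\+B_p$ by the subset construction and then complemented, which recognises $\overline{R_p}$. Each component is either of polynomial size or has exponentially many states, but in the latter case every state is a subset of a polynomial set and so admits a polynomial-size encoding and a polynomial-time computable transition relation; hence so does $\+C_A$, which therefore has at most exponentially many states, each of polynomial size, with polynomial-time checkable initial-state, final-state, and transition predicates. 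Testing non-emptiness of such a succinctly represented automaton is in \pspace: guess an accepting path one letter at a time, keeping in memory only the current state and a binary counter bounded by the number of states (both of polynomial length), and accept when a final state is reached. Iterating this over the (linearly many) atoms of $Q_1$ and reusing space yields the desired \pspace bound.

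The step I expect to need the most care is the \emph{exact} requirement $\mathrm{prop}(w) = \alpha(A)$: the conjuncts $\overline{R_p}$ for $p \notin S$ are complements of NFA languages and hence a priori of exponential size, and the conditions $\runSxT{}{}$, $\runSxxT{}{}$, $\runxSTx{}{}$ each involve a cut point and so seem to require reasoning about prefixes and suffixes of $w$ simultaneously; the point is that both difficulties are absorbed by representing the positive conjuncts as polynomial-size NFAs, complementing the negative ones via the subset construction, and taking the succinct product, all of whose states remain of polynomial description size -- exactly the regime in which non-emptiness stays in \pspace. One should also be slightly careful with the boundary cases of the cut-point constructions (enforcing that the relevant blocks, and $w$ itself, are non-empty), but this is routine.
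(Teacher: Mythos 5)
Your proof is correct and follows essentially the same route as the paper: the paper's sketch also guesses each atom's expansion word one letter at a time while ``keeping track of all possible partial runs'' (which is exactly your on-the-fly subset/product construction over the $\+B_p$'s) and bounds the search by a polynomially-sized counter for an exponential witness (which is exactly the state count of your product automaton, obtained in the paper via the standard pumping argument). Your write-up merely makes explicit the handling of the negative conditions $p \notin \alpha(A)$ via complementation, which the paper leaves implicit in ``check that each atom $A$ has abstraction $\alpha(A)$''.
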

	Indeed, a standard pumping argument shows that if $\alpha$ is an abstraction of $Q_1$, it has a witnessing expansion of at most exponential size. Using this bound, an on-the-fly \pspace algorithm can guess the expansion for each atom  and check that each atom $A$ has abstraction $\alpha(A)$. This is done by guessing one letter at a time while keeping track of all possible partial runs it contains. The procedure also keeps a poly-sized counter to keep track of the size of the expansion being produced, and rejects the computation whenever the size exceeds the exponential bound.

	\smallskip

	\newcommand{\lGraph}{\mathbf{G}}
	Consider the directed graph $\lGraph$ consisting of replacing each atom $A = x \xrightarrow{L} y$ of $Q_1$ with a path $\pi^\lGraph_A$ of length 3 (\ie, adding two new internal vertices). 
	A \defstyle{morphism type} from $Q_2(\bar x_2)$ to $Q_1(\bar x_1)$ is a pair $(H,h)$ such that $h: H \injto \lGraph$, and $H$ is a graph resulting from replacing each atom $A =  x \xrightarrow{L} y$ of $Q_2$ with a (non-empty) path $\pi^H_A$ from $x$ to $y$. Further, we also ask that free variables are mapped accordingly, that is, $h(\bar x_2)=\bar x_1$.
	Figure~\ref{fig:ex:GH} contains an example of a morphism type $(H,h)$.
	\begin{figure}
		\includegraphics[scale=.8]{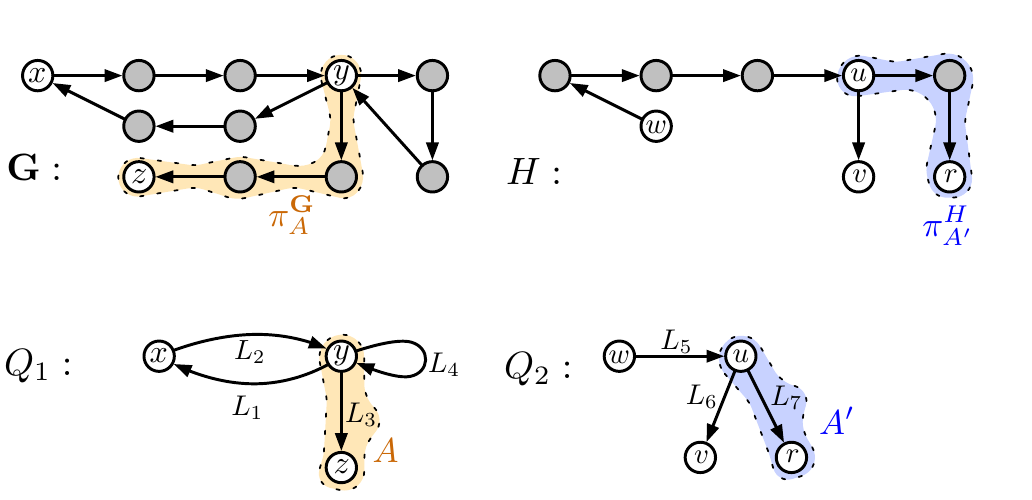}
		\caption{Example of definition of $\lGraph$ and morphism type $(H,h)$ from $Q_1,Q_2$. In this case, the injective morphism $h$ from $H$ to $\lGraph$ maps each node of $H$ to the node in the same position on $\lGraph$ (\eg, the lower-right node $r$ of $H$ maps to the lower-right node of $\lGraph$).}
		\label{fig:ex:GH}
	\end{figure}
	It follows that, by injectivity, the size of $H$ in any morphism type is linearly bounded on $Q_1$.
	\begin{claim}\label{cl:morphism-pspace}
		Testing whether a pair $(H,h)$ is a morphism type is in \pspace.
	\end{claim}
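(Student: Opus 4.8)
The plan is to observe that a morphism type is an object of polynomial size, so that verifying its defining conditions reduces to a handful of polynomial-time inspections, which is certainly enough for membership in \pspace.

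First I would make the size bound explicit. The graph $\lGraph$ has one vertex per variable of $Q_1$ plus two fresh internal vertices per atom of $Q_1$, so $|V(\lGraph)|$ is linear in $|Q_1|$. Since $h \colon H \injto \lGraph$ is injective we get $|V(H)| \le |V(\lGraph)|$, and since $H$ is the union of the $|Q_2|$ paths $\pi^H_A$ (one per atom $A$ of $Q_2$), each such path has length at most $|V(\lGraph)|$. A candidate morphism type is by definition presented together with the family of paths $(\pi^H_A)_A$ (these are needed, e.g., by the later compatibility check, which refers to infixes, prefixes and suffixes of $\pi^H_A$), so the whole object $(H,h)$ together with this decomposition data has polynomial size in $|Q_1|,|Q_2|$.

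The verification then splits into three kinds of checks, each a direct, polynomial-time inspection of the input. (1) \emph{$H$ has the right shape}: for every atom $A = x \xrightarrow{L} y$ of $Q_2$, the recorded $\pi^H_A$ is a directed path of $H$ of length at least $1$ from $x$ to $y$; the internal vertices of the various $\pi^H_A$ are pairwise disjoint and disjoint from $\vars(Q_2)$; and $H$ is exactly the union, vertex-wise and edge-wise, of these paths. (2) \emph{$h$ is an injective homomorphism $H \to \lGraph$}: $h$ is defined on all of $V(H)$ with image in $V(\lGraph)$, it is injective, and for every edge $(u,v)$ of $H$ the pair $(h(u),h(v))$ is an edge of $\lGraph$. (3) \emph{Free variables}: $h(\bar x_2) = \bar x_1$. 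Each of (1)--(3) is decidable in polynomial time in $|Q_1|$, $|Q_2|$ and $|(H,h)|$; combining them yields a polynomial-time, hence \pspace, decision procedure.

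I do not anticipate a genuine obstacle: the only substantive point is the linear-size bound on $H$ via injectivity into $\lGraph$, which is exactly what makes ``inspect the object'' a polynomial-time operation; the rest is bookkeeping. (If one did not want to assume the decomposition $(\pi^H_A)_A$ is given as part of the input, one could note that the internal vertices of $H$ are precisely those outside $\vars(Q_2)$, so the maximal chains of internal vertices sitting between two members of $\vars(Q_2)$, extended by their two incident edges, are the only possible candidate paths, and one then matches them against the atoms of $Q_2$; this search is again polynomial.)
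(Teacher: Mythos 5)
Your argument is correct and matches the paper's intended justification: the paper gives no explicit proof of this claim, relying only on the preceding observation that injectivity of $h$ into $\lGraph$ bounds $|H|$ linearly in $Q_1$, which is exactly the key point you isolate before reducing the rest to polynomial-time bookkeeping. Nothing is missing.
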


	A morphism type $(H,h)$ is \defstyle{compatible} with an abstraction $\alpha$ if there is a mapping $\lambda$ from the internal nodes of paths $\pi^H_A$ to states of $A$, such that for every atom $A=x_1 \xrightarrow{L} x_2$ of $Q_1$, 
	\begin{itemize}
		\item if there is an atom $A'$ of $Q_2$ and 
			an infix subpath $\pi$ of $\pi_{A'}^H$ 
			with 
			$h(\pi) = \pi^\lGraph_A$, then 
			$\runST{\lambda(src(\pi))}{\lambda(tgt(\pi))} \in \alpha(A)$ 
			(corresponding to case 1 in Figure~\ref{fig:cases-compatible});
		\item if there is an atom $A'$ of $Q_2$ and 
			a suffix subpath $\pi$ of $\pi_{A'}^H$ 
			with 
			$h(\pi) = \pi^\lGraph_A$, then 
			$\runST{\lambda(src(\pi))}{q_F} \in \alpha(A)$ 
			where $q_F$ is a final state of $A'$
			(case 2 in Fig.~\ref{fig:cases-compatible});
		\item if there is an atom $A'$ of $Q_2$ and 
			a suffix subpath $\pi$ of $\pi^H_{A'}$ 
			with 
			$h(\pi)$ being a prefix of $\pi^\lGraph_A$, then 
			$\runSxT{\lambda(src(\pi))}{q} \in \alpha(A)$ for some $q$ 
			(case 3 in Fig.~\ref{fig:cases-compatible});
		\item if there are atoms $A'_1,A'_2$ of $Q_2$, 
			a suffix subpath $\pi_1$ of $\pi^H_{A'_1}$ and 
			a prefix subpath $\pi_2$ of $\pi^H_{A'_2}$  
			with 
			$tgt(\pi_1)=src(\pi_2)$ and $h(\pi_1 \pi_2)=\pi^\lGraph_A$, then 
			$\runSxT{src(\pi_1)}{tgt(\pi_2)} \in \alpha(A)$ 
			(case 4 in Fig.~\ref{fig:cases-compatible});
		\item if there are atoms $A'_1,A'_2$ of $Q_2$ and a suffix subpath $\pi$ of $\pi_{A'_1}^H$ with 
			$tgt(\pi)=src(\pi_{A'_2}^H)$ and
			$h(\pi \pi_{A'_2}^H) = \pi^\lGraph_A$, then
			$\runSxT{\lambda(src(\pi))}{q_F} \in \alpha(A)$ 
			where $q_F$ is a final state of $A'_2$
			(case 5 in Fig.~\ref{fig:cases-compatible});
		\item if there are atoms $A'_1,A'_2$ of $Q_2$,
			a suffix subpath $\pi_1$ of $\pi_{A'_1}^H$, and
			a prefix subpath $\pi_2$ of $\pi_{A'_2}^H$
			with 
			$tgt(\pi_1) \neq src(\pi_2)$,
			$h(\pi_1)$ is a prefix of $\pi^\lGraph_A$ and
			$h(\pi_2)$ is a suffix of $\pi^\lGraph_A$, then
			$\runSxxT{\lambda(src(\pi_1))}{\lambda(tgt(\pi_2))} \in \alpha(A)$ 
			(case 6 in Fig.~\ref{fig:cases-compatible});
		\item  if there are atoms $A'_1, A'_2$ of $Q_2$, and 
			a suffix subpath $\pi$ of $\pi^H_{A'_1}$ 
			where 
			$h(\pi)$ is a prefix of $\pi^\lGraph_A$, 
			$h(\pi^H_{A'_2})$ is a suffix of $\pi^\lGraph_A$ and 
			$tgt(\pi) \neq src(\pi^H_{A'_2})$, then 
			$\runSxxT{\lambda(src(\pi))}{q_F} \in \alpha(A)$ 
			for some final state $q_F$ of $A'_2$ 
			(case 7 in Fig.~\ref{fig:cases-compatible});
		\item if there is an atom $A'$ of $Q_2$ and 
			a prefix subpath $\pi$ of $\pi_{A'}^H$ 
			with 
			$h(\pi) = \pi^\lGraph_A$, then 
			$\runST{q_0}{\lambda(tgt(\pi))} \in \alpha(A)$ 
			where $q_0$ is an initial state of $A'$
			(case 8 in Fig.~\ref{fig:cases-compatible});
		\item if there is an atom $A'$ of $Q_2$ 
			with 
			$h(\pi_{A'}^H) = \pi^\lGraph_A$, then 
			$\runST{q_0}{q_F} \in \alpha(A)$ 
			where $q_0$/$q_F$ is an initial/final state of $A'$
			(case 9 in Fig.~\ref{fig:cases-compatible});
		\item if there is an atom $A'$ of $Q_2$  
			with 
			$h(\pi_{A'}^H)$ a prefix of $\pi^\lGraph_A$, then
			$\runSxT{q_0}{q} \in \alpha(A)$ 
			where $q_0$ is an initial state of $A'$ and $q$ is any state.
			(case 10 in Fig.~\ref{fig:cases-compatible});
		\item if there are atoms $A'_1,A'_2$ of $Q_2$ and
			a prefix subpath $\pi_2$ of $\pi_{A'_2}^H$
			with 
			$tgt(\pi_{A'_1}^H) = src(\pi_2)$,
			$h(\pi_{A'_1}^H \pi_2) = \pi^\lGraph_A$, then
			$\runSxT{q_0}{\lambda(tgt(\pi_2))} \in \alpha(A)$ 
			where $q_0$ is an initial state of $A'_1$
			(case 11 in Fig.~\ref{fig:cases-compatible});
		\item if there are atoms $A'_1,A'_2$ of $Q_2$ 
			with 
			$tgt(\pi_{A'_1}^H) = src(\pi_{A'_2}^H)$,
			$h(\pi_{A'_1}^H \pi_{A'_2}^H) = \pi^\lGraph_A$, then
			$\runSxT{q_0}{q_F} \in \alpha(A)$ 
			where $q_0$ is an initial state of $A'_1$ and $q_F$ a final state of $A'_2$
			(case 12 in Fig.~\ref{fig:cases-compatible});
		\item if there is an atom $A'$ of $Q_2$ and
			a prefix $\pi$ of $\pi_{A'}^H$ 
			with 
			$h(\pi)$ a suffix of $\pi^\lGraph_A$, then and
			$\runSxxT{q}{\lambda(tgt(\pi))} \in \alpha(A)$ for some $q$ 
			(case 13 in Fig.~\ref{fig:cases-compatible});
		\item if there are atoms $A'_1,A'_2$ of $Q_2$ and
			a prefix subpath $\pi_2$ of $\pi_{A'_2}^H$
			with 
			$tgt(\pi_{A'_1}^H) \neq src(\pi_2)$,
			$h(\pi_{A'_1}^H)$ is a prefix of $\pi^\lGraph_A$ and
			$h(\pi_2)$ is a suffix of $\pi^\lGraph_A$, then
			$\runSxxT{q_0}{\lambda(tgt(\pi_2))} \in \alpha(A)$ 
			where $q_0$ is an initial state of $A'_1$
			(case 14 in Fig.~\ref{fig:cases-compatible});
		\item if there are atoms $A'_1,A'_2$ of $Q_2$ 
			with 
			$tgt(\pi_{A'_1}^H) \neq src(\pi_{A'_2}^H)$,
			$h(\pi_{A'_1}^H)$ is a prefix of $\pi^\lGraph_A$ and
			$h(\pi_{A'_2}^H)$ is a suffix of $\pi^\lGraph_A$, then
			$\runSxxT{q_0}{q_F} \in \alpha(A)$ 
			where $q_0$ is an initial state of $A'_1$ and $q_F$ a final state of $A'_2$
			(case 15 in Fig.~\ref{fig:cases-compatible});
		\item  if there is an atom $A'$ of $Q_2$ 
			with 
			$h(\pi^H_{A'})$ being an infix of $\pi^\lGraph_A$, then 
			$\runxSTx{q_0}{q_F} \in \alpha(A)$ for $q_0$ and $q_F$ initial and final states of $A'$ 
			(case 16 in Fig.~\ref{fig:cases-compatible});
		\item  if there is an atom $A'$ of $Q_2$
			where 
			$h(\pi^H_{A'})$ is a suffix of $\pi^\lGraph_A$, then 
			$\runSxT{q}{q_F} \in \alpha(A)$ 
			for some state $q$ and some final state $q_F$ of $A'$
			(case 17 in Fig.~\ref{fig:cases-compatible}).
	\end{itemize}

\begin{figure}
	\includegraphics[scale=.8]{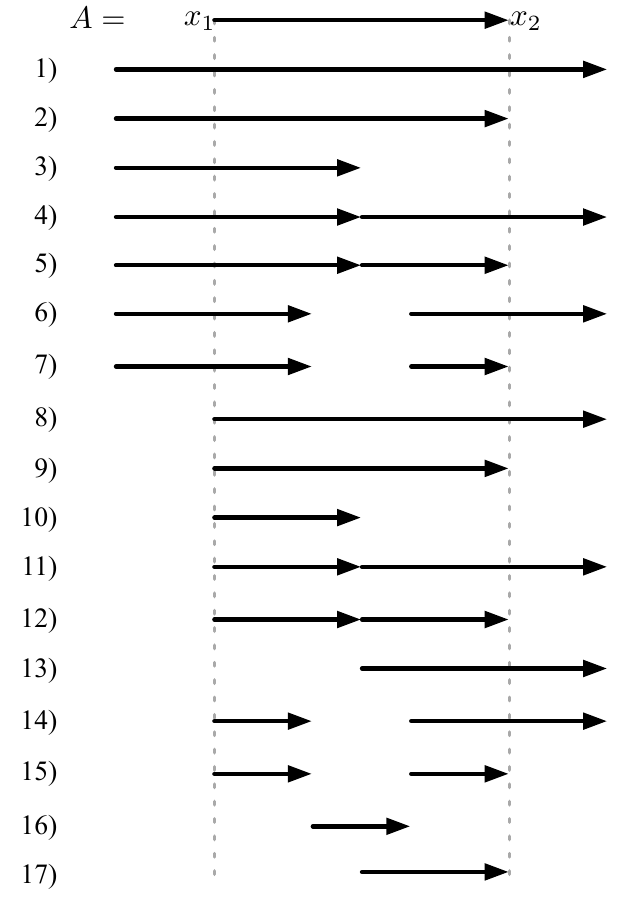}
	\caption{All the possible cases for compatibility. Observe that there are no other cases due to Remark~\ref{rk:nodeinj-two-vars-in-atom}.}
	\label{fig:cases-compatible}
\end{figure}

The following statement is a direct consequence of $H,h,\lambda$ being polynomially bounded and each of the conditions above being polynomial-time testable.
	\begin{claim}\label{cl:compatible-pspace}
		Testing whether a morphism type is compatible with an abstraction is in \pspace.
	\end{claim}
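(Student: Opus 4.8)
The plan is to give a straightforward search procedure and argue that it runs in polynomial space. Fix the input: a morphism type $(H,h)$ from $Q_2$ to $Q_1$ and a mapping $\alpha$ from the atoms of $Q_1$ to subsets of $P$. The only existentially quantified object in the definition of compatibility is the state‑labelling $\lambda$ of the internal nodes of the paths $\pi^H_{A'}$ (over atoms $A'$ of $Q_2$), so the algorithm will cycle through all candidates for $\lambda$ and, for each, verify the seventeen local conditions, accepting if some candidate passes.

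First I would note that every object in play is of polynomial size. By injectivity of $h$ the graph $H$ has at most $|V(\lGraph)| = O(|Q_1|)$ vertices, hence the paths $\pi^H_{A'}$ have, in total, only polynomially many internal nodes and polynomially many prefix/suffix/infix subpaths, each identifiable by its pair of endpoints; the automaton $\+A_{Q_2}$ has polynomially many states; and $\alpha$ assigns to each atom a subset of the polynomial‑size set $P$. Consequently a candidate $\lambda$ is describable with polynomially many bits, and the (exponentially many) candidates can be enumerated one at a time, odometer‑style, using a polynomially bounded counter, so that only polynomial space is ever in use.

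For a fixed $\lambda$, verifying compatibility is in deterministic polynomial time: each of the seventeen conditions ranges over one or two atoms of $Q_2$, a bounded number of atoms $A$ of $Q_1$, and the polynomially many subpaths of the $\pi^H_{A'}$; for each such choice the topological side conditions — $h(\pi) = \pi^\lGraph_A$, or $h(\pi)$ a prefix/suffix/infix of $\pi^\lGraph_A$, or $tgt(\pi_1) = src(\pi_2)$, and so on — reduce to comparing polynomially many vertices under the explicitly given map $h$, and the conclusion is a membership test of a tuple such as $\runST{\lambda(src(\pi))}{\lambda(tgt(\pi))}$ in the polynomial‑size set $\alpha(A)$, which is again polynomial. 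Hence the inner check is in $\textsf{P}$; running it for every candidate $\lambda$ (reusing space) yields a \pspace algorithm. Equivalently, one may guess $\lambda$ nondeterministically and run the same verification, placing the problem in $\textsf{NP} \subseteq \pspace$; either formulation suffices. I do not expect a genuine obstacle here: the statement is essentially a corollary of the design of the surrounding proof, in which $H$, $h$, $\lambda$ and $\alpha$ were all arranged to be polynomially bounded and the compatibility conditions to be ``local'' enough to be decidable in polynomial time; the only care needed is to confirm case by case that each of the seventeen conditions is indeed polynomial‑time testable, which is routine bookkeeping.
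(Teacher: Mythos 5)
Your argument is correct and matches the paper's justification, which dismisses the claim in one line as a direct consequence of $H$, $h$, $\lambda$ being polynomially bounded and each compatibility condition being polynomial-time testable; you simply spell out the same reasoning (guess or enumerate the polynomially-describable labelling $\lambda$, then verify the seventeen local conditions in polynomial time). No gap; if anything, your observation that the problem is in fact in \np is a slight sharpening of what the paper states.
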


The key property of compatible abstractions, is that they allow to capture whether an expansion of $Q_1$ with a given abstraction is a counter-example for the containment problem $Q_1 \subseteq_\qni Q_2$.
	\begin{claim}\label{cl:compatible-counterexample}
		The following are equivalent:
		\begin{enumerate}
			\item There is a morphism type compatible with an abstraction $\alpha$;
			\item for every expansion $E_1 \in \Exp{}(Q_1)$ with abstraction $\alpha$ there exists some expansion $E_2 \in \Exp{}(Q_2)$ such that $E_2 \injto E_1$;
			\item there is an expansion $E_1 \in \Exp{}(Q_1)$ with abstraction $\alpha$ and an expansion $E_2 \in \Exp{}(Q_2)$ such that $E_2 \injto E_1$.
		\end{enumerate}
	\end{claim}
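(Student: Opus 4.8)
The plan is to establish the cycle of implications $(1)\Rightarrow(2)\Rightarrow(3)\Rightarrow(1)$. The implication $(2)\Rightarrow(3)$ is immediate: by definition the set of abstractions of $Q_1$ is the set of abstractions of its expansions, so $\alpha$ is the abstraction of some $E_1\in\Exp{}(Q_1)$, and applying $(2)$ to this particular $E_1$ produces the required $E_2\in\Exp{}(Q_2)$ with $E_2\injto E_1$.

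For $(3)\Rightarrow(1)$, I would start from an expansion $E_1\in\Exp{}(Q_1)$ with abstraction $\alpha$, an expansion $E_2\in\Exp{}(Q_2)$, and an injective homomorphism $h_0:E_2\injto E_1$; by Remark~\ref{rk:nodeinj-two-vars-in-atom} we may assume $h_0$ maps two distinct variables of $Q_2$ to distinct internal nodes of a single atom expansion of $E_1$ only when both have degree $1$. I then \emph{read off} a morphism type $(H,h)$ together with a state labelling $\lambda$ witnessing compatibility. For each atom $A'$ of $Q_2$, whose expansion is a path $\rho_{A'}$ in $E_2$, the image $h_0(\rho_{A'})$ traverses a sequence of atom expansions of $E_1$, possibly entering or leaving the first and last of them in the middle; I build $\pi^H_{A'}$ by replacing each maximal portion of $h_0(\rho_{A'})$ lying inside an atom expansion $\rho_A$ of $E_1$ by the appropriate sub-path of the length-$3$ path $\pi^{\lGraph}_A$ --- all of $\pi^{\lGraph}_A$ if $\rho_A$ is traversed entirely, its first edge (resp.\ last edge, resp.\ middle edge) if a proper prefix touching the source (resp.\ a proper suffix touching the target, resp.\ a proper internal infix) of $\rho_A$ is read, and the corresponding two-edge sub-path in the mixed cases --- and let $h$ send each node of $\pi^H_{A'}$ to the node of $\lGraph$ in the matching position. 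Injectivity of $h$ follows from injectivity of $h_0$ and the degree-$1$ restriction, which prevents two such portions from overlapping inside the same $\pi^{\lGraph}_A$. I set $\lambda$ at each internal node of $\pi^H_{A'}$ identified with an internal node of $\pi^{\lGraph}_A$ to be the state of the automaton of $A'$ reached by the run of $h_0$ at the corresponding point of $\rho_A$. It then remains to verify the $17$ compatibility conditions: each asserts membership in $\alpha(A)$ of a specific element of $P$, and that element is exactly the one witnessed by the partial run of the relevant automaton of $Q_2$ along the portion of $h_0(\rho_{A'})$ inside $\rho_A$; since $\alpha$ is the abstraction of $E_1$, it records, over the expansion word of $A$, all such partial runs by definition, so the condition holds.

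For $(1)\Rightarrow(2)$, I would fix a morphism type $(H,h)$ compatible with $\alpha$ via $\lambda$, and an \emph{arbitrary} $E_1\in\Exp{}(Q_1)$ with abstraction $\alpha$, and construct $E_2\in\Exp{}(Q_2)$ with $h_0:E_2\injto E_1$. By injectivity of $h$, for each atom $A$ of $Q_1$ the portions of the paths $\pi^H_{A'}$ that $h$ maps into $\pi^{\lGraph}_A$ tile a sub-collection of the three edges of $\pi^{\lGraph}_A$, and the $17$ cases enumerate precisely the possible configurations of these portions together with the behaviour of their endpoints. The compatibility conditions then yield elements of $\alpha(A)$ which --- because $\alpha$ is the abstraction of the concrete $E_1$ --- unfold into a factorization of the expansion word $w_A$ of $A$ in $E_1$ into pieces, together with partial runs of the automata of $Q_2$ on these pieces; crucially, the elements of the form $\runSxxT{q}{q'}$ and $\runxSTx{q}{q'}$ supply the ``slack'' factors that absorb the parts of $w_A$ not read by any atom of $Q_2$, so the factorization exists regardless of how long $w_A$ is. Concatenating, along each path $\pi^H_{A'}$, the pieces read inside the successive atoms of $Q_1$, and chaining the runs via $\lambda$ at internal nodes of $H$ shared by consecutive pieces and via initial/final states at the endpoints (as dictated by the cases), gives for each atom $A'$ of $Q_2$ a word accepted by its automaton; these words determine an expansion $E_2\in\Exp{}(Q_2)$, and placing each atom expansion of $E_2$ onto the corresponding concatenation of factors inside $E_1$ --- following $h$ on the skeleton --- defines $h_0:E_2\to E_1$, which is injective because $h$ is injective and the factors selected inside each $w_A$ are pairwise disjoint subwords.

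I expect the main obstacle to be the bookkeeping in $(1)\Rightarrow(2)$: one has to treat all $17$ configurations uniformly so that the per-atom factorizations of the words $w_A$ can be chosen simultaneously and glued into genuine accepting runs of the $Q_2$-automata, paying attention to the endpoint subtleties (free variables of $Q_2$, degree-$1$ variables of $Q_1$, and whether the endpoints of the $H$-paths coincide or are distinct) and to checking that the resulting $h_0$ is injective. By contrast, $(3)\Rightarrow(1)$ is a comparatively direct translation of the homomorphism into a morphism type, and $(2)\Rightarrow(3)$ is trivial.
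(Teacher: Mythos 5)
Your proposal is correct and follows essentially the same route as the paper's proof: the trivial $(2)\Rightarrow(3)$, the direct translation of an injective homomorphism $E_2\injto E_1$ into a morphism type with state labelling for $(3)\Rightarrow(1)$, and for $(1)\Rightarrow(2)$ the substitution of concrete expansion words and witnessing factorizations (with the $\runSxxT{q}{q'}$/$\runxSTx{q}{q'}$ entries supplying the unread ``slack'') guaranteed by the fact that $\alpha$ is the abstraction of the given $E_1$. The only difference is presentational --- you organize the constructions per atom of $Q_2$ whereas the paper organizes them per atom expansion of $E_1$ --- which does not change the argument.
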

	\begin{proof}
		\proofcase{$1 \Rightarrow 2)$}
		Assume $E_1 \in \Exp{}(Q_1)$ has abstraction $\alpha$, and $(H,h)$ is a compatible morphism type through the mapping $\lambda$.
		Fore every atom $A$ of $Q_1$, we replace every path $\pi$ of $H$ such that $h(\pi)= \pi^\lGraph_A$ with the expansion of $A$ in $E_1$.
		The remaining edges of $H$ are all part of paths which map \emph{partially} to some $\pi^\lGraph_A$, these are replaced with paths according to the witnessing words for the elements of the form $\runSxxT{q}{q'}$ and $\runxSTx{q}{q'}$ in each $\alpha(A)$. 
		
		\begin{figure}
			\includegraphics[scale=.8]{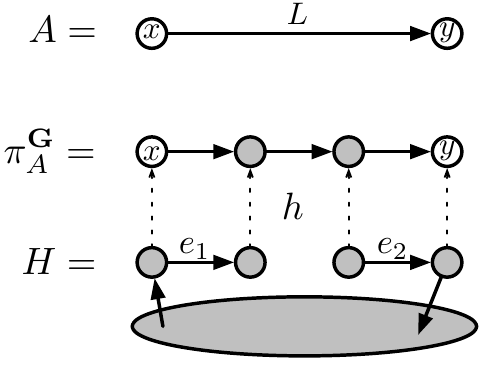}
			\caption{Example for proof of Claim~\ref{cl:compatible-counterexample}}
			\label{fig:ex:cl:compatible-counterexample}
		\end{figure}
		For example, in the case depicted in Figure~\ref{fig:ex:cl:compatible-counterexample}, we know that the expansion $w \in L$ of $A$ is of the form $w=u \cdot s \cdot v$ such that there is a partial run of $\+A_{Q_2}$ from $q$ to some final state reading $u$, and a partial run of $\+A_{Q_2}$ from an initial state to $q'$ reading $v$. Hence, we replace edge $e_1$ with a path reading $u$, and edge $e_2$ with a path reading $v$.
		It follows that by the definition of abstraction the resulting CQ (\ie, the query represented by the resulting edge-labeled graph) is an expansion of $Q_2$ that maps to $E_1$ through an injective homomorphism.

		\proofcase{$2\Rightarrow 3)$} This is trivial since an abstraction of $Q_1$ is the abstraction of an expansion thereof.

		\proofcase{$3 \Rightarrow 1)$}
		Take any $E_1 \in \Exp{}(Q_1)$ with abstraction $\alpha$ and $E_2 \in \Exp{}(Q_2)$ such that $g:E_2 \injto E_1$. We now build a graph $H$ from $E_2$ as follows.
		In the sequel, whenever we say that we replace a path $\pi$ with a path of length $n$, we mean that we (1) remove all internal nodes of $\pi$, and all edges incident to these and (2) we add $n-1$ fresh nodes and $n$ edges in such a way that there is a path of length $n$ from $src(\pi)$ to $tgt(\pi)$.
		For every path $\pi$ in $E_1$ corresponding to the expansion of atom $A$ :
		\begin{itemize}
			\item If there is a prefix \resp{suffix} of $\pi$ which has no $g$-preimage, we replace the path $g^{-1}(\pi)$ with just one edge, and we send the variable to the first internal node of $\pi^\lGraph_A$. For the remaining cases let us assume that every node of $\pi$ has a $g$-preimage.
			\item If $g^{-1}(\pi)$ is a path with no $Q_2$-variables as internal nodes, then we replace it with an unlabeled path $\pi'$  of length 3. 
			We define $h$ to map the first \resp{second} internal node of $\pi'$ to the first \resp{second} internal node of $\pi_A^\lGraph$.
			\item If $g^{-1}(\pi)$ is a path that contains one variable $x \in \vars(Q_2)$ as internal node, we replace the first half path of $g^{-1}(\pi)$ until $x$ with just an unlabeled edge, and the other half with a path $\pi'$ of length 2. We set $h$ to map the variable $x$ to the first internal node of $\pi^\lGraph_A$ and the internal node of $\pi$ to the second internal node of $\pi^\lGraph_A$.
			\item If $g^{-1}(\pi)$ contains two variables and two disjoint paths, we replace each of them with an unlabeled edge. We send the variables (\ie, the endpoints of the paths) correspondingly to the two internal vertices of $\pi^\lGraph_A$.
		\end{itemize} 
		The resulting graph $H$ and mapping $h$ is a morphism type which is compatible with $\alpha$.
	\end{proof}
	
	Finally, the \pspace algorithm guesses a mapping $\alpha$ from the atoms of $Q_1$ to subsets of $P$, checks that $\alpha$ is an abstraction of $Q_1$ (in \pspace due to Claim~\ref{cl:abstraction-pspace}), and checks that there is no morphism type $(H,h)$ which is compatible with $\alpha$ (in \pspace, due to Claims~\ref{cl:morphism-pspace} and \ref{cl:compatible-pspace}, and closure under complement of \pspace). Due to Claim~\ref{cl:compatible-counterexample}, if the algorithm succeeds, then any expansion $E_1$ of $Q_1$ is a counter-example, and thus $Q_1 \not\subseteq_\qni Q_2$; otherwise, for every expansion $E_1$ with abstraction $\alpha$ there is a compatible morphism type, which means that $E_1$ is not a counter-example and hence $Q_1 \subseteq_{\qni} Q_2$.

	First  note that if $Q_2$ is not connected, we can adapt the \pspace algorithm by testing that there are no morphism types for the connected components of $Q_2$ such that all of them are compatible with the guessed abstraction of $Q_1$.
	Further, observe that the procedure can be extended to an exponential union of polynomial-sized CRPQs: the \pspace algorithm first chooses one CRPQ $Q_1$ from the left-hand side union, guesses an abstraction of $Q_1$ and checks that no CRPQ coming from the right-hand side union has a compatible morphism type. 
	For this reason, combined with Remark~\ref{rk:no-epsilon}, the same argument extends to (unions of) arbitrary CRPQ's.
	\qed

\section{Full Proof of Theorem~\ref{theo:undec-atom-ni}}
\label{sec:app-full-undec}

    We reduce from \emph{Post Correspondence Problem} (PCP), a well-known undecidable problem. 
    An instance of the PCP is a sequence of pairs $(u_1,v_1),\dots, (u_\ell, v_\ell)$, where $u_i$ and $v_i$ are non-empty words over an alphabet $\Sigma$. 
    The goal is to decide whether there is a \emph{solution}, that is, a sequence $i_1,\dots,i_k$ of indices from $\{1,\dots, \ell\}$, with $k\geq 1$, such that  
    the words $u_{i_1}\cdots u_{i_k}$ and $v_{i_1}\cdots v_{i_k}$ coincide. 
    
    For an alphabet $\A$, we denote by $\widehat{\A}$ the alphabet $\widehat{\A}=\{\widehat{a}: a\in \A\}$. 
    Let $(u_1,v_1),\dots, (u_\ell, v_\ell)$ be a PCP instance and let $\Sigma$ be its underlying alphabet. 
    Let $\Ic$ and $\A$ be the alphabets $\Ic=\{I_1,\dots,I_\ell\}$ and $\A=\Sigma\cup \mathbb{I}\cup \{\#, \#_\infty, \square, \$, \$', \$_\infty, \blacksquare, \blacksquare'\}$. 
    We construct Boolean CRPQs $Q_1$ and $Q_2$ over alphabet $\A\cup\widehat{\A}$ 
    such that the PCP instance $(u_1,v_1),\dots, (u_\ell, v_\ell)$ has a solution if and only if $Q_1\not\subseteq_{\ani}Q_2$. 
    In particular, the PCP instance has a solution if and only if there exists a counterexample for $\ani$-semantics, \ie, an $\ani$-expansion $F$ of $Q_1$ such that there is no expansion $E\in \Exp{}(Q_2)$ with $E\ainjto F$.  
    
    The symbols in $\A$ are associated with the words $u_i$, while the symbols in $\widehat{\A}$ with the words $v_i$. 
    For each $u_i=a_1\cdots a_k$, we define the word $U_i = a_1\, \$\, \blacksquare \,a_2 \,\$\, \blacksquare \, \cdots \,a_k \,\$' \,\blacksquare'$. 
    Similarly, for each $v_i=a_1\cdots a_k$ we define $V_i=\widehat{\blacksquare}' \,\widehat{\$}' \, \widehat{a}_k \,\widehat{\blacksquare}\, \widehat{\$} \, \widehat{a}_{k-1}\, \cdots \, \widehat{\blacksquare}\, \widehat{\$}\, \widehat{a}_1$. 
    The CRPQ $Q_1$ is defined as follows (see Figure~\ref{fig:app-query-q1-q2} for an illustration):
    \begin{align*}
    &Q_1  = y_1\xrightarrow{L_{I}} x \land y_2\xrightarrow{\widehat{L}_{a}} x \land x\xrightarrow{\widehat{L}_{I}} z_1 \land x \xrightarrow{L_{a}} z_2 \\
    &  \qquad \qquad  \land x \xrightarrow{\square} x' \land x \xrightarrow{\widehat{\blacksquare}} x'  \land x' \xrightarrow{\widehat{\square}} x \land x' \xrightarrow{\blacksquare} x \\
    & \qquad \qquad \land y'_1\xrightarrow{\#_\infty} y_1\land y'_2\xrightarrow{\widehat{\$}_\infty} y_2 \land z_1\xrightarrow{\widehat{\#}_\infty \widehat{\#}} z_1' \land z_2\xrightarrow{\$_\infty \$} z_2'
    \end{align*}
    where:
    \begin{align*}
    & L_I=(\square\, \#\, \Ic)^+\qquad  \widehat{L}_I=(\widehat{\Ic}\, \widehat{\#}\, \widehat{\square})^+ \\
    & L_a=(U_1+\cdots+U_\ell)^+ \qquad \widehat{L}_a=(V_1+\cdots+V_\ell)^+ 
     \end{align*}
        \begin{figure}
            \begin{center}
                \includegraphics[width=.5\textwidth]{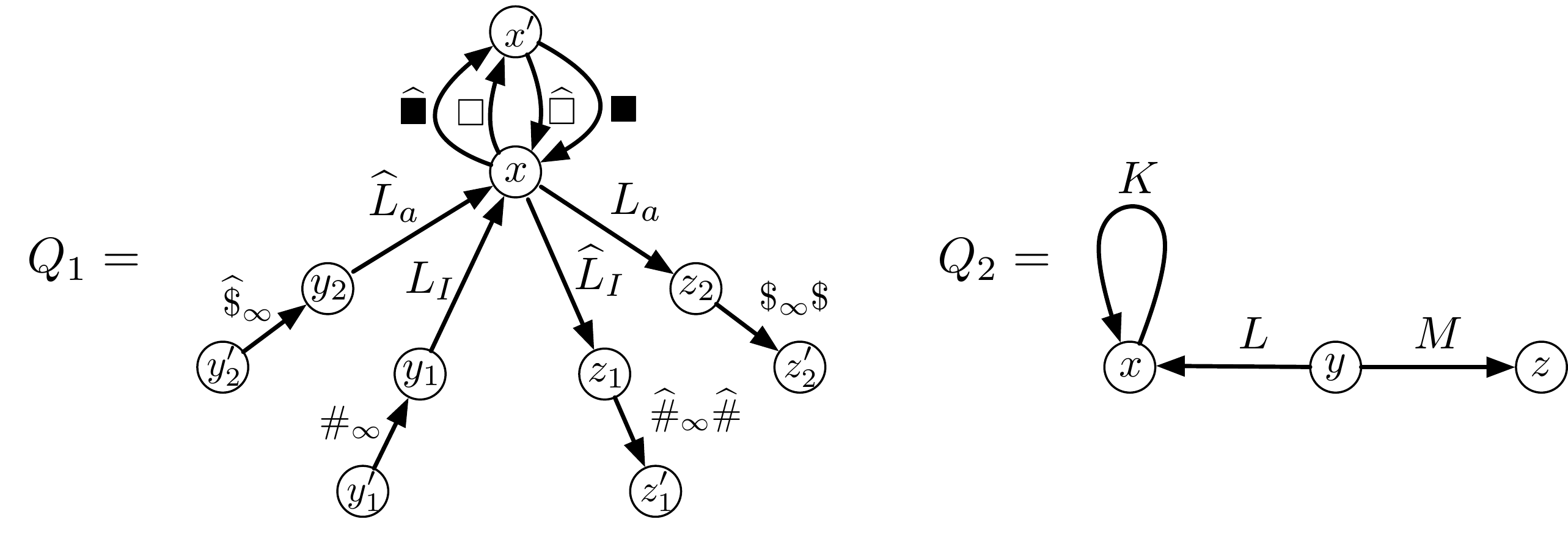}
            \end{center}
            \caption{The Boolean CRPQs $Q_1$ and $Q_2$ from the reduction.}
            \label{fig:app-query-q1-q2}
        \end{figure}
        Intuitively, a word from $L_I$ corresponds to a choice of indices from $\{1,\dots, \ell\}$, similarly for $\widehat{L}_I$. 
        On the other hand, a word from $L_a$ \resp{$\widehat{L}_{a}$} corresponds to a choice of words from $\{u_1,\dots, u_\ell\}$ \resp{$\{v_1,\dots, v_\ell\}$}. 
        
        We are interested in a particular type of $\ani$-expansions of $Q_1$ that we call \emph{well-formed} and define below.
        The idea is that well-formed $\ani$-expansions correspond to solutions of the PCP instance. In particular, 
        if there is a well-formed $\ani$-expansion of $Q_1$ then there is a solution to the PCP instance and vice versa. 
       We then show how to construct $Q_2$ such that an $\ani$-expansion of $Q_1$ is well-formed if and only if 
       it is a counterexample for $Q_1\subseteq_{\ani}Q_2$. 
       
             \begin{figure}
            \begin{center}
                \includegraphics[width=.37\textwidth]{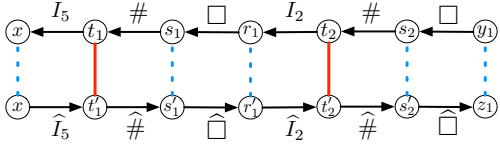}
            \end{center}
            \caption{Example of the $I$-$\widehat{I}$-condition of well-formed expansions of $Q_1$. 
            We show the expansions $y_1\xrightarrow{w_{I}} x$ and 
             $x\xrightarrow{\widehat{w}_{I}} z_1$ of the atoms $y_1\xrightarrow{L_{I}} x$ and 
             $x\xrightarrow{\widehat{L}_{I}} z_1$, respectively. The words $w_{I} = \square\, \#\, I_2 \, \square\, \#\, I_5$ and 
             $\widehat{w}_{I} =  \widehat{I}_5\, \widehat{\#}\, \widehat{\square} \, \widehat{I}_2 \, \widehat{\#}\, \widehat{\square}$ encode the sequence of indices $5,2$. 
             Dotted blue lines indicate pairs of equal variables while red lines indicate distinct variables. We have some extra symbols $\#, \widehat{\#}, \square, \widehat{\square}$.  
             }
            \label{fig:app-well-formed}
        \end{figure}
        
       Let $F$ be an $\ani$-expansion of $Q_1$ such that  $F=\widetilde{F}^{\collapse}$ for $\widetilde{F} = E\land J$ (here $E\in \Exp{}(Q_1)$ and $J$ are the equality atoms).  We say that $F$ is \emph{well-formed} if it satisfies the following four conditions:
       
       \begin{enumerate}
       \item \emph{$I$-$\widehat{I}$-condition :} 
       This condition applies to the atoms $y_1\xrightarrow{L_{I}} x$  and 
       $x\xrightarrow{\widehat{L}_{I}} z_1$ of $Q_1$. Let $y_1\xrightarrow{w_{I}} x$ and $x\xrightarrow{\widehat{w}_{I}} z_1$ be the 
       expansions associated to the atoms  $y_1\xrightarrow{L_{I}} x$  and 
       $x\xrightarrow{\widehat{L}_{I}} z_1$ in the expansion $E$.  
       The condition requires the words $w_{I}$ and 
       $\widehat{w}_{I}$ to be of the form 
    $w_I=\square\, \#\, I_{i_k}\, \cdots \,\square \,\# \,I_{i_1}$ and $\widehat{w}_I= \widehat{I}_{i_1}\, \widehat{\#} \, \widehat{\square} \, \cdots \, \widehat{I}_{i_k} \, \widehat{\#}\, \widehat{\square}$, 
    for a sequence of indices $i_1,\dots, i_k\in\{1,\dots, \ell\}$. 
In other words, the expansions of the atoms $y_1\xrightarrow{L_{I}} x$  and 
       $x\xrightarrow{\widehat{L}_{I}} z_1$ correspond to the same sequence on indices. 
       The $I$-$\widehat{I}$-condition also requires a particular behavior on the equality atoms $J$ (see Figure~\ref{fig:app-well-formed}). 
   Suppose the expansion $y_1\xrightarrow{w_{I}} x$ is of the form:
    \begin{align*} 
    \qquad \quad y_1 \xrightarrow{\square} s_{k} \xrightarrow{\#} t_{k} \xrightarrow{I_{i_k}} r_{{k-1}}  \xrightarrow{\square} & s_{{k-1}}  \xrightarrow{\#} t_{{k-1}} \xrightarrow{I_{i_{k-1}}} r_{{k-2}} \cdots \\
    & \cdots r_{1}\xrightarrow{\square} s_{1} \xrightarrow{\#} t_{1} \xrightarrow{I_{i_1}} x
    \end{align*}
     and $x \xrightarrow{\widehat{w}_{I}} z_1$ is of the form:
    \begin{align*}
       \qquad \quad x \xrightarrow{\widehat{I}_{i_1}} t'_{1} \xrightarrow{\widehat{\#}} s'_{1} \xrightarrow{\widehat{\square}} r'_{{1}} \xrightarrow{\widehat{I}_{i_2}} & t'_{{2}} \xrightarrow{\widehat{\#}} s'_{{2}}  \xrightarrow{\widehat{\square}}  r'_{{2}} \cdots \\
    & \cdots r'_{{k-1}} \xrightarrow{\widehat{I}_{i_k}}  t'_{k} \xrightarrow{\widehat{\#}} s'_{k} \xrightarrow{\widehat{\square}} z_1
    \end{align*}
    Then the relation $=_{\widetilde{F}}$ produced by the equality atoms $J$ satisfies:
    \begin{enumerate}
   \item $t_{1}\neq_{\widetilde{F}} t'_{1}, \cdots, t_{k}\neq_{\widetilde{F}} t'_{k}$
   \item $s_{1} =_{\widetilde{F}} s'_{1}, \cdots, s_{k} =_{\widetilde{F}} s'_{k}$
   \item $r_{1} =_{\widetilde{F}} r'_{1}, \cdots, r_{{k}} =_{\widetilde{F}} r'_{{k}}$
     \end{enumerate}
     where $r_k := y_1$ and $r'_k := z_1$.
     
      \item \emph{$I$-$a$-condition :} 
       This condition applies to the atoms $y_1\xrightarrow{L_{I}} x$  and 
       $x \xrightarrow{L_{a}} z_2$ of $Q_1$. Let $y_1\xrightarrow{w_{I}} x$ and $x\xrightarrow{w_{a}} z_2$ be the 
       expansions associated to the atoms  $y_1\xrightarrow{L_{I}} x$  and 
       $x \xrightarrow{L_{a}} z_2$ in the expansion $E$.  
       The condition requires the words $w_{I}$ and 
       $w_{a}$ to be of the form 
    $w_I=\square\, \#\, I_{i_k}\, \cdots \,\square \,\# \,I_{i_1}$ and $w_a=U_{i_1}\cdots U_{i_k}$, 
    for a sequence of indices $i_1,\dots, i_k\in\{1,\dots, \ell\}$. 
Intuitively, the word $w_a$ chooses words from $\{u_1,\dots, u_\ell\}$ according to the sequence $i_1,\dots, i_k$.
       The $I$-$a$-condition also requires a particular behavior on the equality atoms $J$. 
   Suppose the expansion $y_1\xrightarrow{w_{I}} x$ is of the form:
    \begin{align*}
    \qquad \quad y_1 \xrightarrow{\square} s_{k} \xrightarrow{\#} t_{k} \xrightarrow{I_{i_k}} r_{{k-1}}  \xrightarrow{\square} & s_{{k-1}}  \xrightarrow{\#} t_{{k-1}} \xrightarrow{I_{i_{k-1}}} r_{{k-2}} \cdots \\
    & \cdots r_{1}\xrightarrow{\square} s_{1} \xrightarrow{\#} t_{1} \xrightarrow{I_{i_1}} x
    \end{align*}
    and the expansion $x \xrightarrow{w_{a}} z_2$ is of the form:
    \begin{align*}
     \qquad \quad x \xrightarrow{\widetilde{U}_{i_1}} s'_{1} \xrightarrow{\blacksquare'} r'_{{1}} \xrightarrow{\widetilde{U}_{i_2}} s'_{{2}}  \xrightarrow{\blacksquare'}  r'_{{2}} \cdots r'_{{k-1}} \xrightarrow{\widetilde{U}_{i_k}} s'_{k} \xrightarrow{\blacksquare'} z_2
    \end{align*}
   where $\widetilde{U}_i$ is the word obtained from $U_i$ by removing the last symbol $\blacksquare'$.
   Then the relation $=_{\widetilde{F}}$ produced by the equality atoms $J$ satisfies:
     \begin{enumerate}
   \item $t_{j} \neq_{\widetilde{F}} t$, for every internal variable $t$ of the expansion  $r'_{{j-1}} \xrightarrow{\widetilde{U}_{i_j}} s'_{{j}}$ (here $r'_{0}:= x$)
\item $s_{1} =_{\widetilde{F}} s'_{1}, \cdots, s_{k} =_{\widetilde{F}} s'_{k}$
\item $r_{1} =_{\widetilde{F}} r'_{1}, \cdots, r_{{k}} =_{\widetilde{F}} r'_{{k}}$
     \end{enumerate}
where $r_k := y_1$ and $r'_k := z_2$.

          \item \emph{$\widehat{a}$-$\widehat{I}$-condition :} 
          This is analogous to the $I$-$a$-condition and applies to the atoms $y_2\xrightarrow{\widehat{L}_{a}} x$  and 
       $x\xrightarrow{\widehat{L}_{I}} z_1$ of $Q_1$. Let $y_2\xrightarrow{\widehat{w}_{a}} x$  and 
       $x\xrightarrow{\widehat{w}_{I}} z_1$ be the 
       expansions associated to the atoms  $y_2\xrightarrow{\widehat{L}_{a}} x$  and 
       $x\xrightarrow{\widehat{L}_{I}} z_1$ in the expansion $E$.  
       The condition requires the words $\widehat{w}_{a}$ and 
       $\widehat{w}_{I}$ to be of the form 
    $\widehat{w}_I= \widehat{I}_{i_1}\, \widehat{\#} \, \widehat{\square} \, \cdots \, \widehat{I}_{i_k} \, \widehat{\#}\, \widehat{\square}$ and $\widehat{w}_a=V_{i_k}\cdots V_{i_1}$, 
    for a sequence of indices $i_1,\dots, i_k\in\{1,\dots, \ell\}$. 
That is, the word $\widehat{w}_a$ chooses words from $\{v_1,\dots, v_\ell\}$ according to the sequence $i_1,\dots, i_k$.
       We also require some conditions on the equality atoms $J$. 
   Suppose the expansion $y_2\xrightarrow{\widehat{w}_{a}} x$ is of the form:
    \begin{align*}
   \qquad y_2 \xrightarrow{\widehat{\blacksquare}'} s_{k} \xrightarrow{\widetilde{V}_{i_{k}}} r_{{k-1}} \xrightarrow{\widehat{\blacksquare}'}  s_{{k-1}} \xrightarrow{\widetilde{V}_{i_{k-1}}}  r_{{k-2}} \cdots 
    r_{1} \xrightarrow{\widehat{\blacksquare}'} s_{1} \xrightarrow{\widetilde{V}_{i_{1}}} x
    \end{align*}
    where $\widetilde{V}_i$ is the word obtained from $V_i$ by removing the first symbol $\widehat{\blacksquare}'$. 
    Suppose also the expansion $x\xrightarrow{\widehat{w}_{I}} z_1$ is of the form:
   \begin{align*}
       \qquad \quad x \xrightarrow{\widehat{I}_{i_1}} t'_{1} \xrightarrow{\widehat{\#}} s'_{1} \xrightarrow{\widehat{\square}} r'_{{1}} \xrightarrow{\widehat{I}_{i_2}} & t'_{{2}} \xrightarrow{\widehat{\#}} s'_{{2}}  \xrightarrow{\widehat{\square}}  r'_{{2}} \cdots \\
    & \cdots r'_{{k-1}} \xrightarrow{\widehat{I}_{i_k}}  t'_{k} \xrightarrow{\widehat{\#}} s'_{k} \xrightarrow{\widehat{\square}} z_1
    \end{align*}
   Then the relation $=_{\widetilde{F}}$ produced by the equality atoms $J$ satisfies:
     \begin{enumerate}
\item $t'_{j} \neq_{\widetilde{F}} t$, for every internal variable $t$ of the expansion $s_{j} \xrightarrow{\widetilde{V}_{i_{j}}} r_{{j-1}}$ (here $r_{0}:= x$)
\item $s_{1} =_{\widetilde{F}} s'_{1}, \cdots, s_{k} =_{\widetilde{F}} s'_{k}$
\item $r_{1} =_{\widetilde{F}} r'_{1}, \cdots, r_{{k}} =_{\widetilde{F}} r'_{{k}}$
     \end{enumerate}
     where $r_k := y_2$ and $r'_k := z_1$.
     
          \item \emph{$\widehat{a}$-$a$-condition :} 
       This condition applies to the atoms  $y_2\xrightarrow{\widehat{L}_{a}} x$  and 
       $x \xrightarrow{L_{a}} z_2$ of $Q_1$. Let $y_2\xrightarrow{\widehat{w}_{a}} x$ and $x\xrightarrow{w_{a}} z_2$ be the 
       expansions associated to the atoms  $y_2\xrightarrow{\widehat{L}_{a}} x$  and 
       $x \xrightarrow{L_{a}} z_2$ in the expansion $E$.  
       The condition requires the words $\widehat{w}_{a}$ and 
       $w_{a}$ to be of the form 
    $\widehat{w}_{a}=\clubsuit \, \clubsuit \, \widehat{a}_n \, \cdots \,\clubsuit \,\clubsuit \, \widehat{a}_1$ and 
    $w_a = a_1 \, \clubsuit \, \clubsuit \, \cdots \, a_n \, \clubsuit \, \clubsuit$, for a word $a_1\cdots a_n \in\Sigma^*$ (recall $\Sigma$ is the alphabet of the PCP instance). 
    Here, $\clubsuit$ is a placeholder representing some symbol.     
Intuitively, the word $\widehat{w}_{a}$ and $w_a$ represent the same word from $\Sigma^*$. 
The $\widehat{a}$-$a$-condition also requires some conditions on the equality atoms $J$. 
     Assume the expansion $y_2\xrightarrow{\widehat{w}_{a}} x$ is of the form:
\begin{align*}
    \qquad \quad y_2 \xrightarrow{\clubsuit} s_{n} \xrightarrow{\clubsuit} t_{n} \xrightarrow{\widehat{a}_n} r_{n-1} \xrightarrow{\clubsuit} & s_{n-1} \xrightarrow{\clubsuit} t_{n-1} \xrightarrow{\widehat{a}_{n-1}} r_{n-2} \cdots \\
   & \cdots r_{1} \xrightarrow{\clubsuit} s_{1} \xrightarrow{\clubsuit} t_{1} \xrightarrow{\widehat{a}_1} x
\end{align*}
and the expansion $x \xrightarrow{w_{a}} z_2$ is of the form:
\begin{align*}
     \qquad \quad x \xrightarrow{a_1} t'_{1} \xrightarrow{\clubsuit} s'_{1} \xrightarrow{\clubsuit} r'_{1} \xrightarrow{a_2} & t'_{2} \xrightarrow{\clubsuit} s'_{2}  \xrightarrow{\clubsuit}  r'_{2} \cdots \\
 &   \cdots r'_{n-1} \xrightarrow{a_n}  t'_{n} \xrightarrow{\clubsuit} s'_{n} \xrightarrow{\clubsuit} z_2
\end{align*}
   Then the relation $=_{\widetilde{F}}$ produced by the equality atoms $J$ satisfies:
 \begin{enumerate}
\item  $t_{1}\neq_{\widetilde{F}} t'_{1}, \cdots, t_{n}\neq_{\widetilde{F}} t'_{n}$
 \item  $s_{1} =_{\widetilde{F}} s'_{1}, \cdots, s_{n} =_{\widetilde{F}} s'_{n}$
 \item $r_{1} =_{\widetilde{F}} r'_{1}, \cdots, r_{n} =_{\widetilde{F}} r'_{n}$
\end{enumerate}
where $r_n := y_2$ and $r'_n := z_2$.

\end{enumerate}

     \medskip
     
The key property of well-formedness is that it can be characterized in terms of the non-existence of a finite number of simple cycles and simple paths having certain labels. In order to do this, we need to define some finite languages. 
    Recall $\widetilde{U}_i$ is obtained from $U_i$ by removing the last symbol $\blacksquare'$ and 
    $\widetilde{V}_i$ is obtained from $V_i$ by removing the first symbol $\widehat{\blacksquare}'$. 
    We define $N$ to be the maximum length of the words $U_i$. 
      We denote by $e^{i,j}$, for $i\leq j$, the regular expression $(e^i + e^{i+1}+\cdots+e^j)$. We have:
    \begin{align*}
 & K_{I \,\widehat{I}} = \Ic\, \widehat{\Ic} + \#_\infty\, \widehat{\Ic} + \Ic\, \widehat{\#}_\infty \\
 & M_{I \,\widehat{I}} = \sum_{i\neq j} I_i \widehat{I}_j + \widehat{\Ic} \, \# + \widehat{\#}\, \Ic + \#\, \Ic \, \widehat{\Ic} \, \widehat{\#} 
  + \square\, \widehat{\square} + \#_\infty \, \widehat{\Ic} + \Ic\, \widehat{\#}_\infty \\
 & K_{I a} = \Ic\, \Sigma + \#_{\infty}\, \Sigma + \Ic\, \$_{\infty} \\
 & M_{I  a} = (\Sigma + \$ + \$' + \blacksquare)\, \Ic + (\Sigma + \$ + \blacksquare)^{1,N}\, \#
 + \sum_{i}\sum_{j\neq i} I_i \widetilde{U}_j  \, + \\ 
 &  \qquad + \# \, \Ic \, (\widetilde{U}_1 + \cdots + \widetilde{U}_\ell) + \square\, \blacksquare' + \#_{\infty}\, \Sigma + \Ic\, \$_{\infty} \\
 & K_{\widehat{a} \, \widehat{I}} = \widehat{\Sigma}\, \widehat{\Ic} + \widehat{\$}_\infty\, \widehat{\Ic} + \widehat{\Sigma}\, \widehat{\#}_\infty \\
 & M_{\widehat{a}\,  \widehat{I}} = \widehat{\Ic} \, (\widehat{\Sigma} + \widehat{\$} + \widehat{\$}' + \widehat{\blacksquare}) 
 + \widehat{\#}\, \widehat{\Sigma} + \widehat{\Ic} \, \widehat{\#} \,(\widehat{\Sigma} + \widehat{\$} + \widehat{\blacksquare})
 +  \sum_{i}\sum_{j\neq i} \widetilde{V}_j \widehat{I}_i \, + \\
 & \qquad + (\widetilde{V}_1 + \cdots + \widetilde{V}_\ell)\,  \widehat{\Ic} \, \widehat{\#} \,
 + \widehat{\blacksquare}' \widehat{\square} + \widehat{\$}_\infty\, \widehat{\Ic} + \widehat{\Sigma}\, \widehat{\#}_\infty \\
  & K_{\widehat{a} a} = \widehat{\Sigma}\, \Sigma + \widehat{\$}_\infty\, \Sigma + \widehat{\Sigma}\, \$_{\infty}\\
 & M_{\widehat{a} a} =  \sum_{a\neq b} \widehat{a} b + \Sigma\, (\widehat{\$} + \widehat{\$}') + (\$ + \$') \,\widehat{\Sigma}
  + (\widehat{\$} + \widehat{\$}') \,\widehat{\Sigma}\, \Sigma \,(\$ + \$') + \\
  & \qquad + (\widehat{\blacksquare} + \widehat{\blacksquare}')(\blacksquare + \blacksquare')
  + \widehat{\$}_\infty\, \Sigma + \widehat{\Sigma}\, \$_{\infty}
   \end{align*}

  \begin{claim}
  \label{claim:well-formed}
  Let $F$ be an $\ani$-expansion of $Q_1$. Then:
  \begin{enumerate}
  \item $F$ satisfies the $I$-$\widehat{I}$-condition iff $F$ does not contain 
a simple cycle with label in $K_{I \widehat{I}}$ nor a simple path with label in $M_{I \widehat{I}}$.
  \item $F$ satisfies the $I$-$a$-condition iff $F$ does not contain 
a simple cycle with label in $K_{I a}$ nor a simple path with label in $M_{I a}$.
  \item $F$ satisfies the $\widehat{a}$-$\widehat{I}$-condition iff $F$ does not contain 
a simple cycle with label in $K_{\widehat{a} \, \widehat{I}}$ nor a simple path with label in $M_{\widehat{a} \, \widehat{I}}$.
\item $F$ satisfies the $\widehat{a}$-$a$-condition iff $F$ does not contain 
a simple cycle with label in $K_{\widehat{a} a}$ nor a simple path with label in $M_{\widehat{a} a}$.
  \end{enumerate}
  \end{claim}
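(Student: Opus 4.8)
The plan is to prove the four items of Claim~\ref{claim:well-formed} by a common two-directional argument, giving the $I$-$\widehat{I}$-condition (item~1) in full and indicating the adaptations for items~2--4 (recall that the $\widehat{a}$-$\widehat{I}$-condition is just the mirror of the $I$-$a$-condition). Throughout I would use two structural facts about any $\ani$-expansion $F=\widetilde{F}^{\collapse}$ of $Q_1$, where $\widetilde{F}=E\land J$ with $E\in\Exp{}(Q_1)$. First, since $E\in\Exp{}(Q_1)$, the expansion word of $y_1\xrightarrow{L_I}x$ already has the block shape $\square\,\#\,\Ic\cdots\square\,\#\,\Ic$ and that of $x\xrightarrow{\widehat{L}_I}z_1$ the shape $\widehat{\Ic}\,\widehat{\#}\,\widehat{\square}\cdots\widehat{\Ic}\,\widehat{\#}\,\widehat{\square}$, so only the number of blocks, the indices, and the identifications in $J$ remain free. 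Second, any two variables lying in a common atom expansion are $\varphi$-atom-related, hence stay distinct in $F$ by the definition of $\ani$-expansion; this confines ``short'' paths between distinct blocks to the junction nodes $x$, $r_j$, $r_j'$, and rules out self-loops within an atom expansion.

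For the forward direction (the condition implies no forbidden pattern), the $I$-$\widehat{I}$-condition pins down the relevant part of $F$ completely: the words $w_I=\square\#I_{i_k}\cdots\square\#I_{i_1}$ and $\widehat{w}_I=\widehat{I}_{i_1}\widehat{\#}\widehat{\square}\cdots\widehat{I}_{i_k}\widehat{\#}\widehat{\square}$, together with the identifications $s_j=_{\widetilde{F}}s_j'$, $r_j=_{\widetilde{F}}r_j'$ and the non-identifications $t_j\neq_{\widetilde{F}}t_j'$. It then remains to run through the finitely many labels of $K_{I\widehat{I}}$ and $M_{I\widehat{I}}$ and check that none can be realized: each such label occurs in $F$ only at prescribed junctions, and there the conjunction of $t_j\neq t_j'$ with $s_j=s_j'$, $r_j=r_j'$ prevents closing it into a simple cycle or tracing it as a simple path with distinct endpoints; the $\infty$-labelled cases use that the unique $\#_\infty$-edge enters $y_1$ and the unique $\widehat{\#}_\infty$-edge leaves $z_1$, whose other incident labels are ``wrong'' in a well-formed $F$.

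The backward direction (no forbidden pattern implies the condition) is the crux; it goes by induction on the block index, processing blocks outward from the merged node $x$ with invariant $r_j=_{\widetilde{F}}r_j'$ (base case $r_0=r_0'=x$). At block $j$: (i)~$t_j\neq_{\widetilde{F}}t_j'$, else the edges $t_j\xrightarrow{I}r_{j-1}$ and $r_{j-1}'\xrightarrow{\widehat{I}}t_j'$ close a $2$-cycle labelled in $\Ic\widehat{\Ic}\subseteq K_{I\widehat{I}}$; (ii)~the $I$- and $\widehat{I}$-indices of block $j$ coincide, else the same two edges form a simple path labelled in $\sum_{i\neq j}I_i\widehat{I}_j\subseteq M_{I\widehat{I}}$; (iii)~neither $t_j=_{\widetilde{F}}s_j'$ nor $t_j'=_{\widetilde{F}}s_j$ (each giving a length-$2$ simple path labelled in $\widehat{\#}\,\Ic$, resp.\ $\widehat{\Ic}\,\#$), and hence $s_j=_{\widetilde{F}}s_j'$, since otherwise $s_j\xrightarrow{\#}t_j\xrightarrow{I}r_{j-1}\xrightarrow{\widehat{I}}t_j'\xrightarrow{\widehat{\#}}s_j'$ is a simple path labelled in $\#\,\Ic\,\widehat{\Ic}\,\widehat{\#}\subseteq M_{I\widehat{I}}$; and (iv)~$r_j=_{\widetilde{F}}r_j'$, else $r_j\xrightarrow{\square}s_j=s_j'\xrightarrow{\widehat{\square}}r_j'$ is a simple path labelled in $\square\,\widehat{\square}\subseteq M_{I\widehat{I}}$, which restores the invariant. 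In each step the required distinctness of the intermediate nodes follows from the second structural fact together with the identifications already established. Finally, to force $w_I$ and $\widehat{w}_I$ to have equally many blocks, observe that if (say) $w_I$ had more blocks, then after exhausting $\widehat{w}_I$ we have $r_m=_{\widetilde{F}}z_1$ while an $I$-edge still enters $r_m$ on the $w_I$ side; concatenating it with the $\widehat{\#}_\infty$-edge leaving $z_1$ gives a forbidden pattern labelled in $\Ic\widehat{\#}_\infty$ (and symmetrically, via the $\#_\infty$-edge into $y_1$, if $\widehat{w}_I$ is longer) --- which is exactly why these anchor atoms and labels were added to $Q_1$, $K_{I\widehat{I}}$, $M_{I\widehat{I}}$.

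Items~2--4 follow the same template, and the main obstacle throughout is precisely this backward induction: one must rule out that a variable of $F$ gets identified with an \emph{unexpected} one --- from a distant block, from another atom expansion, or through the transitive closure of $=_{\widetilde{F}}$ cascading the forced identifications --- so that only the prescribed merges survive. The mechanism is always that each node carries a ``signature'' of incident edge labels essentially unique among nodes at its depth, so any spurious merge instantly produces one of the enumerated short forbidden cycles or paths; making this precise is a finite but careful case analysis over candidate pairs of nodes, exactly as illustrated for the $I$-$\widehat{I}$-condition in the body. The extra wrinkles are that in items~2--3 the $\widetilde{U}_i$ / $\widetilde{V}_i$ blocks have variable length (hence the bounded-iteration subexpressions $(\cdot)^{1,N}$ in $M_{Ia}$ and $M_{\widehat{a}\widehat{I}}$), and in item~4 the matching of $w_a$ with $\widehat{w}_a$ is imposed letter by letter via $\sum_{a\neq b}\widehat{a}\,b$ and at block boundaries via $(\widehat{\blacksquare}+\widehat{\blacksquare}')(\blacksquare+\blacksquare')$; otherwise the argument is identical in spirit.
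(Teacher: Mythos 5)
Your proposal is correct and follows essentially the same route as the paper's proof: the forward direction by inspection, and the backward direction as an outward block-by-block induction from $x$ with invariant $r_j=_{\widetilde{F}}r_j'$, using exactly the same sequence of forbidden labels ($\Ic\,\widehat{\Ic}$ for $t_j\neq t_j'$, then $\sum_{i\neq j}I_i\widehat{I}_j$, $\widehat{\#}\,\Ic$ and $\widehat{\Ic}\,\#$, $\#\,\Ic\,\widehat{\Ic}\,\widehat{\#}$, $\square\,\widehat{\square}$) and the same $\#_\infty$/$\widehat{\#}_\infty$ anchors to force equal block counts. The adaptations you flag for items 2--4 (the inner argument along variable-length $\widetilde{U}_i$/$\widetilde{V}_i$ blocks via the $(\cdot)^{1,N}$ expressions) are precisely the extra steps the paper carries out in detail for item 2.
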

     
     \begin{proof}
 Suppose $F=\widetilde{F}^{\collapse}$ for $\widetilde{F} = E\land J$, where $E\in \Exp{}(Q_1)$ and $J$ are equality atoms. 
 We start with item (1). 
 Let $y_1\xrightarrow{w_{I}} x$ and $x\xrightarrow{\widehat{w}_{I}} z_1$ be the 
expansions associated to the atoms  $y_1\xrightarrow{L_{I}} x$  and 
 $x\xrightarrow{\widehat{L}_{I}} z_1$ in the expansion $E$. 
 Suppose that  $w_I=\square\, \#\, I_{i_k}\, \cdots \,\square \,\# \,I_{i_1}$ and $\widehat{w}_I= \widehat{I}_{j_1}\, \widehat{\#} \, \widehat{\square} \, \cdots \, \widehat{I}_{j_p} \, \widehat{\#}\, \widehat{\square}$,
 for indices    $i_1,\dots, i_k,$ $j_1,\dots,j_p$ $\in\{1,\dots, \ell\}$.
 Assume also that $y_1\xrightarrow{w_{I}} x$ is of the form:
  \begin{align*} 
    y_1 \xrightarrow{\square} s_{k} \xrightarrow{\#} t_{k} \xrightarrow{I_{i_k}} r_{{k-1}}  \xrightarrow{\square} & s_{{k-1}}  \xrightarrow{\#} t_{{k-1}} \xrightarrow{I_{i_{k-1}}} r_{{k-2}} \cdots \\
    & \cdots r_{1}\xrightarrow{\square} s_{1} \xrightarrow{\#} t_{1} \xrightarrow{I_{i_1}} x
    \end{align*}
     and $x \xrightarrow{\widehat{w}_{I}} z_1$ is of the form:
    \begin{align*}
  x \xrightarrow{\widehat{I}_{j_1}} t'_{1} \xrightarrow{\widehat{\#}} s'_{1} \xrightarrow{\widehat{\square}} r'_{{1}} \xrightarrow{\widehat{I}_{i_2}} & t'_{{2}} \xrightarrow{\widehat{\#}} s'_{{2}}  \xrightarrow{\widehat{\square}}  r'_{{2}} \cdots \\
    & \cdots r'_{{p-1}} \xrightarrow{\widehat{I}_{j_p}}  t'_{p} \xrightarrow{\widehat{\#}} s'_{p} \xrightarrow{\widehat{\square}} z_1
    \end{align*}
    
    We consider first the backward direction of item (1). 
    We have $t_1 \neq_{\widetilde{F}} t_1'$, otherwise there would be a simple cycle 
    $t_1 \xrightarrow{I_{i_1}} x \xrightarrow{\widehat{I}_{j_1}} t_1'$ 
 with label in  $\Ic\, \widehat{\Ic} \subseteq K_{I \widehat{I}}$. 
 This implies that $i_1=j_1$. If this is not the case, we would have a simple path $t_1 \xrightarrow{I_{i_1}} x \xrightarrow{\widehat{I}_{j_1}} t_1'$ with a label in $\sum_{i\neq j} I_i \widehat{I}_j \subseteq M_{I \widehat{I}}$. 
    We claim that $s_1 =_{\widetilde{F}} s_1'$. Note first that 
    $t_1 \neq_{\widetilde{F}} s_1'$ and $t_1' \neq_{\widetilde{F}} s_1$. 
    Indeed, if $t_1 =_{\widetilde{F}} s_1'$, then 
    there would be a simple path $t_1' \xrightarrow{\widehat{\#}} s_1' \xrightarrow{I_{i_1}} x$
    with label in $\widehat{\#}\, \Ic\subseteq M_{I \widehat{I}}$. 
    If $t_1' =_{\widetilde{F}} s_1$, then 
    there would be a simple path $x \xrightarrow{\widehat{I}_{j_1}} t_1' \xrightarrow{\#} t_1$
    with label in $\widehat{\Ic}\, \# \subseteq M_{I \widehat{I}}$. 
    It follows that $s_1 =_{\widetilde{F}} s_1'$ as otherwise the path
     $s_{1} \xrightarrow{\#} t_{1} \xrightarrow{I_{i_1}} x \xrightarrow{\widehat{I}_{j_1}} t'_{1} 
     \xrightarrow{\widehat{\#}} s'_{1} $
     would be a simple path with a label in  
   $ \#\, \Ic \, \widehat{\Ic} \, \widehat{\#} \subseteq M_{I \widehat{I}}$. 
   Finally, we have that $r_1 =_{\widetilde{F}} r_1'$. If this is not true, 
   then $r_{1}\xrightarrow{\square} s_{1} \xrightarrow{\widehat{\square}} r'_{{1}}$ 
   would be a simple path with a label in $\square\, \widehat{\square} \subseteq M_{I \widehat{I}}$. 
   
   We can iterate this argument, replacing in each step the ``middle'' variable $x$ 
   by the corresponding new ``middle'' variable $r_i$ (see Figure~\ref{fig:app-well-formed}). 
   We obtain the following ($\alpha = \min\{k,p\}$):
   \begin{itemize}
   \item $i_1=j_1$, $i_2=j_2$, $\dots$, $i_\alpha = j_\alpha$
   \item $t_{1}\neq_{\widetilde{F}} t'_{1}, \cdots, t_{\alpha}\neq_{\widetilde{F}} t'_{\alpha}$
   \item $s_{1} =_{\widetilde{F}} s'_{1}, \cdots, s_{\alpha} =_{\widetilde{F}} s'_{\alpha}$
   \item $r_{1} =_{\widetilde{F}} r'_{1}, \cdots, r_{\alpha} =_{\widetilde{F}} r'_{\alpha}$
   \end{itemize}
where $r_{k}:= y_1$ and $r'_p:=z_1$. 
Note that to conclude the $I$-$\widehat{I}$-condition, it suffices to show that $\alpha = k = p$. 
Towards a contradiction, suppose first that $k < p$. We know that $y_1 =_{\widetilde{F}} r'_k$, 
and $r'_k\neq z_1$. In particular, we have at least the following atoms:
$$ y_1' \xrightarrow{\#_\infty} y_1 \qquad r'_k \xrightarrow{\widehat{I}_j} t'_{k+1}$$
We have two cases. If $y_1' =_{\widetilde{F}} t'_{k+1}$ then we have a 
simple cycle $y_1' \xrightarrow{\#_\infty} y_1\xrightarrow{\widehat{I}_j} t'_{k+1}$ 
with a label in $\#_\infty \, \widehat{\Ic} \subseteq K_{I \widehat{I}}$. 
On the other hand,  if $y_1' \neq_{\widetilde{F}} t'_{k+1}$ then we have a 
simple path $y_1' \xrightarrow{\#_\infty} y_1\xrightarrow{\widehat{I}_j} t'_{k+1}$ 
with a label in $\#_\infty \, \widehat{\Ic} \subseteq M_{I \widehat{I}}$. In either case, 
we obtain a contradiction.
Suppose now that $k > p$. We know that $r_p =_{\widetilde{F}} z_1$, 
and $r_p\neq y_1$. We have at least the following atoms:
$$ t_{p+1} \xrightarrow{I_i} r_p \qquad z_1 \xrightarrow{\widehat{\#}_\infty} z_1''$$
Again we have two cases. If $t_{p+1} =_{\widetilde{F}} z_1''$ then we have a 
simple cycle $t_{p+1} \xrightarrow{I_i} r_p \xrightarrow{\widehat{\#}_\infty} z_1''$ 
with a label in $\Ic\, \widehat{\#}_\infty \subseteq K_{I \widehat{I}}$. 
On the other hand,  if $t_{p+1} \neq_{\widetilde{F}} z_1''$ then we have a 
simple path $t_{p+1} \xrightarrow{I_i} r_p \xrightarrow{\widehat{\#}_\infty} z_1''$ 
with a label in $\Ic\, \widehat{\#}_\infty \subseteq M_{I \widehat{I}}$. 
We obtain a contradiction in either case. We conclude that $k=p$ and hence the $I$-$\widehat{I}$-condition holds. 

The forward direction of item (1) follows directly from the definition of the $I$-$\widehat{I}$-condition 
and inspection of the languages $K_{I \widehat{I}}$ and $M_{I \widehat{I}}$. 

Now we turn to item (2). 
Let $y_1\xrightarrow{w_{I}} x$ and $x\xrightarrow{w_{a}} z_2$ be the 
 expansions associated to the atoms  $y_1\xrightarrow{L_{I}} x$  and 
$x \xrightarrow{L_{a}} z_2$ in the expansion $E$. 
Suppose that $w_I=\square\, \#\, I_{i_k}\, \cdots \,\square \,\# \,I_{i_1}$ and $w_a=U_{j_1}\cdots U_{j_p}$
 for indices  $i_1,\dots, i_k,$ $j_1,\dots,j_p$ $\in\{1,\dots, \ell\}$.
Assume that the expansion $y_1\xrightarrow{w_{I}} x$ is of the form:
    \begin{align*}
 y_1 \xrightarrow{\square} s_{k} \xrightarrow{\#} t_{k} \xrightarrow{I_{i_k}} r_{{k-1}}  \xrightarrow{\square} & s_{{k-1}}  \xrightarrow{\#} t_{{k-1}} \xrightarrow{I_{i_{k-1}}} r_{{k-2}} \cdots \\
    & \cdots r_{1}\xrightarrow{\square} s_{1} \xrightarrow{\#} t_{1} \xrightarrow{I_{i_1}} x
    \end{align*}
    and the expansion $x \xrightarrow{w_{a}} z_2$ is of the form:
    \begin{align*}
 x \xrightarrow{\widetilde{U}_{j_1}} s'_{1} \xrightarrow{\blacksquare'} r'_{{1}} \xrightarrow{\widetilde{U}_{i_2}} s'_{{2}}  \xrightarrow{\blacksquare'}  r'_{{2}} \cdots r'_{{p-1}} \xrightarrow{\widetilde{U}_{j_p}} s'_{p} \xrightarrow{\blacksquare'} z_2
    \end{align*}
   where $\widetilde{U}_j$ is the word obtained from $U_j$ by removing the last symbol $\blacksquare'$.
   
       We consider first the backward direction of item (2). 
       Suppose the expansion $x \xrightarrow{\widetilde{U}_{j_1}} s'_{1}$ has the form:
  \begin{align*}  
        x \xrightarrow{\clubsuit} o_1 \xrightarrow{\clubsuit} o_2 \cdots o_m \xrightarrow{\clubsuit} s'_{1}  
  \end{align*}
  where $\clubsuit$ is a placeholder representing some symbol. 
  We claim that $t_1 \neq_{\widetilde{F}} t$, for all $t\in \{o_1,\dots, o_m, o_{m+1}\}$, where $o_{m+1}:=s'_1$. 
  Note first that $t_1 \neq_{\widetilde{F}} o_1$, otherwise there would be a 
  simple cycle $t_1 \xrightarrow{I_{i_1}} x \xrightarrow{\clubsuit} o_1$, where $\clubsuit\in \Sigma$. 
  In particular, the label would belong to $\Ic\, \Sigma\subseteq K_{I a}$; a contradiction. 
  Now we argue by induction. Suppose $t_1 \neq_{\widetilde{F}} o_h$, for some $h\in\{1,\dots, m\}$. 
  By contradiction, assume $t_1 =_{\widetilde{F}} o_{h+1}$. 
  We have a simple path $o_h \xrightarrow{\clubsuit} o_{h+1} \xrightarrow{I_{i_1}} x$, 
  where $\clubsuit \in \Sigma \cup \{\$, \blacksquare, \$'\}$ (note that $\clubsuit$ cannot be $\blacksquare'$). 
  Then the label belongs to $(\Sigma + \$ + \$' + \blacksquare)\, \Ic \subseteq M_{I a}$; a contradiction. 
  
  We now claim that $s_1 \neq_{\widetilde{F}} t$, for all $t\in\{o_1,\dots, o_m\}$. 
  By contradiction, suppose that $s_1 =_{\widetilde{F}} t$, for some $t\in\{o_1,\dots, o_m\}$. 
  Then there is a simple path $x \xrightarrow{U} t \xrightarrow{\#} t_1$, where $U\in (\Sigma + \$ + \blacksquare)^{1,N}$
  (note how we use the fact that $t_1\neq_{\widetilde{F}} t$, for all $t\in\{o_1,\dots, o_m\}$; 
  otherwise the path would not be necessarily simple). The label of the simple path 
  belongs to $(\Sigma + \$ + \blacksquare)^{1,N}\, \# \subseteq M_{I a}$; a contradiction. 
  
We have that $i_1=j_1$. If this is not the case, then 
we would have the simple path $t_1 \xrightarrow{I_{i_1}} x \xrightarrow{\widetilde{U}_{j_1}} s'_{1}$
with a label in $\sum_{i}\sum_{j\neq i} I_i \widetilde{U}_j  \subseteq M_{I a}$. 
Moreover, we have $s_1 =_{\widetilde{F}} s_1'$, otherwise 
we would have the simple path $s_1 \xrightarrow{\#} t_1 \xrightarrow{I_{i_1}} x \xrightarrow{\widetilde{U}_{j_1}} s'_{1}$
with a label in $\# \, \Ic \, (\widetilde{U}_1 + \cdots + \widetilde{U}_\ell)  \subseteq M_{I a}$. 
Finally, we have $r_1 =_{\widetilde{F}} r_1'$. If this is not the case, then 
we have a simple path $r_1 \xrightarrow{\square} s_1 \xrightarrow{\blacksquare'} r_1'$ 
with a label in $\square\, \blacksquare' \subseteq M_{I a}$. 

As in the case of item (1),  we can iterate this argument, replacing in each step the ``middle'' variable $x$ 
   by the corresponding new ``middle'' variable $r_i$. 
   We obtain the following ($\alpha = \min\{k,p\}$):
   \begin{itemize}
   \item $i_1=j_1$, $i_2=j_2$, $\dots$, $i_\alpha = j_\alpha$
   \item For every $j\in\{1,\dots, \alpha\}$, 
we have $t_{j} \neq_{\widetilde{F}} t$, for every internal variable $t$ of the expansion  $r'_{{j-1}} \xrightarrow{\widetilde{U}_{i_j}} s'_{{j}}$ (here $r'_{0}:= x$)
\item $s_{1} =_{\widetilde{F}} s'_{1}, \cdots, s_{\alpha} =_{\widetilde{F}} s'_{\alpha}$
\item $r_{1} =_{\widetilde{F}} r'_{1}, \cdots, r_{{\alpha}} =_{\widetilde{F}} r'_{{\alpha}}$
   \end{itemize}
   where $r_{k}:= y_1$ and $r'_p:=z_2$. 
   By using the same arguments as in the case of item (1) we obtain that $k=p$, 
   and hence the $I$-$a$-condition holds. 
   
   The forward direction of item (2) follows directly from the definition of the $I$-$a$-condition
and inspection of the languages $K_{I a}$ and $M_{I a}$. 

The cases of item (3) and (4) are analogous to cases (1) and (2).
\end{proof}

\medskip

Let $Q_2^{\circlearrowright}$ and $Q_2^{\rightarrow}$ be the following Boolean CRPQs in $\CRPQfin$:

\begin{align*}
Q_2^{\circlearrowright} & = x \xrightarrow{K^{\circlearrowright}} x \qquad Q_2^{\rightarrow} = y \xrightarrow{M^{\rightarrow}} z
\end{align*}
where $K^{\circlearrowright} := K_{I \widehat{I}} \, + K_{I a} \,+ K_{\widehat{a} \widehat{I}} 
\, + K_{\widehat{a} a}$ and $M^{\rightarrow} := M_{I \widehat{I}} \, + M_{I a} \, + M_{\widehat{a} \widehat{I}} 
\, + M_{\widehat{a} a}$. 

From Claim~\ref{claim:well-formed}, we obtain the reduction for the case when 
the right-hand side query is the \emph{union} of the CRPQs $Q_2^{\circlearrowright}$ and $Q_2^{\rightarrow}$, 
which we denote by $Q_2^{\circlearrowright} \lor Q_2^{\rightarrow}$. 

\begin{claim}
\label{claim:well-formed2}
Let $F$ be an $\ani$-expansion of $Q_1$. Then $F$ is well-formed if and only if 
$Q_2^{\circlearrowright} \lor Q_2^{\rightarrow}(F)^{\ani} = \emptyset$. Moreover, 
there is a solution to the PCP instance if and only if $Q_1\not\subseteq_{\ani}Q_2^{\circlearrowright} \lor Q_2^{\rightarrow}$. 
\end{claim}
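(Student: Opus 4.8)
The plan is to prove the two assertions in turn: the first equivalence follows from Claim~\ref{claim:well-formed} together with a reading of the $\ani$-semantics of $Q_2^{\circlearrowright}$ and $Q_2^{\rightarrow}$, and the ``moreover'' part follows by combining it with the atom-injective containment characterization of Proposition~\ref{prop:cont-char-ainj} and with the correspondence between well-formed $\ani$-expansions and PCP solutions.

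For the first equivalence, recall that $F$ is well-formed exactly when it satisfies all four of the $I$-$\widehat{I}$-, $I$-$a$-, $\widehat{a}$-$\widehat{I}$-, and $\widehat{a}$-$a$-conditions. By the four items of Claim~\ref{claim:well-formed}, this is equivalent to $F$ containing no simple cycle whose label lies in $K_{I\widehat{I}}\cup K_{Ia}\cup K_{\widehat{a}\widehat{I}}\cup K_{\widehat{a}a}=K^{\circlearrowright}$ and no simple path whose label lies in $M_{I\widehat{I}}\cup M_{Ia}\cup M_{\widehat{a}\widehat{I}}\cup M_{\widehat{a}a}=M^{\rightarrow}$. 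Now $Q_2^{\circlearrowright}=x\xrightarrow{K^{\circlearrowright}}x$ is a Boolean one-atom CRPQ whose single atom, under $\ani$-semantics, must be mapped to a simple cycle; hence $Q_2^{\circlearrowright}(F)^{\ani}\neq\emptyset$ if and only if $F$ has a simple cycle with label in $K^{\circlearrowright}$. Likewise $Q_2^{\rightarrow}=y\xrightarrow{M^{\rightarrow}}z$ with $y\neq z$, whose atom must be mapped to a simple path, so $Q_2^{\rightarrow}(F)^{\ani}\neq\emptyset$ if and only if $F$ has a simple path with label in $M^{\rightarrow}$ (no clash with $y\neq z$ arises, since every word of $M^{\rightarrow}$ is non-empty). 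Since $(Q_2^{\circlearrowright}\lor Q_2^{\rightarrow})(F)^{\ani}=Q_2^{\circlearrowright}(F)^{\ani}\cup Q_2^{\rightarrow}(F)^{\ani}$, we obtain $(Q_2^{\circlearrowright}\lor Q_2^{\rightarrow})(F)^{\ani}=\emptyset$ if and only if $F$ has neither such a simple cycle nor such a simple path, i.e.\ if and only if $F$ is well-formed.

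For the ``moreover'' part, I would first observe that the proof of Proposition~\ref{prop:cont-char-ainj} goes through verbatim when the right-hand side is the union $Q_2^{\circlearrowright}\lor Q_2^{\rightarrow}$ (reading $(Q_2^{\circlearrowright}\lor Q_2^{\rightarrow})(G)^{\ani}$ as $Q_2^{\circlearrowright}(G)^{\ani}\cup Q_2^{\rightarrow}(G)^{\ani}$ and using Corollary~\ref{coro:a-inj-sem} and composition of atom-injective homomorphisms exactly as there): since $Q_1,Q_2^{\circlearrowright},Q_2^{\rightarrow}$ are Boolean, $Q_1\subseteq_{\ani}Q_2^{\circlearrowright}\lor Q_2^{\rightarrow}$ holds iff $(Q_2^{\circlearrowright}\lor Q_2^{\rightarrow})(F)^{\ani}\neq\emptyset$ for every $F\in\Exp{\ani}(Q_1)$. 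Therefore $Q_1\not\subseteq_{\ani}Q_2^{\circlearrowright}\lor Q_2^{\rightarrow}$ iff some $\ani$-expansion $F$ of $Q_1$ satisfies $(Q_2^{\circlearrowright}\lor Q_2^{\rightarrow})(F)^{\ani}=\emptyset$, which by the first part means iff $Q_1$ admits a well-formed $\ani$-expansion. It remains to show that $Q_1$ admits a well-formed $\ani$-expansion iff the PCP instance has a solution. For the forward direction, let $F$ be a well-formed $\ani$-expansion with $\widetilde{F}=E\land J$, and let $w_I,\widehat{w}_I,w_a,\widehat{w}_a$ be the words assigned by $E$ to $y_1\xrightarrow{L_I}x$, $x\xrightarrow{\widehat{L}_I}z_1$, $x\xrightarrow{L_a}z_2$, $y_2\xrightarrow{\widehat{L}_a}x$. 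The $I$-$\widehat{I}$-condition forces $w_I,\widehat{w}_I$ to encode one and the same index sequence $i_1,\dots,i_k$ (with $k\geq 1$, as $L_I,\widehat{L}_I$ are of the form $(\cdots)^+$); the $I$-$a$-condition forces $w_a=U_{i_1}\cdots U_{i_k}$, so the projection of $w_a$ to $\Sigma$ reads $u_{i_1}\cdots u_{i_k}$; the $\widehat{a}$-$\widehat{I}$-condition forces $\widehat{w}_a=V_{i_k}\cdots V_{i_1}$, so---since each $V_i$ lists $v_i$ reversed over $\widehat{\Sigma}$---the projection of $\widehat{w}_a$ to $\widehat{\Sigma}$, with hats removed, reads $\mathrm{reverse}(v_{i_1}\cdots v_{i_k})$; and the $\widehat{a}$-$a$-condition equates the projection of $w_a$ to $\Sigma$ with the de-hatted reverse of the projection of $\widehat{w}_a$ to $\widehat{\Sigma}$. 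Chaining these, $u_{i_1}\cdots u_{i_k}=v_{i_1}\cdots v_{i_k}$, so $i_1,\dots,i_k$ is a PCP solution. For the converse, given a solution $i_1,\dots,i_k$ of the PCP, I would take the expansion profile of $Q_1$ selecting $w_I:=\square\,\#\,I_{i_k}\cdots\square\,\#\,I_{i_1}$, $\widehat{w}_I:=\widehat{I}_{i_1}\,\widehat{\#}\,\widehat{\square}\cdots\widehat{I}_{i_k}\,\widehat{\#}\,\widehat{\square}$, $w_a:=U_{i_1}\cdots U_{i_k}$, $\widehat{w}_a:=V_{i_k}\cdots V_{i_1}$, and the unique words on the remaining atoms, and then let $J$ consist of precisely the equality atoms prescribed by clauses (b) and (c) of each of the four well-formedness conditions (as in Figure~\ref{fig:app-well-formed}). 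One checks that all these identifications merge only variables coming from two different atoms of $Q_1$---never two variables of a single atom expansion---so that $F:=\widetilde{F}^{\collapse}$ for $\widetilde{F}=E\land J$ is a legitimate $\ani$-expansion of $Q_1$ (no $\varphi$-atom-related pair is collapsed, and clauses (a) hold because no equality atom forces the relevant variables equal); and by construction, using the PCP-solution property for the $\widehat{a}$-$a$-condition, $F$ satisfies all four conditions, hence is well-formed.

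The main obstacle will be this last, constructive direction: checking that the equality atoms of $J$ can be chosen so as to enforce clauses (b)--(c) of all four conditions simultaneously without (i) collapsing any $\varphi$-atom-related pair or (ii) transitively forcing some clause-(a) inequality to fail, and keeping track correctly of the two layers of reversal (within each $V_i$ and across $\widehat{w}_a=V_{i_k}\cdots V_{i_1}$) that make the $\widehat{a}$-$a$-condition yield $u_{i_1}\cdots u_{i_k}=v_{i_1}\cdots v_{i_k}$ rather than a reversed variant. The bulk of the hard combinatorial work, however, is already carried out in Claim~\ref{claim:well-formed}.
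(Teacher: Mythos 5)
Your proof is correct and follows exactly the route the paper intends: the paper states Claim~\ref{claim:well-formed2} without an explicit proof, treating it as an immediate consequence of Claim~\ref{claim:well-formed} together with the $\ani$-semantics of the one-atom queries $Q_2^{\circlearrowright}$ and $Q_2^{\rightarrow}$ and the containment characterization of Proposition~\ref{prop:cont-char-ainj} extended to unions, and your write-up simply makes those omitted steps explicit (including the double-reversal bookkeeping that turns the $\widehat{a}$-$a$-condition into $u_{i_1}\cdots u_{i_k}=v_{i_1}\cdots v_{i_k}$). The one point you flag as remaining work---that the equality atoms $J$ built from a PCP solution yield a legitimate $\ani$-expansion---is likewise left implicit by the paper, and your argument that the resulting equivalence classes contain at most one variable per atom expansion (so no $\varphi$-atom-related pair is collapsed and the clause-(a) inequalities survive) is sound.
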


From Claim~\ref{claim:well-formed2}, we obtain the undecidability of containment under atom-injective semantics
of a CRPQ in a union of two CRPQs from $\CRPQfin$. We conclude our proof explaining how to simulate the union $Q_2^{\circlearrowright} \lor Q_2^{\rightarrow}$ with a single query $Q_2\in \CRPQfin$ as in Figure~\ref{fig:app-query-q1-q2}. 

We define the following languages:
\begin{align*}
K_{dummy} & = (\square + \widehat{\blacksquare} + \widehat{\blacksquare}')(\widehat{\square} + \blacksquare +\blacksquare')\\ 
M_{dummy} & = \widehat{\#} + \$ + \$' \\
L & = \epsilon + \Ic + \#\, \Ic + \widehat{\#}\, \Ic + \square\, \#\, \Ic + \#_\infty +  (\Sigma + \$ + \$' + \blacksquare)\, \Ic + \\
& \widehat{\Sigma} + \widehat{\#}\, \widehat{\Sigma} + (\widetilde{V}_1 + \cdots + \widetilde{V}_\ell) + \widehat{\blacksquare}'\, (\widetilde{V}_1 + \cdots + \widetilde{V}_\ell) + \widehat{\$}_\infty + \\
& + (\$ + \$')\, \widehat{\Sigma} + (\widehat{\$} + \widehat{\$}')\, \widehat{\Sigma}
  + (\widehat{\blacksquare} + \widehat{\blacksquare}')  (\widehat{\$} + \widehat{\$}')\, \widehat{\Sigma}
 \end{align*}
 Let $Q_2$ be the CRPQ defined as:
 $$Q_2  = x \xrightarrow{K} x \land y \xrightarrow{L} x \land y \xrightarrow{M} z$$
 where $K:= K^{\circlearrowright} + K_{dummy}$ and $M:= M^{\rightarrow} + M_{dummy}$. 
 We conclude with the following claim:
 \begin{claim}
 Let $F$ be an $\ani$-expansion of $Q_1$. Then $Q_2^{\circlearrowright} \lor Q_2^{\rightarrow}(F)^{\ani} \neq \emptyset$
 if and only if $Q_2(F)^{\ani} \neq \emptyset$. 
 \end{claim}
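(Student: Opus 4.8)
The plan is to prove both directions of the equivalence for an arbitrary $\ani$-expansion $F = \widetilde{F}^{\collapse}$ of $Q_1$, where $\widetilde{F} = E \land J$ with $E \in \Exp{}(Q_1)$. Since $Q_2^{\circlearrowright}$, $Q_2^{\rightarrow}$ and $Q_2$ are Boolean, by the characterisation of $\ani$-semantics it suffices to reason about placements of the query variables onto nodes of $F$ so that every atom $s \xrightarrow{w} t$ is realised by a simple path (or a simple cycle if $s,t$ are identified) of $F$ labelled $w$, with no disjointness requirement between distinct atoms. Two structural facts about $F$ will be used throughout: (i) the variables $x$ and $x'$ are $\varphi$-atom-related (they co-occur in the atoms $x \xrightarrow{\square} x'$, $x \xrightarrow{\widehat{\blacksquare}} x'$, $x' \xrightarrow{\widehat{\square}} x$, $x' \xrightarrow{\blacksquare} x$), hence $x \neq_{\widetilde{F}} x'$, so $F$ always contains the length-$2$ simple cycle $x \xrightarrow{\square} x' \xrightarrow{\widehat{\square}} x$ with label $\square\,\widehat{\square} \in K_{dummy} \subseteq K$; similarly, no internal variable of an atom expansion is identified with that atom's endpoints; and (ii) $\epsilon \in L$.

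For $(\Rightarrow)$, assume $Q_2^{\circlearrowright} \lor Q_2^{\rightarrow}$ holds on $F$. If it is witnessed by a simple cycle with label in $K^{\circlearrowright}$ at a node $p$, I would map $x \xrightarrow{K} x$ to that cycle, set $\mu(y) := p$ (so $y \xrightarrow{L} x$ is realised by the empty path, using $\epsilon \in L$), and realise $y \xrightarrow{M} z$ by a simple path out of $p$ with label in $M$; such a path exists because the first edge of the $K^{\circlearrowright}$-cycle out of $p$ has a label among the first letters of $M$, and a short case analysis on that label extends it to a word of $M^{\rightarrow} + M_{dummy}$. If instead the witness is a simple path with label in $M^{\rightarrow}$ from a node $q$, I would map $x \xrightarrow{K} x$ to the canonical $K_{dummy}$-cycle at $x$ (Fact (i)), take that path for $y \xrightarrow{M} z$ with $\mu(y) := q$, and realise $y \xrightarrow{L} x$ by a simple path from $q$ to $x$ whose label is in $L$. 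The crucial point — exactly what the finite language $L$ is engineered to encode — is that the atom-injectivity constraints of $\ani$-expansions force the first edge(s) of any bad $M^{\rightarrow}$-path to sit immediately next to the middle variable $x$ (otherwise identifying the relevant pair would violate atom-relatedness of those endpoints with $x$), so $q$ lies within a bounded number of edges of $x$ and the connecting word is one of the words explicitly placed in $L$ (with $\epsilon$ covering the degenerate case $q = p = x$).

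For $(\Leftarrow)$, assume $Q_2$ holds on $F$ via $\mu(x) = p$, $\mu(y) = q$, $\mu(z) = r$, with a simple $K$-cycle $C$ at $p$, a simple $L$-path from $q$ to $p$, and a simple $M$-path $P$ from $q$ to $r$. If the label of $C$ lies in $K^{\circlearrowright}$ then $Q_2^{\circlearrowright}$ holds; if the label of $P$ lies in $M^{\rightarrow}$ then $Q_2^{\rightarrow}$ holds. The remaining case has the label of $C$ in $K_{dummy}$ and the label of $P$ in $M_{dummy} = \set{\widehat{\#}, \$, \$'}$, so $P$ is a single edge. I would show this case either cannot occur in an $\ani$-expansion of $Q_1$ or already produces one of the genuinely forbidden objects: by inspecting where $\square,\widehat{\square},\blacksquare,\blacksquare',\widehat{\blacksquare},\widehat{\blacksquare}'$-edges and $\widehat{\#},\$,\$'$-edges occur in $F$, where a simple $L$-path can terminate (only at $x$, at a node identified with it along a correctly-formed block structure, or via an identification that itself short-circuits a block), and combining these with the requirement that the single node $q$ carries both the start of $P$ and an $L$-path to $p$, one finds that any such configuration forces in $F$ a simple cycle with label in $K^{\circlearrowright}$ or a simple path with label in $M^{\rightarrow}$ — precisely the short-circuits that Claim~\ref{claim:well-formed} shows are detected by $K^{\circlearrowright}$ and $M^{\rightarrow}$. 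Hence in every case $Q_2^{\circlearrowright} \lor Q_2^{\rightarrow}$ holds on $F$.

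The main obstacle is the bookkeeping in these two case analyses. Concretely, the work is: on the $(\Rightarrow)$ side, to verify for each legal first edge of a bad $M^{\rightarrow}$-path (and of an $M_{dummy}$-edge leaving a $K^{\circlearrowright}$-cycle) that the $\ani$-atom-injectivity constraints pin its position next to $x$ tightly enough for the appropriate connecting word to land in the finite language $L$; and on the $(\Leftarrow)$ side, to verify that no ``purely dummy'' placement of the three atoms of $Q_2$ survives on an $F$ that avoids every object forbidden by $K^{\circlearrowright}$ and $M^{\rightarrow}$. Both reduce to finitely many checks, one per shape of the $U_i$, the $V_i$, and the edges of $Q_1$, each of the same flavour as the arguments in the proof of Claim~\ref{claim:well-formed}; I would organise them by the label of the first edge of the relevant path or cycle, mirroring the bookkeeping already carried out there.
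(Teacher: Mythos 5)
Your two-directional architecture coincides with the paper's: the forward direction manufactures an expansion of $Q_2$ from a witness of $Q_2^{\circlearrowright} \lor Q_2^{\rightarrow}$ using $\epsilon \in L$ and the dummy words, and the backward direction reduces to excluding the all-dummy placement (cycle labelled in $K_{dummy}$, path labelled in $M_{dummy}$), which the paper handles by showing that the image of $y$ then has no outgoing edge labelled $\widehat{\#}$, $\$$ or $\$'$, so $z$ cannot be placed. However, two of the mechanisms you rely on in the forward direction fail as stated. In the cycle case, you propose to obtain the $M$-path by extending the first edge of the $K^{\circlearrowright}$-cycle out of $p$. Take a cycle labelled $I_i\,\widehat{I}_i$ with \emph{equal} indices, obtained by identifying $t_1$ with $t'_1$ when $w_I$ and $\widehat{w}_I$ encode the same one-index sequence: its first edge is an $\Ic$-edge, which is not a prefix of any word of $M_{dummy}$ (these are single letters), and every $\Ic$-prefixed word of $M^{\rightarrow}$ requires either a second index $j \neq i$ or a continuation ($\widehat{\#}_\infty$, $\$_\infty$, a $\widetilde{U}_j$ with $j \neq i$) that need not leave the cycle's second node. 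The $M$-path genuinely has to come from elsewhere: the paper observes that the base $p$ of any $K^{\circlearrowright}$-cycle is forced to coincide with a node carrying an outgoing $\widehat{\#}$-, $\$$- or $\$'$-edge and uses that single dummy edge --- this is precisely what $M_{dummy}$ is for.

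In the path case, you pin the $K$-atom to the canonical $K_{dummy}$-cycle at $Q_1$'s variable $x$ and assert that the origin $q$ of any bad $M^{\rightarrow}$-path lies within a bounded number of edges of $x$, so that a word of $L$ connects $q$ to $x$. This is false: a simple path labelled $\square\,\widehat{\square}$, witnessing $r_\gamma \neq_{\widetilde{F}} r'_\delta$ while $s_\gamma =_{\widetilde{F}} s'_\delta$, can occur at a level $\gamma$ arbitrarily deep in the $w_I$-chain, with all lower levels in lockstep and hence no bad object near $x$; no word of $L$ then reaches $x$ from $r_\gamma$. The paper's construction instead anchors $Q_2$'s variable $x$ at a node \emph{adjacent to} $q$ --- here $r_{\gamma-1}$, reached by the $L$-word in $\square\,\#\,\Ic$ --- and exploits the $K_{dummy}$-cycle available \emph{there}, which exists because the identifications below the first violation are still in lockstep. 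So the location of the $K$-cycle and the target of the $L$-atom must travel with the witness; they cannot both be fixed at the middle variable. Your backward direction is compatible with the paper's (which resolves your ``either/or'' by showing the all-dummy configuration is simply impossible), but the forward direction needs these two repairs before the finite case analysis you describe can go through.
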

 \begin{proof}
 For the forward direction, 
 suppose $Q_2^{\circlearrowright} \lor Q_2^{\rightarrow}(F)^{\ani} \neq \emptyset$. 
 Assume first that $Q_2^{\circlearrowright}(F)^{\ani} \neq \emptyset$. 
 We consider two cases for the label $w$ of the simple cycle mapping to $F$ and provide 
an expansion of $Q_2$ that maps to $F$:
 \begin{itemize}
 \item $w\in K_{I \,\widehat{I}} \cup K_{\widehat{a}\, \widehat{I}}$ : 
 take expansion 
 $x \xrightarrow{w} x \land y \xrightarrow{\epsilon} x \land y \xrightarrow{\widehat{\#}} z$. 
 \item $w\in K_{I a} \cup K_{\widehat{a} a}$ : take either expansion 
 $x \xrightarrow{w} x \land y \xrightarrow{\epsilon} x \land y \xrightarrow{\$} z$ or  
 $x \xrightarrow{w} x \land y \xrightarrow{\epsilon} x \land y \xrightarrow{\$'} z$. 
 \end{itemize}
 Note above that $\epsilon\in L$ and $\widehat{\#}, \$, \$' \in M_{dummy}\subseteq M$.
 
 Suppose now that $Q_2^{\rightarrow}(F)^{\ani} \neq \emptyset$. Again, 
 We consider all the possible cases for the label $w$ of the simple path mapping to $F$ and provide 
an expansion of $Q_2$ that maps to $F$. We start with the case $w\in M_{I \,\widehat{I}}$:
 \begin{itemize}
 \item $w=I_p \widehat{I_q} \in \sum_{i\neq j} I_i \widehat{I}_j $ : 
 take expansion 
 $x \xrightarrow{\square\widehat{\square}} x \land y \xrightarrow{I_p} x \land y \xrightarrow{w} z$. 
  \item $w\in \widehat{\Ic} \, \# $ :
 take expansion 
 $x \xrightarrow{\square\widehat{\square}} x \land y \xrightarrow{\epsilon} x \land y \xrightarrow{w} z$. 
  \item $w = \widehat{\#} I_p \in  \widehat{\#}\, \Ic $ :
 take expansion 
 $x \xrightarrow{\square\widehat{\square}} x \land y \xrightarrow{\widehat{\#} I_p} x \land y \xrightarrow{w} z$. 
  \item $w = \# I_p \widehat{I_q} \widehat{\#} \in \#\, \Ic \, \widehat{\Ic} \, \widehat{\#}$ :
 take expansion 
 $x \xrightarrow{\square\widehat{\square}} x \land y \xrightarrow{\# I_p} x \land y \xrightarrow{w} z$. 
   \item $w\in \square\, \widehat{\square} $ :
 take expansion 
 $x \xrightarrow{\square\widehat{\square}} x \land y \xrightarrow{\square \# I_i} x \land y \xrightarrow{w} z$, for a suitable $I_i \in \Ic$. 
    \item $w\in \#_\infty \, \widehat{\Ic} $ :
 take expansion 
 $x \xrightarrow{\square\widehat{\square}} x \land y \xrightarrow{\#_\infty} x \land y \xrightarrow{w} z$. 
   \item $w = I_p \widehat{\#}_\infty \in \Ic\, \widehat{\#}_\infty  $ :
 take expansion 
 $x \xrightarrow{\square\widehat{\square}} x \land y \xrightarrow{I_p} x \land y \xrightarrow{w} z$. 
 \end{itemize}
 Note that $\square\widehat{\square} \in K_{dummy}\subseteq K$ and 
 $I_p, \epsilon, \widehat{\#} I_p, \# I_p, \square \# I_i, \#_\infty\in L$. For the case 
 $w\in M_{I a}$ we have the following:
  \begin{itemize}
 \item $w=\clubsuit I_p \in (\Sigma + \$ + \$' + \blacksquare)\, \Ic$ : 
 take expansion 
 $x \xrightarrow{\square \blacksquare'} x \land y \xrightarrow{\clubsuit I_p} x \land y \xrightarrow{w} z$. 
  \item $w\in (\Sigma + \$ + \blacksquare)^{1,N}\, \#$ :
 take expansion 
 $x \xrightarrow{\square \blacksquare'} x \land y \xrightarrow{\epsilon} x \land y \xrightarrow{w} z$. 
  \item $w = I_p \widetilde{U}_q \in  \sum_{i}\sum_{j\neq i} I_i \widetilde{U}_j $ :
 take expansion 
 $x \xrightarrow{\square \blacksquare'} x \land y \xrightarrow{I_p} x \land y \xrightarrow{w} z$. 
  \item $w = \# I_p \widetilde{U}_q \in \# \, \Ic \, (\widetilde{U}_1 + \cdots + \widetilde{U}_\ell)$ :
 take expansion 
 $x \xrightarrow{\square \blacksquare'} x \land y \xrightarrow{\# I_p} x \land y \xrightarrow{w} z$. 
   \item $w\in \square\, \blacksquare'$ :
 take expansion 
 $x \xrightarrow{\square \blacksquare'} x \land y \xrightarrow{\square \# I_i} x \land y \xrightarrow{w} z$, for a suitable $I_i \in \Ic$. 
    \item $w\in \#_{\infty}\, \Sigma $ :
 take expansion 
 $x \xrightarrow{\square \blacksquare'} x \land y \xrightarrow{\#_\infty} x \land y \xrightarrow{w} z$. 
   \item $w = I_p \$_{\infty} \in \Ic\, \$_{\infty} $ :
 take expansion 
 $x \xrightarrow{\square \blacksquare'} x \land y \xrightarrow{I_p} x \land y \xrightarrow{w} z$. 
 \end{itemize}
Observe that $\square \blacksquare' \in K_{dummy}\subseteq K$ and 
 $\clubsuit I_p \in L$, for $\clubsuit\in \Sigma + \$ + \$' + \blacksquare$, and $\epsilon, I_p, \# I_p, \square \# I_i, \#_\infty \in L$. For the case 
 $w\in M_{\widehat{a} \, \widehat{I}}$ we have:
   \begin{itemize}
 \item $w \in \widehat{\Ic} \, (\widehat{\Sigma} + \widehat{\$} + \widehat{\$}' + \widehat{\blacksquare}) $ : 
 take expansion 
 $x \xrightarrow{\widehat{\blacksquare}' \widehat{\square}} x \land y \xrightarrow{\epsilon} x \land y \xrightarrow{w} z$. 
  \item $w= \widehat{\#} \widehat{a} \in \widehat{\#}\, \widehat{\Sigma}$ :
 take expansion 
 $x \xrightarrow{\widehat{\blacksquare}' \widehat{\square}} x \land y \xrightarrow{\widehat{\#} \widehat{a}} x \land y \xrightarrow{w} z$. 
   \item $w \in \widehat{\Ic} \, \widehat{\#} \,(\widehat{\Sigma} + \widehat{\$} + \widehat{\blacksquare})$ :
 take expansion 
 $x \xrightarrow{\widehat{\blacksquare}' \widehat{\square}} x \land y \xrightarrow{\epsilon} x \land y \xrightarrow{w} z$. 
  \item $w = \widetilde{V}_p \widehat{I}_q \in  \sum_{i}\sum_{j\neq i} \widetilde{V}_j \widehat{I}_i $ :
 take expansion 
 $x \xrightarrow{\widehat{\blacksquare}' \widehat{\square}} x \land y \xrightarrow{\widetilde{V}_p} x \land y \xrightarrow{w} z$. 
  \item $w = \widetilde{V}_p \widehat{I}_q \widehat{\#} \in (\widetilde{V}_1 + \cdots + \widetilde{V}_\ell)\,  \widehat{\Ic} \, \widehat{\#} $ :
 take expansion 
 $x \xrightarrow{\widehat{\blacksquare}' \widehat{\square}} x \land y \xrightarrow{\widetilde{V}_p} x \land y \xrightarrow{w} z$. 
   \item $w\in \widehat{\blacksquare}' \widehat{\square} $ :
 take expansion 
 $x \xrightarrow{\widehat{\blacksquare}' \widehat{\square}} x \land y \xrightarrow{\widehat{\blacksquare}' \widetilde{V}_i} x \land y \xrightarrow{w} z$, for a suitable $\widetilde{V}_i \in (\widetilde{V}_1 + \cdots + \widetilde{V}_\ell)$. 
    \item $w\in \widehat{\$}_\infty\, \widehat{\Ic} $ :
 take expansion 
 $x \xrightarrow{\widehat{\blacksquare}' \widehat{\square}} x \land y \xrightarrow{\widehat{\$}_\infty} x \land y \xrightarrow{w} z$. 
   \item $w = \widehat{a}\, \widehat{\#}_\infty \in \widehat{\Sigma}\, \widehat{\#}_\infty $ :
 take expansion 
 $x \xrightarrow{\widehat{\blacksquare}' \widehat{\square}} x \land y \xrightarrow{\widehat{a}} x \land y \xrightarrow{w} z$. 
 \end{itemize}
 Note that $\widehat{\blacksquare}' \widehat{\square} \in K_{dummy}\subseteq K$ and  
 $\epsilon, \widehat{\#} \widehat{a}, \widetilde{V}_p, \widehat{\blacksquare}' \widetilde{V}_i, \widehat{\$}_\infty, \widehat{a} \in L$. For the case 
 $w\in M_{\widehat{a} a}$ we have ($\clubsuit\in \{\widehat{\blacksquare}, \widehat{\blacksquare}'\}$ and $\spadesuit\in \{\blacksquare, \blacksquare'\}$ 
 are suitable symbols in each case):
   \begin{itemize}
 \item $w= \widehat{p} q \in \sum_{a\neq b} \widehat{a} b$ : 
 take expansion 
 $x \xrightarrow{\clubsuit \spadesuit} x \land y \xrightarrow{\widehat{p}} x \land y \xrightarrow{w} z$. 
  \item $w\in \Sigma\, (\widehat{\$} + \widehat{\$}')$ :
 take expansion 
 $x \xrightarrow{\clubsuit \spadesuit} x \land y \xrightarrow{\epsilon} x \land y \xrightarrow{w} z$. 
  \item $w = \diamond \widehat{p} \in (\$ + \$') \,\widehat{\Sigma} $ :
 take expansion 
 $x \xrightarrow{\clubsuit \spadesuit} x \land y \xrightarrow{\diamond \widehat{p}} x \land y \xrightarrow{w} z$. 
  \item $w = \widehat{\diamond} \widehat{p} q \diamond \in (\widehat{\$} + \widehat{\$}') \,\widehat{\Sigma}\, \Sigma \,(\$ + \$')$ :
 take expansion 
 $x \xrightarrow{\clubsuit \spadesuit} x \land y \xrightarrow{\widehat{\diamond}\widehat{p}} x \land y \xrightarrow{w} z$. 
   \item $w = \widehat{\bigstar} \bigstar \in (\widehat{\blacksquare} + \widehat{\blacksquare}')(\blacksquare + \blacksquare')$ :
 take expansion 
 $x \xrightarrow{\clubsuit \spadesuit} x \land y \xrightarrow{\widehat{\bigstar} \widehat{\diamond} \widehat{a}} x \land y \xrightarrow{w} z$, for suitable $\widehat{\diamond}\in (\widehat{\$} + \widehat{\$}')$ and $\widehat{a}\in\widehat{\Sigma}$. 
    \item $w\in \widehat{\$}_\infty\, \Sigma$ :
 take expansion 
 $x \xrightarrow{\clubsuit \spadesuit} x \land y \xrightarrow{\widehat{\$}_\infty} x \land y \xrightarrow{w} z$. 
   \item $w = \widehat{p} \$_{\infty} \in \widehat{\Sigma}\, \$_{\infty}$ :
 take expansion 
 $x \xrightarrow{\clubsuit \spadesuit} x \land y \xrightarrow{\widehat{p}} x \land y \xrightarrow{w} z$. 
 \end{itemize}
  Note that $\clubsuit \spadesuit \in K_{dummy}\subseteq K$,  and  
 $\widehat{p}, \epsilon, \widehat{\$}_\infty \in L$, and $\diamond \widehat{p},  \widehat{\diamond}\widehat{p}, 
 \widehat{\bigstar} \widehat{\diamond} \widehat{a} \in L$, for 
 $\diamond\in (\$ + \$')$, $\widehat{\diamond}\in (\widehat{\$} + \widehat{\$}')$, 
 and $\widehat{\bigstar} \in (\widehat{\blacksquare} + \widehat{\blacksquare}')$.

For the backward direction, suppose that $Q_2(F)^{\ani} \neq \emptyset$. 
 If suffices to show that the expansion of $Q_2$ mapping to $F$ cannot use simultaneously words in $K_{dummy}$ 
 and $M_{dummy}$. It is possible to check that any mapping of an expansion $x \xrightarrow{w}  x  \land x \xleftarrow{w'} y \land y \xrightarrow{w''} z$, 
where $w\in K_{dummy}$, $w'\in L$ and $w''\in M_{dummy}$, maps $y$ to a variable $\bullet$ such that the labels of all outgoing edges of $\bullet$ belongs to the set 
$$\{\square, \Ic, \#, \widehat{\blacksquare}, \widehat{\blacksquare}', \widehat{\Sigma}, \widehat{\$}, \widehat{\$}', \widehat{\square}, \widehat{\Ic}, \blacksquare, \blacksquare', \Sigma, \#_\infty, 
 \widehat{\#}_\infty, \$_\infty, \widehat{\$}_\infty\}$$ 
However, this set of symbols is disjoint from $M_{dummy}$ and hence $z$ cannot be mapped to any variable. 
 \end{proof}

\section{Full proof of Theorem~\ref{thm:crpqfin-crpqfin-lower}}
\label{sec:app-cq-crpq-sf-atom-full}

	We show that even when all languages on the right-hand side are of the form $\set{ w }$ with $|w| \leq 2$ we have \pitwo-hardness for containment. For this, we show how to adapt the proof of \pitwo-hardness of \cite[Theorem~4.3]{FigueiraGKMNT20}, which shows \pitwo-hardness for \CRPQfin/\CQ containment for the standard semantics.\footnote{Actually, it shows hardness for the fragment where the left-hand side can only have regular expressions of the form $a_1 + \dotsb + a_n$. In some sense, we simulate disjunction with the choice of an atom-injective expansion for a CQ.}
	
	We use a reduction from
	$\forall \exists$-QBF. The main idea is to use
	sets $\{t,f\}$ in $Q_1$ to encode true or false. 
	
	More precisely,  let
	\[
	\Phi\quad =\quad \forall x_1, \ldots, x_n\; \exists y_1, \ldots, y_\ell\;
	\varphi(x_1, \ldots, x_n,y_1, \ldots, y_\ell)
	\]
	be an instance of
	$\forall \exists$-QBF such that $\varphi$ is quantifier free and in
	3-CNF. We construct boolean queries $Q_1$ and $Q_2$ such that
	$Q_1 \subseteq_\ani Q_2$ if and only if $\Phi$ is satisfiable.

	\begin{figure*}
		\includegraphics[width=.65\textwidth]{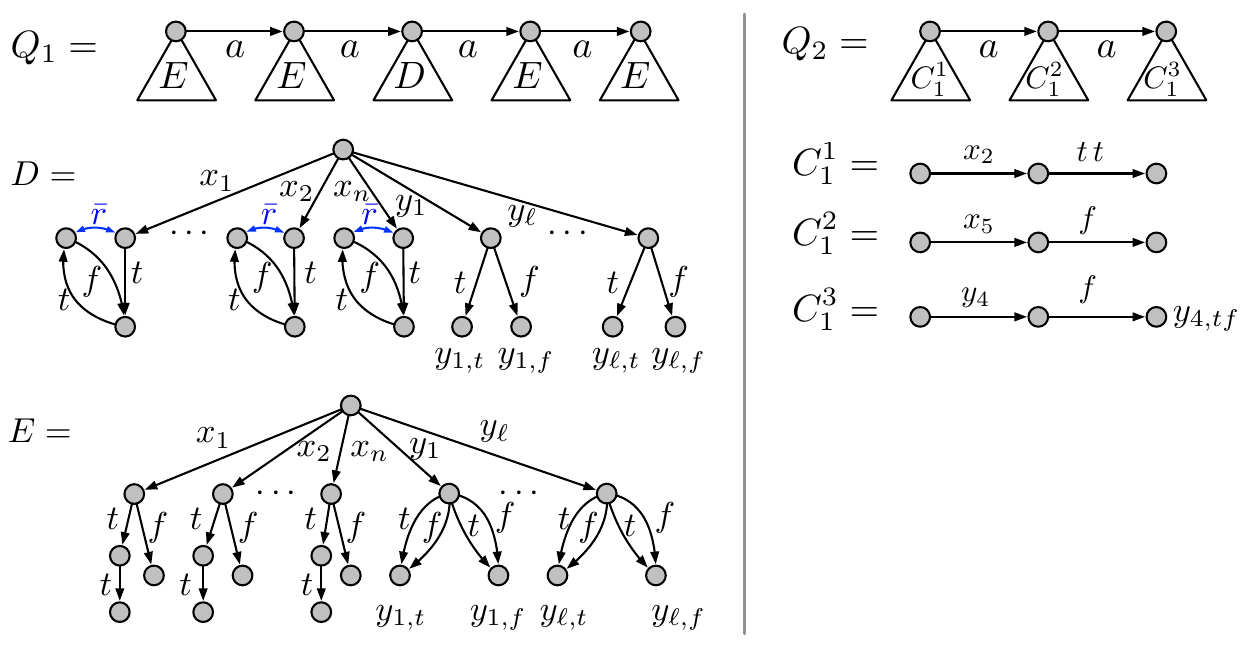}
		\caption{Query $Q_1$ used in Theorem~\ref{thm:crpqfin-crpqfin-lower} and the gadgets $D$, and $E$ used in its definition. For the $r$ edge relation (in blue) we depict the edges of its complement (\ie, the edges which are \emph{not} in relation $r$). Example of $Q_2$ for the formula $\phi = (x_2 \lor \lnot x_5 \lor \lnot y_4)$.}
		\label{fig:pi-p-3-hard-app}
	\end{figure*}

	The \textbf{query $Q_1$} is sketched in
	Figure~\ref{fig:pi-p-3-hard-app} and built as follows: The basis is an $a$-path of
	length 4. We add 4 gadgets $E$ to the outer
	nodes of the path and one gadget $D$ at the innermost.  The choice of 4 $E$ gadgets 
	surrounding the $D$ gadget will be made clear once we discuss $Q_2$. 
		Basically,
	the $E$-gadgets will accept everything while the $D$-gadget will
	ensure that the chosen literal evaluates to true. The gadgets are also depicted 
	in Figure~\ref{fig:pi-p-3-hard-app}. The gadgets are constructed as follows.

	\newcommand{\yit}{\ensuremath{y_{i,t}}\xspace}
	\newcommand{\yif}{\ensuremath{y_{i,f}}\xspace}
	\newcommand{\yitf}{\ensuremath{y_{i,tf}}\xspace}
	\newcommand{\yktf}{\ensuremath{y_{k,tf}}\xspace}

	The \textbf{gadget $D$} is constructed such that the root node has
	one outgoing edge for each variable in $\Phi$, that is, $n+\ell$
	many. Each edge is labeled differently, that is,
	$x_1, \ldots, x_n, y_1, \ldots, y_\ell$. After each $x_i$-edge we
	add a $t$-edges which
	leads to a different node.  From each of these nodes we have  cycle of length 2 reading $t \, f$.
	For each $i \in \{1,\ldots, \ell\}$ we do
	the following. We add a $t$-edge to a node we name \yit after the
	$y_i$-edge and an edge labeled $f$ that leads to a node we name
	\yif.  We named these nodes because we need those nodes also in the
	$E$-gadgets. Nodes with the same names across gadgets are actually the same node. 
	
	Each \textbf{gadget $E$} is constructed similar to the $D$
	gadget. The root node has one outgoing
	edge for each variable in $\Phi$, that is $n+\ell$ many. Each edge
	is labeled differently, that is
	$x_1, \ldots, x_n, y_1, \ldots, y_\ell$. After each $x_i$-edge we
	add path of length 2 reading $t\, t$- and an $f$-edge. Each of those edges leads to a different node.  After each $y_i$-edge we add a $t$-edge and an $f$-edge to both \yit and to \yif.
	
	We now explain the construction of $Q_2$. An example is given in Figure~\ref{fig:pi-p-3-hard-app}.
	For each clause $i$, \textbf{query $Q_2$} has a small DAG, 
	which might share nodes ($\yktf$) with the DAGs constructed for the other clauses.  
	For clause $i$, we construct
	$C_i^1$, with an  $a$-edge to the gadget $C_i^2$, and from there again a 
	$a$-edge to the gadget $C_i^3$.
	
	The gadget $C_i^j$ represents the $j$th literal in the $i$th
	clause. Since the QBF is in 3-CNF, we have $j \in \{1,2,3\}$.
	If the literal is the positive variable $x_k$,
	$C_i^j$ is a path labeled $x_k\, t\, t$. If it is the
	negative variable $\neg x_k$, $C_i^j$ is a path labeled
	$x_k f$. If the
	literal is the positive variable $y_k$, $C_i^j$ is a path
	labeled $y_k t$ and it ends in a node we call $\yktf$ and, if it
	is the negative variable $\neg y_k$, $C_i^j$ is a path
	labeled $y_k f$ and it ends in $\yktf$, too. 
	
	This completes the construction. We will now give some intuition.
	The gadget $D$ controls via the $\{t,f\}$ (simple) paths, which variables
	$x_i$ are set to true and which to false. We will consider it \emph{false} 
	whenever there is a $(x_i \, f)$-path, meaning that the two nodes non-related via $r$ are equal in the \ani-expansion of $Q_1$.
	Otherwise, $x_i$ is set to true. Observe that whenever $x_i$ is false, it is not possible to map in an injective way any path $v \xrightarrow{x_i} v' \xrightarrow{t \, t} v''$ coming from a clause encoded from $Q_2$. And vice-versa, whenever $x_i$ is true there is no way to map a $(x_i  \, f)$-path.
	Hence, depending on this, we can either map
	$C_i^j$ into it or not. 
	The $E$ gadgets are constructed such
	that every $C_i^j$ can be mapped into it. The choice of the \ani-expansion of $Q_2$ determines which 
	path should be mapped into $D$ and, therefore, which literal should be verified. 
	The structure of $Q_1$ where two $E$ gadgets each surround the $D$ gadget aids in embedding the clauses
	$C_i^1, C_i^2, C_i^3$ for each $i$ in \ani-expansion $E_1$. If the $i$th clause is $(x_2 \vee \neg y_1 \vee \neg x_3)$, we have the assignment of $f$ to $x_2$, $t$ to $x_3$, then we can embed 
	$C_i^1, C_i^3$ in the second and third $E$'s, and $\neg y_1$ can be embedded in $y_{1f}$ in $D$. Embedding 
	$\neg y_1$  in $y_{1f}$ fixes the assignment $f$ to $y_1$ across all 
	gadgets $E, D$, and all clauses in $Q_2$. Likewise, for a clause   
	 $(x_1 \vee \neg x_4 \vee y_5)$ in $\Phi$, and an assignment $f$ to $x_1$, 
	 $t$ to $x_4$ in the canonical model $G$, we can embed $x_1, \neg x_4$ in the first and second $E$'s and 
	 $y_5$ to the node $y_{5t}$.

	 We will now show
	 correctness, that is: $Q_1 \subseteq_\ani Q_2$ if and only if $\Phi$ is
	 satisfiable.  Let $Q_1 \subseteq_\ani Q_2$. Then there exists
	 an injective homomorphism from $Q_2$ to each \ani-expansion of $Q_1$. The
	 \ani-expansions of $Q_1$ look exactly like $Q_1$ except that each
	 some pairs of vertices non-related via $r$ may have been identified together.
	 
	Let $B$ be an arbitrary \ani-expansion of $Q_1$ and
	$D_B$ the gadget $D$ in $B$. 
	We define $\theta_B(x_i)=1$ if there are two distinct vertices non-related via $r$ accessible via $x_i$, and $\theta_B(x_i)=0$ otherwise.
	Let $h$ be an injective homomorphism mapping an \ani-expansion of $Q_2$ to $B$.
	We furthermore define $\theta_B(y_i)=1$ if $h$ maps
	\yitf to \yit and $\theta_B(y_i)=0$ otherwise, \ie, if
	\yitf is mapped to \yif.  We now show that $\theta_B$ is well-defined and
	satisfies $\varphi$.  It is obvious that each $C_i^j$ will be mapped either to
	the gadget $D_B$ or to $E$ and that for each $i \in \{1,\ldots, m\}$
	exactly one $C_i^j$ is mapped to $D_B$.  
	If $C_i^j$ corresponds to $x_k$, i.e.,
	it is a path labeled $x_k t$, then it can only be mapped into $D_B$
	if $\theta_B(x_k)=1$. Analogously, if $C_i^j$ corresponds to
	$\neg x_k$, it is a path labeled $x_k f$, and can
	therefore only be mapped into $D_B$ if $\theta_B(x_k)=0$.  If
	$C_i^j$ corresponds to $y_k$ or $\neg y_k$, it can always be
	mapped into $D_B$, but since \yktf can only be mapped either to $y_{k,t}$
	or $y_{k,f}$, we can either map positive $y_k$ into $D_B$ or negative
	ones, but not both. Therefore, the definition of $\theta_B(y_k)$ is
	unambiguous, and it indeed satisfies $\varphi$.
	
	Since $B$ is arbitrary, we obtain a choice $y_1, \ldots, y_\ell$ for
	all possible truth-assignments to $x_1,\ldots, x_n$ this
	way. Therefore, $\Phi$ is satisfiable.

	For the only if direction let $\Phi$ be satisfiable. Then we find
	for each truth-assignment to $x_1, \ldots, x_n$ an assignment to
	$y_1, \ldots, y_\ell$ such that
	$\varphi(x_1, \ldots, x_n, y_1, \ldots, y_\ell)$ is true. Let
	$\theta$ be a function that, given the $x_i$, returns an assignment
	for all $y_i$ such that the formula evaluates to true. We will show
	how to map $Q_2$ into an arbitrary \ani-expansion $B$ of $Q_1$.

	Let $B$ and $\theta$ be given. Let $D_B$ be again the gadget $D$ in $B$.
	We use $\theta$ to obtain truth-values for
	$y_1, \ldots, y_\ell$ as follows.  Since this assignment is satisfiable, there
	is a literal that evaluates to \emph{true} in each clause. We map this
	literal to $D_B$ and the others in this clause to gadgets $E$.  If this
	literal is $x_i$, then we can map to the $(x_i\, t\, t)$-path in $D_B$. If it is $\neg x_i$, then we can map to the
	$(x_i\, f)$-path in $D_B$.  If the
	literal is $y_i$, we can map the $(y_i \, t)$-path ending in $\yitf$ to
	$D_B$. This also implies that each $\yitf$ in $Q_2$ is mapped to \yit,
	which is no problem since each path mapped to $E$ can choose freely
	between \yit and \yif and, since $\theta$ is a function, we only
	have either $\theta(y_i)=1$ or $\theta(y_i)=0$.  Analogously, if the
	literal is $\neg y_i$, we can map the $(y_i \, f)$-path ending in
	$\yitf$ to $D_B$, which implies that each $\yitf$ in $Q_2$ is mapped to
	\yif.
\qed

\medskip

Observe that under standard semantics, the right-hand query $Q_2$ in the reduction of Theorem~\ref{thm:crpqfin-crpqfin-lower} above would be in fact equivalent to a CQ, but under \ani semantics this is not the case.

\section{Other results}
\label{app:other-results}
The following theorem summarizes all other complexity results which complete the picture of complexity results of Figure~\ref{fig:summary}, whose proofs can be found below.

\begin{theorem}[Restatement of Theorem~\ref{thm:other}]
    \hfill
    \begin{enumerate}
        \item The \textup{\CQ/\CRPQ} and \textup{\CQ/\CQ} containment problems are \np-complete under query-injective semantics. (Proposition~\ref{prop:cq-crpq-cq-inj})
        \item The \textup{\CQ/\CQ} containment problem under atom-injective semantics is \np-complete. (Corollary~\ref{cor:cq-cq-sp})
        \item The \textup{\CRPQ/\CQ} and \textup{\CRPQfin/\CQ} containment problems are \pitwo-hard, under standard and atom-injective semantics. (Proposition~\ref{prop:crpq-cq-ani-hard})
        \item The \textup{\CRPQ/\CQ} and \textup{\CRPQfin/\CQ} containment problems are in \pitwo, under all semantics. (Proposition~\ref{prop:crpq-cq-all-upper})
        \item The \textup{\CRPQ/\CRPQfin} containment problem is \pspace-hard under all semantics. (Proposition~\ref{prop:crpq-crpqfin-pspace-h})
        \item The \textup{\CRPQ/\CRPQfin} containment problem is in \pspace under standard semantics. (Proposition~\ref{prop:crpq-crpqfin-pspace-st})
        \item The \textup{\CRPQfin/\CRPQ} containment problem is in \pitwo, under all semantics. (Proposition~\ref{prop:crpqfin-crpq-upper})
    \end{enumerate}
\end{theorem}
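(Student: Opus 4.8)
The plan is to exploit the expansion-based characterizations of containment (Propositions~\ref{prop:cont-char-exp-st}, \ref{prop:cont-char-qinj}, \ref{prop:cont-char-ainj}) together with the \np-membership of CRPQ evaluation (Proposition~\ref{prop:eval-np-c}), and to recycle the reductions of \cite{FigueiraGKMNT20} for the matching lower bounds. For the standard semantics, items~(3)--(7) are essentially already in the literature (\cite{CGLV00,Florescu:CRPQ,FigueiraGKMNT20}), so the work is to (a) obtain the injective-semantics versions and (b) supply the remaining hardness and membership proofs. Throughout, the decisive structural fact is that whenever one side of the containment is a \CQ or lies in \CRPQfin, the corresponding set of ($\star$-)expansions consists of polynomially-sized CQs --- finitely or infinitely many, but of bounded size --- which is what collapses the naive \expspace bound.

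\emph{The \np results (items (1)--(2)).} For \CQ/\CRPQ and \CQ/\CQ under \qni: by Proposition~\ref{prop:cont-char-qinj}, $Q_1 \subseteq_\qni Q_2$ iff there is $E_2 \in \Exp{}(Q_2)$ with $E_2 \injto Q_1$ (the only expansion of the left \CQ being $Q_1$ itself, modulo $\epsilon$); since $E_2$ must inject into $Q_1$ it has at most $|\vars(Q_1)|$ variables, hence is of polynomial size, so non-containment is tested by guessing $E_2$ and an injective homomorphism, and \np-hardness is inherited from \CQ/\CQ containment (subgraph-isomorphism-style). For \CQ/\CQ under \ani, the key observation is that for a \CQ each atom is a single letter, so an atom-injective homomorphism is exactly a homomorphism mapping no edge to a self-loop, and an \ani-expansion of a \CQ is exactly a quotient collapsing no edge onto a self-loop; composing a loop-free homomorphism $Q_2 \to Q_1$ with any such quotient map again yields a loop-free homomorphism, so $Q_1 \subseteq_\ani Q_2$ holds iff there is a loop-free homomorphism $Q_2 \to Q_1$ (take the trivial quotient for the converse). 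This is an \np test, and \np-hardness follows as before.

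\emph{The \pitwo and \pspace results (items (3)--(7)).} For \CRPQfin/\CRPQ in \pitwo (item~(7)): the $\star$-expansions of the left query are polynomially-sized CQs, so $Q_1 \not\subseteq_\star Q_2$ iff there is such an $E_1$ with $\bar y \notin Q_2(E_1)^\star$; the outer quantifier is an \np guess (of an expansion profile, and of the identifications for \ani), and $\bar y \in Q_2(E_1)^\star$ is in \np by Proposition~\ref{prop:eval-np-c}, giving \sigmatwo for non-containment. For \CRPQ/\CQ in \pitwo (item~(4)): the right side being a \CQ, a $\star$-homomorphism from it inspects only a bounded-radius neighbourhood of its image, so a pumping argument on the NFAs of $Q_1$ truncates long expansion words of $Q_1$ without destroying a counter-example; hence it suffices to range the universal quantifier over polynomially-bounded $\star$-expansions $E_1$ of $Q_1$, and then $\bar y \in Q_2(E_1)^\star$ is again an \np condition, uniformly for all three semantics. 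The matching \pitwo-hardness for standard and \ani semantics (item~(3)) is obtained by adapting the \CRPQfin/\CQ reduction of \cite[Thm~4.4]{FigueiraGKMNT20}, ensuring the relevant expansions have no self-loops so that standard and \ani coincide on the instances produced. For \CRPQ/\CRPQfin: \pspace-hardness under all semantics (item~(5)) follows by encoding a \pspace-complete problem (e.g.\ corridor tiling / bounded-space TM acceptance) with $Q_1$ a single starred atom generating candidate computation histories and $Q_2 \in \CRPQfin$ a conjunction of polynomially many bounded-length ``defect-detecting'' atoms, engineered so that the only relevant expansions are long simple paths, making injectivity vacuous; and \pspace-membership under standard semantics (item~(6)) follows because $Q_2 \in \CRPQfin$ is equivalent to an exponential union of polynomial-sized CQs, so the \expspace automata-theoretic algorithm for \CRPQ/\CRPQ only needs to track polynomially much information, yielding an on-the-fly \pspace computation.

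\emph{Main obstacle.} I expect the delicate points to be (i) the pumping argument of item~(4) for the two injective semantics, where truncating an expansion word must be done far enough from every possible image of $Q_2$ so as not to create spurious vertex coincidences or to destroy injectivity; and (ii) engineering the \CRPQfin queries in the \pspace-hardness of item~(5) so that purely local, bounded-length atoms suffice to reject all invalid histories --- in particular, handling the ``global'' acceptance/reachability condition with a finite-language query.
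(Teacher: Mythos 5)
Your items (1), (2), (3) and (7) match the paper's proofs essentially verbatim: (1) reduces containment from a \CQ to evaluation over the left-hand query viewed as a graph database (Proposition~\ref{prop:cq-crpq-cq-inj}); (2) is exactly Lemma~\ref{lem:char-CQ-CQ-sp} --- your ``loop-free'' homomorphisms are the paper's ``non-contracting'' ones; (3) reuses the reduction of \cite{FigueiraGKMNT20}, noting that with single-letter languages and no self-loops the standard and \ani{} containments coincide (Proposition~\ref{prop:crpq-cq-ani-hard}); and (7) guesses a linear-size $\star$-expansion of the left query and invokes Proposition~\ref{prop:eval-np-c} (Proposition~\ref{prop:crpqfin-crpq-upper}).

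The genuine gap is in item (4), and it propagates to your treatment of (6). You claim that a pumping argument ``truncates long expansion words of $Q_1$ without destroying a counter-example'', so that the universal quantifier may range over polynomially bounded $\star$-expansions. This is not sound as stated: deleting a factor from a long atom expansion can \emph{create} a homomorphic image of $Q_2$ straddling the seam where the deletion occurred, so the truncated expansion need not remain a counter-example; moreover the shortest genuine counter-example may be super-polynomial. The paper's proof (Proposition~\ref{prop:crpq-cq-all-upper}) never materializes a short counter-example. It replaces the middle of each long atom expansion by a fresh marker $\#$ (so the $N$-neighbourhoods of $Q_1$'s variables are preserved exactly and nothing of $Q_2$ can map onto $\#$), existentially guesses a single connected component $\hat Q_2$ of $Q_2$ to defeat, and splits into two cases: $\hat Q_2$ maps near a variable of $Q_1$ (checked on the marked structure, in co-\np), or $\hat Q_2$ maps entirely inside one atom expansion $u w v$ (checked by a separate condition over all completions $w$ with $uwv \in L$, which is co-\np only because a connected \CQ{} mapping into a directed path is $\star$-equivalent to a directed path, bounding the relevant $w$ polynomially). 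Your sketch gestures at the bounded-radius observation but does not resolve the inside-a-path case, which is precisely where naive truncation fails. For item (6) the paper runs the same two-case scheme (Proposition~\ref{prop:crpq-crpqfin-pspace-st}), the inside-a-path condition becoming an instance of regular-language intersection emptiness (hence \pspace); your proposed on-the-fly specialization of the \expspace automaton leaves the claim that ``only polynomially much information'' needs tracking unsubstantiated. Finally, for item (5) the paper builds no new reduction: it observes that for Boolean single-atom queries $Q_i() = x \xrightarrow{L_i} y$ all three semantics coincide (expansions are simple paths, so injectivity is vacuous, as you correctly note) and cites \cite[Theorem~4.5]{FigueiraGKMNT20}; your from-scratch sketch also has a polarity problem, since a \emph{conjunction} of defect-detecting atoms would require every defect to occur, whereas non-containment needs a word of $L_1$ avoiding \emph{each} defect, i.e., the defect patterns must be gathered into a single finite union inside one atom.
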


%%%%%%%%%%%%%%%%%%%%%%%%%%%%%%%%

\begin{proposition}\label{prop:cq-crpq-cq-inj}
	The \textup{\CQ/\CRPQ} and \textup{\CQ/\CQ} containment problems are \np-complete under query-injective semantics.
\end{proposition}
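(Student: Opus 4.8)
The plan is to prove the \np\ upper and lower bounds separately, in both cases through the expansion characterisation of query-injective containment (Proposition~\ref{prop:cont-char-qinj}). The starting point is the observation that a conjunctive query $Q_1$ has no $\epsilon$-atoms and no equality atoms, so its only expansion is $Q_1$ itself, $\Exp{}(Q_1) = \{Q_1\}$; hence Proposition~\ref{prop:cont-char-qinj} specialises to the statement that $Q_1 \subseteq_{\qni} Q_2$ holds if and only if there is some $E_2 \in \Exp{}(Q_2)$ with $E_2 \injto Q_1$.

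For the \np\ upper bound (which will cover \CQ/\CRPQ, hence also \CQ/\CQ), the main step is to bound the size of a witnessing expansion $E_2$. If $E_2 \injto Q_1$ via an injective $h$, then $|\vars(E_2)| \le |\vars(Q_1)|$. Moreover, in any expansion of a CRPQ the equality atoms only identify endpoints of the original atoms, so every fresh internal variable introduced by the word $w$ chosen for an atom $x \xrightarrow{L} y$ of $Q_2$ survives the collapse in $E_2$; therefore the total length of the chosen words is at most $|\vars(Q_1)|$ plus the number of atoms of $Q_2$. The \np\ algorithm then guesses, for each atom $x \xrightarrow{L} y$ of $Q_2$, a word $w$ of at most this length, checks $w \in L$ in polynomial time with an NFA for $L$, assembles the expansion $E_2 = \widetilde{E}_2^{\collapse}$ from these choices, guesses a map $h\colon \vars(E_2) \to \vars(Q_1)$, and verifies in polynomial time that $h$ is injective, is a homomorphism, and sends the free-variable tuple of $E_2$ to that of $Q_1$. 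By the discussion above, this accepts precisely when $Q_1 \subseteq_{\qni} Q_2$.

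For \np-hardness it suffices to treat \CQ/\CQ, which is a sub-case of \CQ/\CRPQ. For two Boolean CQs $Q_1, Q_2$, each equal to its unique expansion, Proposition~\ref{prop:cont-char-qinj} says that $Q_1 \subseteq_{\qni} Q_2$ holds if and only if $Q_2 \injto Q_1$, that is, if and only if $Q_2$ embeds into $Q_1$ as a (labelled) subgraph. Since subgraph isomorphism is \np-complete, this already gives \np-hardness; concretely, reduce from \textsc{Clique}: from an instance $(G,n)$ take $Q_2$ to be the Boolean CQ of the $n$-clique over a single symbol $E$ (with an atom in both directions between any two of its $n$ variables) and $Q_1 = Q_G$ the analogous CQ of $G$, so that $Q_2 \injto Q_1$ iff $G$ has an $n$-clique, i.e.\ iff $Q_1 \subseteq_{\qni} Q_2$~\cite{DBLP:books/fm/GareyJ79}.

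I do not expect a genuine obstacle here. The one point that needs care is the polynomial size bound on the witnessing expansion $E_2$ in the membership argument — this is exactly where query-injectivity is exploited, and it fails under standard semantics, where \np\ membership of \CQ/\CRPQ requires a different, automata-based argument (cf.~\cite{FigueiraGKMNT20}). A minor bookkeeping issue is that free variables of $Q_2$ may themselves be identified in $\widetilde{E}_2$ and must be mapped into free variables of $Q_1$, but this does not affect the complexity.
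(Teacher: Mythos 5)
Your proof is correct and takes essentially the same route as the paper: the paper observes that for a CQ $Q_1$ (its own unique expansion) containment reduces in both directions to the evaluation problem $\bar x \in Q_2(Q_1)^{\qni}$ and then invokes Proposition~\ref{prop:eval-np-c}, whose upper bound is exactly your linear-size expansion witness and whose hardness is exactly your subgraph-isomorphism/clique argument. You merely inline those two steps rather than citing the evaluation result.
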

\begin{proof}
For the upper bound, let $Q_1(\bar x)$ be a CQ and $Q_2(\bar y)$ a CRPQ.
Remember that for any $\star \in \set{\ani, \qni}$, we have that $Q_1 \subseteq_\star Q_2$ if{f} $\bar x \in Q_2(Q_1)^\star$, where $Q_1$ is seen as a graph database. By Proposition~\ref{prop:eval-np-c} we then have \np-membership.

The lower bound follows by a direct reduction from the respective evaluation problem for CQ together with Proposition~\ref{prop:eval-np-c}.
\end{proof}

%%%%%%%%%%%%%%%%%%%%%%%%%%%%%%%%

Let us call a homomorphism $h: A \to B$ \defstyle{contracting} if for some $x \xrightarrow{a} y$ in $A$ such that $x \neq y$, we have $h(x)=h(y)$. Observe that the composition of two non-contracting homomorphism is non-contracting.
\begin{lemma}\label{lem:char-CQ-CQ-sp}
For any two \CQ $Q_1, Q_2$, the following are equivalent:
\begin{enumerate}
	\item there is a non-contracting homomorphism $Q_2 \to Q_1$,
	\item $Q_1 \subseteq_\ani Q_2$.
\end{enumerate}
\end{lemma}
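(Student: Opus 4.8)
The plan is to prove both directions of the equivalence, using the characterization of atom-injective containment from Proposition~\ref{prop:cont-char-ainj} and the observation that for CQs the space of expansions collapses.

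\textbf{From (1) to (2).} Suppose $g: Q_2 \to Q_1$ is a non-contracting homomorphism. I would show $Q_1 \subseteq_\ani Q_2$ using Proposition~\ref{prop:cont-char-ainj}, specifically the equivalence of items \eqref{it:prop:cont-char-ainj:1} and \eqref{it:prop:cont-char-ainj:2}: it suffices to check that for every $F_1 \in \Exp{\ani}(Q_1)$ there is $E_2 \in \Exp{}(Q_2)$ with $E_2 \ainjto F_1$. Since $Q_2$ is a CQ, its only expansion (up to renaming) is $Q_2$ itself (each atom is a single letter, so the $w$-expansion is just the atom), so I must exhibit an atom-injective homomorphism $Q_2 \ainjto F_1$. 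Now any $F_1 \in \Exp{\ani}(Q_1)$ is obtained from the expansion $Q_1$ of $Q_1$ (again $Q_1$ is a CQ, so its unique expansion is itself) by possibly identifying some non-atom-related variables; let $\Phi: \vars(Q_1) \to \vars(F_1)$ be the canonical renaming. The composition $\Phi \circ g: Q_2 \to F_1$ is a homomorphism. I claim it is atom-injective. For the relevant expansion profile of $Q_2$ (the trivial one), two variables of $Q_2$ are $\varphi$-atom-related exactly when they both occur in a common atom $x \xrightarrow{a} y$ of $Q_2$. For such $x \neq y$ we have $g(x) \neq g(y)$ since $g$ is non-contracting, and moreover $g(x), g(y)$ lie in the common atom $g(x) \xrightarrow{a} g(y)$ of $Q_1$, hence they are $\varphi$-atom-related in $Q_1$; by definition of $\Exp{\ani}(Q_1)$, atom-related variables of $Q_1$ are kept distinct by $\Phi$, so $\Phi(g(x)) \neq \Phi(g(y))$. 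Thus $\Phi \circ g$ is an atom-injective homomorphism from $Q_2$ to $F_1$, i.e., $Q_2 \ainjto F_1$, giving item \eqref{it:prop:cont-char-ainj:2}.

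\textbf{From (2) to (1).} Suppose $Q_1 \subseteq_\ani Q_2$. I would instantiate Proposition~\ref{prop:cont-char-ainj}\eqref{it:prop:cont-char-ainj:2} on a particular $\ani$-expansion of $Q_1$ chosen to force non-contraction. The natural candidate is $F_1 = Q_1$ itself, which is an $\ani$-expansion of $Q_1$ (take the empty conjunction $J$ of equality atoms). Then there is $E_2 \in \Exp{}(Q_2)$ with $E_2 \ainjto Q_1$; since $Q_2$ is a CQ, $E_2 = Q_2$, so we get an atom-injective homomorphism $h: Q_2 \ainjto Q_1$. By definition of atom-injectivity, $h$ is in particular a homomorphism $Q_2 \to Q_1$, and for every atom $x \xrightarrow{a} y$ of $Q_2$ with $x \neq y$, the variables $x,y$ are atom-related, hence $h(x) \neq h(y)$. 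Therefore $h$ is non-contracting, establishing item (1). (One should also confirm that $h$ respects free variables, which follows since homomorphisms between CQs map free variables to free variables; this is built into the statement $E_2 \ainjto F_1$.)

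\textbf{Main obstacle.} The proof is essentially routine once the right expansion $F_1 = Q_1$ is used and once one unwinds the definitions of $\Exp{\ani}$, atom-relatedness, and atom-injective homomorphism; there is no real combinatorial difficulty. The one point requiring care is the bookkeeping around expansion profiles and the canonical renaming $\Phi$ in the ``(1) implies (2)'' direction — one must be careful that ``$\varphi$-atom-related'' is defined via the expansion profile $\varphi$ producing the expansion, and verify that the trivial profile of $Q_2$ yields exactly the ``co-occur in an atom'' relation, and that $\Phi$ on $Q_1$'s side separates atom-related pairs. Making these identifications precise (CQ expansions are the CQ itself, $\ani$-expansions of a CQ are its variable-identifications keeping atom-related variables apart) is the only place where one could slip.
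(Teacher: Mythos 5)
Your proof is correct and follows essentially the same route as the paper's: both directions reduce to the expansion characterization of $\subseteq_\ani$ (Proposition~\ref{prop:cont-char-ainj}), use that a CQ's unique expansion is itself and that the canonical renaming of an $\ani$-expansion of a CQ keeps atom-related variables apart (hence is non-contracting), and compose homomorphisms. The only cosmetic difference is that you work with item~(2) of the proposition (atom-injective homomorphisms from $Q_2$ itself) where the paper works with item~(3) (injective homomorphisms from an $\ani$-expansion of $Q_2$ built by collapsing along the composed map); these are interchangeable by Lemma~\ref{lemma:ainj-expansions}.
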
 
\begin{proof}
	From top to bottom, let $h: Q_2 \to Q_1$ be a non-contracting homomorphism, that is, such that for every atom $x \xrightarrow{a} y $ of $Q_2$ we have $h(x) \neq h(y)$.
	Let $E_1 \in \Exp{\ani} (Q_1)$. Observe that there exists $g : Q_1 \to E_1$ which is not contracting, and hence their composition $g(h) : Q_2 \to E_1$ is non-contracting either. Let $U$ be the conjunction of all atoms $x=y$ such that  $g(h(x)) = g(h(y))$, and observe that $E_2 = (Q_2 \land U)^\collapse \in \Exp{\ani}(Q_2)$. Further, we have $g(h): E_2 \injto E_1$. Summing up, for every $E_1 \in \Exp{\ani} (Q_1)$ there is $E_2 \in \Exp{\ani}(Q_2)$ such that $E_2 \injto E_1$, which by the characterization of Proposition~\ref{prop:cont-char-ainj} proves that $Q_1 \subseteq_\ani Q_2$.
	
	From bottom to top, observe that $Q_1 \in \Exp{\ani}(Q_1)$, and hence by Proposition~\ref{prop:cont-char-ainj} we have that there is some $E_2 \in \Exp{\ani}(Q_2)$ such that $h:E_2 \injto Q_1$, which in particular means that $h$ is non-contracting. On the other hand, as argued before, there must be a homomorphism $g: Q_2 \to E_2$ which is non-contracting. Since the composition of non-contracting homomorphisms yields a non-contracting homomorphism, we obtain that $h(g) : Q_2 \to Q_1$ is non-contracting.
\end{proof}
\begin{corollary}\label{cor:cq-cq-sp}
	The \textup{\CQ/\CQ} containment problem under atom-injective semantics is \np-complete.
\end{corollary}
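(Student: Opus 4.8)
The plan is to derive both the upper and the lower bound from the characterization in Lemma~\ref{lem:char-CQ-CQ-sp}, which reduces \CQ/\CQ containment under atom-injective semantics to the existence of a non-contracting homomorphism. For membership in \np, note that by Lemma~\ref{lem:char-CQ-CQ-sp} we have $Q_1 \subseteq_{\ani} Q_2$ if and only if there is a non-contracting homomorphism $h \colon Q_2 \to Q_1$. Such an $h$ is a map from $\vars(Q_2)$ to $\vars(Q_1)$ (sending free variables of $Q_2$ to the corresponding free variables of $Q_1$), hence of size polynomial in the input, and verifying that a guessed map is a homomorphism which moreover satisfies $h(x) \neq h(y)$ for every atom $x \xrightarrow{a} y$ of $Q_2$ with $x \neq y$ can be done in polynomial time. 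So the \np algorithm simply guesses $h$ and checks these conditions.

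For \np-hardness I would reduce from the evaluation problem for Boolean conjunctive queries under atom-injective semantics, which is \np-hard by Proposition~\ref{prop:eval-np-c}: its proof already exhibits, for any graph database $G$ and Boolean conjunctive query $Q$, a Boolean conjunctive query $Q^+$ and a graph $G^+$ with $Q^+(G^+)^{\ani} \neq \emptyset$ if and only if $Q$ has an injective homomorphism into $G$ (subgraph isomorphism). Given a Boolean conjunctive query $Q$ and a graph database $G$, let $Q_G$ be the Boolean conjunctive query naturally associated with $G$ (one atom per edge), so $G$ and $Q_G$ denote the same graph database. Unwinding the definitions — using that for a conjunctive query $Q$ the set $\Exp{}(Q)$ consists of $Q$ alone and the canonical renaming is the identity, so two variables are $\varphi$-atom-related precisely when they share an atom — one gets that $Q(G)^{\ani} \neq \emptyset$ holds iff $Q \ainjto Q_G$ iff there is a non-contracting homomorphism $Q \to Q_G$. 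By Lemma~\ref{lem:char-CQ-CQ-sp} (instantiated with $Q_1 := Q_G$ and $Q_2 := Q$) this is in turn equivalent to $Q_G \subseteq_{\ani} Q$. Hence $(Q,G) \mapsto (Q_G, Q)$ is a polynomial-time reduction from atom-injective evaluation of Boolean conjunctive queries to the \CQ/\CQ containment problem under atom-injective semantics, establishing \np-hardness.

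The only point requiring a little care is the first equivalence in the second paragraph, namely that an atom-injective homomorphism out of the (unique) expansion of a conjunctive query is exactly a non-contracting homomorphism; this is immediate once the $\varphi$-atom-relatedness is spelled out for single-letter atoms, so I do not anticipate any real obstacle. Everything else follows directly from Lemma~\ref{lem:char-CQ-CQ-sp} together with the already established \np-completeness of evaluation (Proposition~\ref{prop:eval-np-c}).
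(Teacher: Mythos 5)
Your proof is correct. The upper bound is exactly the paper's: Lemma~\ref{lem:char-CQ-CQ-sp} reduces containment to the existence of a non-contracting homomorphism, which is clearly checkable in \np. For the lower bound you diverge from the paper. The paper simply observes that the classical 3-colorability reduction for \CQ/\CQ containment under standard semantics \cite{DBLP:conf/stoc/ChandraM77} still works: there, one asks whether $K_3 \subseteq_{\ani} Q_G$, and since $K_3$ has no self-loops every homomorphism $Q_G \to K_3$ is automatically non-contracting, so by Lemma~\ref{lem:char-CQ-CQ-sp} the standard and atom-injective containments coincide on these instances. You instead reduce from atom-injective \emph{evaluation} of Boolean CQs (\np-hard by Proposition~\ref{prop:eval-np-c} via subgraph isomorphism), mapping $(Q,G)$ to the containment instance $(Q_G, Q)$; the chain $Q(G)^{\ani}\neq\emptyset$ iff $Q \ainjto Q_G$ iff there is a non-contracting homomorphism $Q \to Q_G$ iff $Q_G \subseteq_{\ani} Q$ is sound, with the last step being precisely Lemma~\ref{lem:char-CQ-CQ-sp} and the middle one following, as you note, from the fact that a CQ has a unique expansion (itself) in which atom-relatedness means sharing an atom. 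A small caution: your reduction crucially routes through the lemma rather than through a naive ``containment of a CQ equals evaluation over its canonical database'' claim --- the latter would be wrong for $\ani$-semantics, because $\Exp{\ani}(Q_1)$ contains nontrivial quotients of $Q_1$; your write-up avoids this pitfall. Both routes are equally short; the paper's is marginally more self-contained in that it reuses a textbook reduction, while yours has the virtue of exhibiting explicitly that hardness of containment is inherited from hardness of evaluation, via the non-contracting-homomorphism characterization.
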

\begin{proof}
	The upper bound follows from Lemma~\ref{lem:char-CQ-CQ-sp} above. For the lower bound, it is easy to see that the standard reduction from 3-colorability for \CQ under standard semantics \cite{DBLP:conf/stoc/ChandraM77} still applies in this setting.
\end{proof}
\begin{corollary}
	The \textup{$\CQ$/$\CRPQ(A)$} containment problem under simple-path semantics is \np-complete.\footnote{In the jargon of \cite{FigueiraGKMNT20}, $\CRPQ(A)$ are CRPQ whose regular expressions are of the form $a_1 + \dotsb + a_n$.}
\end{corollary}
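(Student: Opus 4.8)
The plan is to derive hardness from the $\CQ/\CQ$ case and membership from a small generalization of Lemma~\ref{lem:char-CQ-CQ-sp}. For the lower bound, observe that every $\CQ$ is in particular a $\CRPQ(A)$ query (a single letter $a$ is the degenerate union $a_1 + \dotsb + a_n$ with one disjunct), so the $\CQ/\CQ$ containment problem under $\ani$ semantics reduces trivially to the $\CQ/\CRPQ(A)$ one; hence $\np$-hardness is inherited from Corollary~\ref{cor:cq-cq-sp}.

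For the upper bound I would prove the following analogue of Lemma~\ref{lem:char-CQ-CQ-sp}: for $Q_1 \in \CQ$ and $Q_2 \in \CRPQ(A)$, we have $Q_1 \subseteq_{\ani} Q_2$ if{f} there is an expansion $E_2 \in \Exp{}(Q_2)$ admitting a non-contracting homomorphism $E_2 \to Q_1$. The key observation is that, since every language of $Q_2$ is a union of single letters, every $w$-expansion of an atom of $Q_2$ is a single edge and introduces no fresh variables; consequently, in any $E_2 \in \Exp{}(Q_2)$ the $\varphi$-atom-related pairs are precisely the pairs of endpoints of a single edge of $E_2$, so a homomorphism out of $E_2$ is atom-injective exactly when it is non-contracting. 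Granting this, the two directions mirror the proof of Lemma~\ref{lem:char-CQ-CQ-sp}: for the ``if'' direction, given a non-contracting $h : E_2 \to Q_1$ with $E_2 \in \Exp{}(Q_2)$ and an arbitrary $F_1 \in \Exp{\ani}(Q_1)$, the third condition in the definition of $\ani$-expansions makes the canonical renaming $\Phi : Q_1 \to F_1$ non-contracting, hence $\Phi \circ h : E_2 \to F_1$ is non-contracting, that is, atom-injective, so $E_2 \ainjto F_1$, and since $F_1$ was arbitrary Proposition~\ref{prop:cont-char-ainj} yields $Q_1 \subseteq_{\ani} Q_2$; for the ``only if'' direction, using $Q_1 \in \Exp{\ani}(Q_1)$, Proposition~\ref{prop:cont-char-ainj} gives some $E_2 \in \Exp{}(Q_2)$ with $E_2 \ainjto Q_1$, which by the observation above is a non-contracting homomorphism $E_2 \to Q_1$.

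With the characterization in place, $\np$-membership is immediate: an expansion profile of $Q_2$ simply selects one letter per atom, so a guessed $E_2$ and a guessed candidate mapping $\vars(E_2) \to \vars(Q_1)$ are both of size polynomial in $Q_1, Q_2$, and checking that the latter is a homomorphism and non-contracting takes polynomial time. I expect the main, and only mildly delicate, step to be the observation equating atom-injectivity with non-contractingness: one must verify that forming $\widetilde{E_2}^{\collapse}$ cannot merge variables coming from distinct atoms of $Q_2$ into a common ``internal'' variable, which is exactly what the absence of Kleene stars --- hence of fresh expansion variables --- guarantees. Everything else is a routine transfer of the arguments already used for $\CQ/\CQ$.
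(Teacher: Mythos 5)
Your proposal is correct and follows exactly the route the paper intends: the paper states this corollary without an explicit proof, as an immediate consequence of Lemma~\ref{lem:char-CQ-CQ-sp} and Corollary~\ref{cor:cq-cq-sp}, and your argument supplies precisely the missing details (the hardness inheritance from \CQ/\CQ, and the observation that for $\CRPQ(A)$ right-hand sides every atom expansion is a single edge, so atom-injectivity degenerates to non-contractingness and the characterization of Lemma~\ref{lem:char-CQ-CQ-sp} lifts verbatim to ``some expansion $E_2$ of $Q_2$ admits a non-contracting homomorphism to $Q_1$''). The only cosmetic difference is that in the forward direction you invoke item (2) of Proposition~\ref{prop:cont-char-ainj} with an atom-injective homomorphism directly, whereas the paper's proof of Lemma~\ref{lem:char-CQ-CQ-sp} builds an \ani-expansion of $Q_2$ and uses item (3); both are sound.
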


%%%%%%%%%%%%%%%%%%%%%%%%%%%%%%%%

\begin{proposition}\label{prop:crpq-cq-ani-hard}
    The \textup{\CRPQ/\CQ} and \textup{\CRPQfin/\CQ} containment problems are \pitwo-hard, under standard and atom-injective semantics.
\end{proposition}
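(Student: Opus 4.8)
The plan is as follows. For \emph{standard} semantics there is nothing new to do: \pitwo-hardness of the \CRPQfin/\CQ containment problem is exactly \cite[Theorem~4.3]{FigueiraGKMNT20} (it holds even when the left-hand regular expressions are unions of single letters), and since every \CRPQfin is a \CRPQ the \CRPQ/\CQ case follows immediately. So the real work is to transfer this to \emph{atom-injective} semantics, and the idea is to show that the very same reduction --- from $\forall\exists$-QBF, producing Boolean queries $Q_1,Q_2$ with $Q_1 \subseteq_{st} Q_2$ iff the input formula $\Phi$ is valid --- also satisfies $Q_1 \subseteq_{\ani} Q_2 \iff Q_1 \subseteq_{st} Q_2$ on its output instances.

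First I would reformulate $\subseteq_{\ani}$ when the right-hand side is a \CQ. Since $\Exp{}(Q_2)=\{Q_2\}$, Proposition~\ref{prop:cont-char-ainj} specializes to: $Q_1 \subseteq_{\ani} Q_2$ iff for every $F_1 \in \Exp{\ani}(Q_1)$ there is an atom-injective homomorphism $Q_2 \ainjto F_1$; and an atom-injective homomorphism out of a \CQ is precisely a \emph{non-contracting} homomorphism (the terminology of Lemma~\ref{lem:char-CQ-CQ-sp}). Moreover it suffices to quantify over \emph{ordinary} expansions $E_1 \in \Exp{}(Q_1)$: any $F_1 = (E_1 \land J)^{\collapse}$ comes with the canonical renaming $\Phi\colon E_1 \to F_1$, which by definition of an $\ani$-expansion keeps the two distinct endpoints of every atom of $E_1$ distinct, so composing a non-contracting homomorphism $Q_2 \to E_1$ with $\Phi$ yields a non-contracting homomorphism $Q_2 \to F_1$ --- this is the same observation used in the proof of Lemma~\ref{lem:char-CQ-CQ-sp}. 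Hence
\[
Q_1 \subseteq_{\ani} Q_2 \iff \text{for every } E_1 \in \Exp{}(Q_1)\text{ there is a non-contracting homomorphism } Q_2 \to E_1,
\]
which differs from the characterization of $\subseteq_{st}$ (Proposition~\ref{prop:cont-char-exp-st}) only by the word ``non-contracting''.

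The second step is to observe that this word is vacuous for the reduction at hand. If no expansion $E_1$ of $Q_1$ contains a self-loop, then every homomorphism $Q_2 \to E_1$ is automatically non-contracting, since an atom $u \xrightarrow{a} v$ of $Q_2$ with $u\neq v$ mapped via $h(u)=h(v)$ would force a self-loop in $E_1$. And $Q_1$ has no expansion carrying a self-loop as soon as no language of $Q_1$ contains $\epsilon$ (so the collapsing step never glues the two ends of an edge) and $Q_1$ has no atom $x \xrightarrow{L} x$ with a one-letter word in $L$; the reduction of \cite[Theorem~4.3]{FigueiraGKMNT20} is naturally of this shape, as it reads the universal assignment off the choice of one letter per disjunctive atom of an essentially acyclic left-hand query, and its right-hand \CQ can be taken self-loop-free. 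For such instances $Q_1 \subseteq_{\ani} Q_2 \iff Q_1 \subseteq_{st} Q_2 \iff \Phi$ is valid, so \pitwo-hardness of \CRPQfin/\CQ under atom-injective semantics follows, and \CRPQ/\CQ follows once more since \CRPQfin $\subseteq$ \CRPQ.

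The step I expect to need the most care is precisely the claim that the reduction can be assumed to produce $Q_1$ with no self-loop in any expansion (and a self-loop-free $Q_2$): either one inspects the construction of \cite{FigueiraGKMNT20} and checks this directly, or, should self-loops be present, one pads both queries with fresh alphabet symbols and vertices whose only role is to absorb any contracting homomorphism, or else replays the reduction by hand in the style of the proof of Theorem~\ref{thm:crpqfin-crpqfin-lower}, now with the two queries' roles exchanged and a ``complement of $r$''-style gadget used to force the relevant vertices apart. Once the no-self-loop property is secured, the remaining work is the routine bookkeeping of a QBF reduction.
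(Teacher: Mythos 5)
Your proposal is correct and follows essentially the same route as the paper: reuse the $\forall\exists$-QBF reduction of \cite[Theorem~4.3]{FigueiraGKMNT20} for the standard case, and for atom-injective semantics observe that since all languages consist of one-letter words and there are no self-loops, every homomorphism from $Q_2$ into an expansion of $Q_1$ is automatically non-contracting, so by the characterization underlying Lemma~\ref{lem:char-CQ-CQ-sp} the $\ani$ and standard containment problems coincide on these instances. The paper simply asserts the self-loop-freeness of the reduction where you hedge, but the argument is the same.
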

\begin{proof}
	The lower bound for standard semantics follows from \cite[Theorem~4.3]{FigueiraGKMNT20}.
	Further, it is easy to see that the \pitwo-hardness proof of \cite[Theorem~4.3]{FigueiraGKMNT20} goes through for atom-injective semantics. This is because, in the reduction, all atoms of queries contain languages of words of length 1, and they have no self-loops. Indeed, as a consequence of Lemma~\ref{lem:char-CQ-CQ-sp}, under such restrictive conditions the \ani and standard containment problems coincide.
\end{proof}

%%%%%%%%%%%%%%%%%%%%%%%%%%%%%%%%

\begin{proposition}\label{prop:crpq-cq-all-upper}
    The \textup{\CRPQ/\CQ} and \textup{\CRPQfin/\CQ} containment problems are in \pitwo, under all semantics.
\end{proposition}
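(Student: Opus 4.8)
The plan is to combine the expansion-based characterizations of Section~\ref{sec:containment} with a polynomial-size counter-example property, and then phrase containment as a $\forall[\np]$ predicate. The starting point is that, since $Q_2$ is a \CQ, its only (ordinary or \ani-) expansion is $Q_2$ itself, i.e.\ $\Exp{}(Q_2)=\Exp{\ani}(Q_2)=\set{Q_2}$; moreover, for a \CQ $E$, $Q_2\ainjto E$ holds if{f} there is a homomorphism $h\colon Q_2\to E$ with $h(x)\neq h(y)$ for every atom $x\xrightarrow{a}y$ of $Q_2$ (a \emph{non-contracting} homomorphism in the sense of Lemma~\ref{lem:char-CQ-CQ-sp}). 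Hence, by Propositions~\ref{prop:cont-char-exp-st}, \ref{prop:cont-char-qinj} and \ref{prop:cont-char-ainj}, for each $\star\in\set{st,\qni,\ani}$ we get that $Q_1\subseteq_\star Q_2$ holds if{f} for every $\star$-expansion $E_1$ of $Q_1$ (an ordinary expansion when $\star\in\set{st,\qni}$, an \ani-expansion when $\star=\ani$) there is a homomorphism $Q_2\to E_1$ which is, respectively, arbitrary, injective, or non-contracting. In each case such a homomorphism can be guessed and verified in nondeterministic polynomial time in $|Q_2|+|E_1|$, so the only obstacle to membership in $\pitwo$ is that $\star$-expansions of $Q_1$ may be arbitrarily large.

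I would remove this obstacle with a small counter-example lemma: whenever $Q_1\not\subseteq_\star Q_2$, there is a $\star$-expansion of $Q_1$ witnessing non-containment whose size is polynomial in $|Q_1|+|Q_2|$ (for $\star=\ani$, described by an ordinary expansion of polynomial size together with a conjunction $J$ of equality atoms). For $\star=st$ this is essentially \cite[Thms.~4.3 and~4.4]{FigueiraGKMNT20}; I would give a uniform argument. Take a counter-example of minimum total size; if some atom expansion in it is long, its accepting run in the corresponding NFA of $Q_1$ repeats a state, so a loop can be pumped out, producing a strictly smaller expansion which, by minimality, is no longer a counter-example, hence admits an appropriate homomorphism from $Q_2$. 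The key point is that any such homomorphism that is not \emph{already} a homomorphism into the unshortened expansion must route $Q_2$ through the ``splice point'' of the pumping, and, since the number of atoms of $Q_2$ is bounded, such homomorphisms have bounded reach around the splice point; consequently all the information about $Q_1$'s expansion needed to detect them is captured by the atoms' NFAs together with polynomial-size pattern automata built from $Q_2$ (Knuth--Morris--Pratt-style automata allowing ``don't care'' positions, of size polynomial in $|Q_2|$). A pigeonhole argument over the states of the product of these automata then forces a minimal counter-example to have all atom expansions of polynomial length. The delicate step — and the main obstacle — is obtaining a \emph{polynomial} rather than merely exponential-in-$|Q_2|$ bound: this is exactly where the structure of $Q_2$ as a \CQ (hence the availability of the polynomial-size pattern automata, and the bounded-reach analysis of homomorphisms from $Q_2$) is essential; the \ani\ case additionally requires checking that restricting $J$ to the surviving variables keeps it a valid \ani-expansion, which is routine.

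Granting the lemma, the $\pitwo$ algorithm is: universally guess a candidate $\star$-expansion of $Q_1$ of the polynomial size promised by the lemma (for $\ani$, a pair $(E_1,J)$); verify in polynomial time that it is a valid $\star$-expansion of $Q_1$ (each chosen word is tested for membership in the corresponding language via its NFA; for $\ani$, also check that $J$ identifies no two $\varphi$-atom-related variables); and then existentially check that there is a homomorphism $Q_2\to E_1$ of the appropriate kind — an $\np$ test. If this $\forall[\np]$ computation accepts then, by the lemma, no counter-example exists and $Q_1\subseteq_\star Q_2$; a rejecting branch exhibits a genuine counter-example. Non-Boolean queries are handled by the standard reduction freezing each free variable with a fresh self-loop, which preserves containment under all three semantics. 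This places \CRPQ/\CQ, and a fortiori \CRPQfin/\CQ, containment in $\pitwo$ for every semantics.
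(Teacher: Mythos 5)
Your overall architecture (universally guess a polynomial-size counter-example, then run an \np{} test for the appropriate kind of homomorphism from $Q_2$) is genuinely different from the paper's proof, which never establishes --- and deliberately avoids needing --- a polynomial-size counter-example property. The paper instead works with polynomial-size \emph{abbreviations} of expansions: every atom expansion longer than $2N$ (with $N$ the number of atoms of $Q_2$) is replaced by its $N$-prefix, a fresh symbol $\#$, and its $N$-suffix, and the existence of a suitable long middle is delegated to a separate co-\np{} regular-language check. Crucially, the paper quantifies over a \emph{single connected component} $\hat Q_2$ of $Q_2$: being connected, its image in an expansion either meets the $N$-neighbourhood of a $Q_1$-variable (visible in the abbreviation) or lies entirely inside one atom expansion, in which case $\hat Q_2$ is equivalent to a directed path and the condition reduces to avoiding one factor.

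That single-component reduction is precisely the idea your sketch is missing, and without it the step you yourself call ``the main obstacle'' does not go through. In your pumping argument, the two cut positions must agree on the state of \emph{every} automaton detecting a piece of $Q_2$ mapping around the splice point; a product of the atom's NFA with one pattern automaton per connected component has on the order of $(N+1)^k$ states for $k$ components, so the pigeonhole gives an exponential, not polynomial, bound, and asserting that ``the structure of $Q_2$ as a \CQ{} is essential'' does not close this. The fix for $st$ and \ani{} is that a (non-contracting) homomorphism from $Q_2$ exists iff one exists from each component separately, so a counter-example is witnessed by a single failing component and only one pattern automaton is needed; moreover that automaton is a plain KMP automaton for a \emph{total} word, since a connected component mapping into the interior of a path must have an atom between every two consecutive levels --- with genuine don't-care positions, as you propose, the deterministic avoid-factor automaton can itself be exponential, which would again destroy the bound. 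For \qni{} the component-wise reduction is not valid as stated (injectivity couples the images of distinct components: a disconnected $Q_2$ can fail to map injectively even though every component maps), so an extra argument is needed there --- e.g.\ that by injectivity at most one component can straddle a splice point --- and neither your sketch nor, in fairness, the paper's own write-up addresses this. Two smaller inaccuracies: $\Exp{\ani}(Q_2)\neq\set{Q_2}$ in general (an \ani-expansion of a \CQ{} may identify non-adjacent variables), though this is harmless because Proposition~\ref{prop:cont-char-ainj}(2) only needs $\Exp{}(Q_2)=\set{Q_2}$ together with the non-contracting reformulation of $Q_2\ainjto F_1$; and for \ani{} your universally guessed object must be an \ani-expansion of $Q_1$ (an expansion plus identifications $J$), whose quotient is no longer a union of paths glued at $Q_1$-variables, so the locality analysis underlying the pumping has to be carried out on the quotient rather than on the expansion.
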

\begin{proof}
	 Let $\star \in \set{st,\qni,\ani}$.
	Given $Q_1, Q_2$, let $N$ be the number of atoms of $Q_2$. Consider the set $S$ of all expansions of $Q_1$ where every atom expansion $x \xrightarrow{w} y$ of size greater than $2N$ is replaced with $x \xrightarrow{u \# v} y$, where $\#$ is a fresh symbol, and $u$ \resp{$v$} is the $N$-prefix \resp{$N$-suffix} of $w$. Observe that every element of $S$ is of polynomial size, and that we can check in polynomial time whether any given polysized CQ is in $S$. Consider the following \sigmatwo algorithm for non-containment. 
	We check that there exists some connected component $\hat Q_2$ of $Q_2$ and element $E_1^{\#} \in S$ such that the following two conditions hold:
	\begin{enumerate}[(i)]
		\item  $E_1^{\#} \not\subseteq_\star \hat Q_2$ (which is in co-\np due to Proposition~\ref{prop:cq-crpq-cq-inj} for \qni 
		and \cite[Theorem~4.2]{FigueiraGKMNT20} for standard);
		\item \label{eq:item:lem:crpq-cq-pitwo}  for each atom $x \xrightarrow{u \# v} y$ of $E_1^{\#}$ associated to an atom $x \xrightarrow{L} y$ of $Q_1$, there is no $w$ such that (a) $u w v \in L$ and (b) $E_1^w \subseteq_\star \hat Q_2$, where $E_1^w() = x \xrightarrow{ u w v} y$.
	\end{enumerate}
	Since $\hat Q_2$ is connected it has to be mapped through a homomorphism \resp{injective homomorphism, \ani homomorphism} either to the $N$-neighbourhood of a variable of $Q_1$, or entirely inside an atom expansion. The two items above ensure that none of these cases can occur, and hence that there exists a counter-example for the containment $Q_1 \subseteq_\star Q_2$.
	Observe that in item \eqref{eq:item:lem:crpq-cq-pitwo}, if a $\star$-expansion of $\hat Q_2$ maps into a directed path, it means that $\hat Q_2$ is $\star$-equivalent to a directed path, and hence that $w$ can be taken of polynomial size. This, in turn, means that \eqref{eq:item:lem:crpq-cq-pitwo} can be done in co-\np.
\end{proof}

%%%%%%%%%%%%%%%%%%%%%%%%%%%%%%%%

\begin{proposition}\label{prop:crpq-crpqfin-pspace-h}
	The \textup{\CRPQ/\CRPQfin} containment problem is \pspace-hard under all semantics.
\end{proposition}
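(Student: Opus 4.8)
The plan is to reduce from the word problem for polynomial-space deterministic Turing machines --- ``given a deterministic TM $M$ working in space $s=s(n)$ and an input $x$ of length $n$, does $M$ accept $x$?'' --- which is \pspace-complete. From $(M,x)$ I would build a Boolean \CRPQ $Q_1$ and a Boolean \CRPQfin $Q_2$, of size polynomial in $|M|+s$, so that $M$ accepts $x$ if{f} $Q_1\not\subseteq_\star Q_2$; since \pspace is closed under complement this gives \pspace-hardness of $\subseteq_\star$ for each $\star\in\{st,\qni,\ani\}$. The encoding is the classical tableau idea: an accepting run of $M$ is a sequence of configurations $C_0^{\mathit{init}},C_1,\dots,C_k,C_{\mathit{accept}}$, each a word of length $s$ over a configuration alphabet $\Gamma$ (tape symbols, plus a combined $(\text{state},\text{symbol})$ symbol at the head cell), and --- for a deterministic $M$ --- cell $j$ of a configuration is a fixed function $\mathrm{next}\colon\Gamma^3\to\Gamma$ of cells $j-1,j,j+1$ of the previous one. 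I encode a candidate run by the word $\vdash\,\#\,C_0^{\mathit{init}}\,\#\,C_1\,\#\cdots\#\,C_k\,\#\,C_{\mathit{accept}}\,\#\,\dashv$, take $Q_1$ to be the single-atom loop $z\xrightarrow{\;\vdash\,\#\,C_0^{\mathit{init}}\,(\#\,\Gamma^{s})^{*}\,\#\,C_{\mathit{accept}}\,\#\,\dashv\;}z$ (a polynomial-size regular expression, with $C_0^{\mathit{init}}$ and $C_{\mathit{accept}}$ the fixed length-$s$ strings for the initial and the canonical accepting configuration), and take $Q_2$ to be the single atom $z\xrightarrow{\mathit{Bad}}z'$ whose finite language $\mathit{Bad}$ collects all ``locally inconsistent windows''.

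Concretely I would set
\[
\mathit{Bad}\;=\;\sum_{j=1}^{s-2}\;\;\sum_{\substack{\sigma_{-1}\sigma_0\sigma_1\in\Gamma^3,\ \tau\in\Gamma\\ \tau\neq\mathrm{next}(\sigma_{-1}\sigma_0\sigma_1)}}\sigma_{-1}\,\sigma_0\,\sigma_1\cdot\Gamma^{\,s-j-2}\cdot\#\cdot\Gamma^{\,j}\cdot\tau,
\]
together with the usual separate treatment of the two boundary cells $j\in\{0,s-1\}$. Each summand is a star-free expression of length $O(s)$, there are $O(s\cdot|\Gamma|^{4})$ of them, so $Q_2\in\CRPQfin$ is polynomial. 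The design makes the offset between the three-cell window and the checked cell $\tau$ equal to $s-1$ regardless of $j$, so that in any word of $L(Q_1)$ a factor matching a summand of $\mathit{Bad}$ can only sit with its window at cells $j-1,j,j+1$ of some $C_i$ and its last symbol at the homologous cell $j$ of $C_{i+1}$; and the markers $\vdash,\dashv$ (which occur exactly once in every word of $L(Q_1)$ and in no word of $\mathit{Bad}$) prevent spurious matches ``wrapping around'' the cyclic expansion or mixing $C_{\mathit{accept}}$ with $C_0^{\mathit{init}}$. Note this also gives hardness already for a single-atom $Q_1$ and a single-atom $Q_2$.

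For the correctness argument: since $Q_1$ has a single atom, all variables of any expansion are pairwise $\varphi$-atom-related, hence no identification is allowed and $\Exp{\ani}(Q_1)=\Exp{}(Q_1)$; every expansion $E_1$ of $Q_1$ is a directed cycle spelling a word $w_{E_1}\in L(Q_1)$ whose length exceeds $|v|=s+3$ for all $v\in\mathit{Bad}$. Consequently, for $v\in\mathit{Bad}$, mapping the expansion $z\xrightarrow{v}z'$ of $Q_2$ into $E_1$ amounts to $v$ being a walk of length $<|E_1|$ in a cycle, i.e.\ a simple path; so the plain, injective, and atom-injective notions coincide, and each is equivalent to ``$v$ is an infix of $w_{E_1}$'' (the cyclic-only infixes straddle $\dashv\vdash$ and hence lie outside $\mathit{Bad}$). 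Plugging this into the characterizations of Propositions~\ref{prop:cont-char-exp-st}, \ref{prop:cont-char-qinj} and~\ref{prop:cont-char-ainj}, I get that $Q_1\subseteq_\star Q_2$ holds (for every $\star$) if{f} every $w\in L(Q_1)$ has an infix in $\mathit{Bad}$, i.e.\ every candidate run has a locally inconsistent window; by the positioning argument this is equivalent to: no sequence $C_0^{\mathit{init}},C_1,\dots,C_k,C_{\mathit{accept}}$ is locally consistent throughout, i.e.\ (as $M$ is deterministic) the run of $M$ from $C_0^{\mathit{init}}$ never reaches $C_{\mathit{accept}}$, i.e.\ $M$ rejects $x$. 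Hence $Q_1\not\subseteq_\star Q_2$ if{f} $M$ accepts $x$.

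The main obstacle I expect is twofold and already visible above. First, a \CRPQfin has only \emph{bounded-length} matching patterns, so it can detect only \emph{local} inconsistencies; the reduction must therefore come from a problem whose only ``global'' ingredient is reachability (here, over the exponentially long but polynomially wide computation tableau) --- hence the Turing-machine/tableau formulation rather than, say, NFA universality, whose complement (``all runs reject'') is not locally certifiable. Second, \CRPQfin queries are conjunctive, not disjunctive, so the various error kinds (wrong initial configuration, wrong transition window, wrong final configuration) and the many positions $j$ must all be packed into the single finite language $\mathit{Bad}$ \emph{without false positives}: a factor of an expansion must match some word of $\mathit{Bad}$ \emph{only} when it genuinely witnesses a local inconsistency. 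This is exactly what dictates the position-dependent gap shape $\Gamma^{\,s-j-2}\,\#\,\Gamma^{\,j}$, the endpoint markers $\vdash,\dashv$, and the trick of folding the initial/final checks into the transition machinery by hard-coding $C_0^{\mathit{init}}$ and $C_{\mathit{accept}}$ into $L(Q_1)$. The remaining verifications --- polynomial sizes of $Q_1,Q_2$, the explicit local rule and its boundary cases, and that the displayed equivalences hold verbatim for all three semantics --- are routine.
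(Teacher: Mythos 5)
Your proposal is correct, but it takes a genuinely different route from the paper. The paper's proof is essentially a two-line citation: it observes that for Boolean single-atom queries $Q_i() = x \xrightarrow{L_i} y$ the three containment relations $\subseteq_{st}$, $\subseteq_{\ani}$, $\subseteq_{\qni}$ coincide (an expansion of such a $Q_2$ maps into an expansion of such a $Q_1$ if{f} its word is a factor of the other, and injectivity comes for free on a simple path, while $\Exp{\ani}(Q_1)=\Exp{}(Q_1)$ since all variables of a one-atom expansion are atom-related), and then invokes \cite[Theorem~4.5]{FigueiraGKMNT20}, which already gives \pspace-hardness of this single-atom containment under standard semantics, even with $L_2$ star-free and a fixed alphabet. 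You instead re-derive the needed \pspace-hard instance from scratch via a Turing-machine tableau: a starred expression generating candidate computation histories on the left, and a finite union of position-indexed ``bad window'' patterns on the right. Your collapse-of-semantics argument is the same in spirit as the paper's, though you have to work slightly harder because your $Q_1$ is a loop atom rather than a path atom: you need the length bound $|v|+1\le|w|$ to force injectivity and the $\vdash,\dashv$ markers to kill wrap-around matches on the cyclic expansion, both of which you handle correctly. What the paper's approach buys is brevity and a fixed-alphabet strengthening inherited from the cited theorem; what yours buys is self-containment and an explicit witness that hardness already holds for one atom versus one atom with all right-hand words of the same polynomial length. The deferred details in your write-up (boundary cells, canonical accepting configuration, the alignment of the single $\#$ inside a length-$(s+3)$ factor against separators spaced $s+1$ apart) are indeed routine and check out.
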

\begin{proof}
	First observe that for Boolean queries $Q_1,Q_2$ of the form $Q_i() =  x \xrightarrow{L_i} y$, we have $Q_1 \subseteq_{st} Q_2$ if{f} $Q_1 \subseteq_{\ani} Q_2$ if{f} $Q_1 \subseteq_{\qni} Q_2$. In \cite[Theorem~4.5]{FigueiraGKMNT20} it was shown that the containment problem (under standard semantics) for this kind of queries is \pspace-hard, even when $L_2$ is a star-free expression and the alphabet is of fixed size.
\end{proof}

%%%%%%%%%%%%%%%%%%%%%%%%%%%%%%%%

\begin{proposition}\label{prop:crpq-crpqfin-pspace-st}
	The \textup{\CRPQ/\CRPQfin} containment problem is in \pspace under standard semantics.
\end{proposition}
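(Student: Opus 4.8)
The plan is to mimic the \pspace algorithm from the proof of Theorem~\ref{thm:crpq-crpq-nodeinj-pspace}, replacing injective homomorphisms everywhere by ordinary homomorphisms, and to observe that the single place where that proof manipulated exponential-sized objects — the graph $H$ of a morphism type — stays polynomial as soon as $Q_2 \in \CRPQfin$.

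First I would invoke Proposition~\ref{prop:cont-char-exp-st}: $Q_1 \not\subseteq_{st} Q_2$ holds if and only if some $E_1 \in \Exp{}(Q_1)$ is a counter-example, i.e.\ there is no $E_2 \in \Exp{}(Q_2)$ with $E_2 \to E_1$. As in Theorem~\ref{thm:crpq-crpq-nodeinj-pspace} (Remark~\ref{rk:no-epsilon} together with its accompanying reductions) I would first dispose of $\epsilon$ by passing to an exponential union of polynomial-sized $\epsilon$-free CRPQs, membership in which is testable in \pspace. Crucially, since every language of $Q_2$ is finite and given by a regular expression without Kleene star, the maximal length $N$ of a word in such a language is polynomial in $|Q_2|$ (an expression of size $s$ with no star denotes only words of length at most $s$).

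Next I would reuse, essentially verbatim, the machinery of that proof: the notion of abstraction $\alpha$ of an expansion of $Q_1$ (it refers only to $Q_1$ and to the NFAs of the languages of $Q_2$, whose complete co-complete versions remain of polynomial size), the graph $\mathbf G$ obtained from $Q_1$ by subdividing each atom into a path of length $3$, and Claim~\ref{cl:abstraction-pspace} (testing whether a mapping is an abstraction of $Q_1$ is in \pspace, via an on-the-fly procedure with a polynomial-sized counter that bounds the size of a witnessing expansion). I would redefine a \emph{morphism type} as a pair $(H,h)$ where $h\colon H\to\mathbf G$ is an arbitrary homomorphism sending the free variables of $Q_2$ to those of $Q_1$, and $H$ is obtained from $Q_2$ by replacing each atom $B$ with a non-empty path whose length is the length of some word in the language of $B$. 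This is the key point: that length is at most $N$, hence $H$ has polynomial size even though $h$ is no longer required to be injective — this is exactly the step that forces \expspace in the \CRPQ/\CRPQ case and that we recover polynomially here. The compatibility relation between a morphism type and an abstraction (the cases of Figure~\ref{fig:cases-compatible}) and Claims~\ref{cl:morphism-pspace}--\ref{cl:compatible-counterexample} then carry over with the same proofs, the last one now reading: a morphism type is compatible with $\alpha$ if and only if for every (equivalently, for some) expansion $E_1$ of $Q_1$ with abstraction $\alpha$ there is an expansion $E_2$ of $Q_2$ with $E_2\to E_1$.

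Finally, the \pspace decision procedure for non-containment guesses a polynomial-sized abstraction $\alpha$, checks that it is an abstraction of $Q_1$ (Claim~\ref{cl:abstraction-pspace}), and checks that no polynomial-sized morphism type is compatible with it — a polynomial-space universal quantification over morphism types, each individual compatibility test being in \pspace by Claims~\ref{cl:morphism-pspace} and \ref{cl:compatible-pspace}. By Claim~\ref{cl:compatible-counterexample} a successful run certifies that every expansion with abstraction $\alpha$ is a counter-example, so $Q_1\not\subseteq_{st}Q_2$; conversely, if the procedure fails for all $\alpha$ then $Q_1\subseteq_{st}Q_2$. Closure of \pspace under complement and under the exponential unions produced by $\epsilon$-elimination (as at the end of the proof of Theorem~\ref{thm:crpq-crpq-nodeinj-pspace}) then yields the claimed bound. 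The step I expect to require the most care is re-establishing the analogue of Claim~\ref{cl:compatible-counterexample}: to build an expansion $E_2$ of $Q_2$ with $E_2\to E_1$ out of a compatible triple, one must check that the filler sub-paths witnessed by the infix-run facts recorded in $\alpha$ can be glued into a bona fide homomorphism; but since homomorphisms, unlike injective ones, impose no disjointness constraint, this direction is if anything easier than in Theorem~\ref{thm:crpq-crpq-nodeinj-pspace}, so the genuinely new ingredient of the proof is precisely the polynomial bound on $H$.
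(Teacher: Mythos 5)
Your key observation---that for $Q_2 \in \CRPQfin$ the maximal word length of a right-hand language is polynomial, so the object describing how a $Q_2$-expansion folds into a $Q_1$-expansion stays polynomially bounded---is sound, but the claim that the abstraction/morphism-type machinery of Theorem~\ref{thm:crpq-crpq-nodeinj-pspace} then ``carries over with the same proofs'' once $\injto$ is replaced by $\to$ does not hold: injectivity is load-bearing in that proof well beyond the size of $H$. First, the length-$3$ subdivision defining $\mathbf{G}$ is justified by Remark~\ref{rk:nodeinj-two-vars-in-atom}: under an \emph{injective} homomorphism at most two variables of $Q_2$ can land on distinct internal nodes of a single atom expansion of $Q_1$, which is why two internal vertices per atom, and the $17$ compatibility cases, suffice. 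Under an arbitrary homomorphism, up to twice the number of atoms of $Q_2$ many walk endpoints can land at distinct internal positions of one atom expansion of $Q_1$, in an order and with coincidences that the length-$3$ gadget cannot represent. Second, and more fundamentally, the abstraction records only facts of the form ``$w$ admits \emph{some} decomposition $w=uv$ (or $w=usv$) with such-and-such runs on the pieces'', each fact existentially quantifying over its own split point. The analogue of direction $(1)\Rightarrow(2)$ of Claim~\ref{cl:compatible-counterexample} requires a \emph{single} expansion word $w$ together with a \emph{consistent} family of split points serving all partial traversals simultaneously (for instance, when two atoms of $Q_2$ share a variable mapped inside the expansion of one atom of $Q_1$, the run of the first NFA must end at exactly the position where the run of the second begins). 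The unary and binary facts stored in $\alpha(A)$ cannot certify such joint consistency, so a compatible morphism type no longer guarantees that every expansion with abstraction $\alpha$ admits a homomorphic image of some $Q_2$-expansion; the direction $(3)\Rightarrow(1)$ breaks as well, since $g^{-1}(\pi)$ need not be a path when $g$ is not injective. The difficulty you set aside as ``easier without disjointness'' is precisely this consistency problem, which is the reason standard-semantics containment needs the exponential product-automaton construction in general.

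The paper's own proof avoids all of this by a much more elementary route: letting $N$ be the (polynomial) maximum number of atoms of an expansion of $Q_2$, a connected component $\hat Q_2$ must map either into the $N$-neighbourhood of a variable of $Q_1$---handled by guessing a compressed expansion $E_1^{\#}$ in which every atom expansion longer than $2N$ is truncated to its $N$-prefix and $N$-suffix separated by a fresh letter $\#$, followed by a co-\np containment test---or entirely inside the interior of a single atom expansion, which reduces to an intersection-emptiness test for regular languages and is therefore in \pspace. If you wish to keep your automata-theoretic route, you would have to enrich the abstraction so that it certifies consistent multi-split decompositions of each expansion word, which amounts to a genuinely different (and considerably heavier) argument.
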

\begin{proof}
	Given $Q_1, Q_2$, let $N$ be the maximum number of atoms of an expansion of $Q_2$. 
	Consider the set $S$ of all expansions of $Q_1$, where every expansion  $x \xrightarrow{w} y$ of an atom $x \xrightarrow{L} y$ thereof such that $|w| > 2N$  is replaced with $u \cdot \# \cdot v$, where $u$ \resp{$v$} is the $N$-prefix \resp{$N$-suffix} of $w$, and $\#$ is a fresh symbol.
	The \pspace algorithm then guesses an element $E_1^{\#}$ of $S$ and checks whether there exists some expansion $E_1$ of $Q_1$ from which $E_1^{\#}$ could be obtained such that no expansion $E_2$ of $Q_2$ can be homomorphically mapped to $E_1$. For this, we check that there exists some connected component $\hat Q_2$ of $Q_2$ such that the following two conditions hold:
	\begin{enumerate}[(i)]
		\item \label{eq:item:lem:crpq-crpqfin-pspace-st:a} $E_1^{\#} \not\subseteq \hat Q_2$ (which is in co-\np \cite[Theorem~4.2]{FigueiraGKMNT20});
		\item \label{eq:item:lem:crpq-crpqfin-pspace-st}  for each path of the form $x \xrightarrow{u \# v} y$ in $E_1^{\#}$ associated with the expansion of an atom $x \xrightarrow{L} y$ of $Q_1$, there is no $w$ such that:  (a) $u w v \in L$ and  (b) $E_1^w \subseteq \hat Q_2$, where $E_1^w() = x \xrightarrow{ u w v} y$.
	\end{enumerate}
	Since $\hat Q_2$ is connected it needs to be mapped either to the $N$-neighbourhood of a variable of $Q_1$ (ruled out by item \ref{eq:item:lem:crpq-crpqfin-pspace-st:a}), or entirely inside an atom expansion (ruled out by item \ref{eq:item:lem:crpq-crpqfin-pspace-st}). On the other hand, $E_1 \not\subseteq Q_2$ if{f} $E_1 \not\subseteq \hat Q_2$ for some component $\hat Q_2$.
	These two items hence ensure that none of these cases can occur, and that there exists a counter-example for the containment $Q_1 \subseteq Q_2$.
	Observe that item \eqref{eq:item:lem:crpq-crpqfin-pspace-st} can be seen as an instance of the intersection emptiness problem for regular languages, that is, the problem of whether $\bigcap_{i \in I} L_i = \emptyset$ for a given set $\set{L_i}_{i \in I}$ of regular expressions, which is a \pspace-complete problem \cite{Kozen77}.
\end{proof}

%%%%%%%%%%%%%%%%%%%%%%%%%%%%%%%%

%%%%%%%%%%%%%%%%%%%%%%%%%%%%%%%%

\begin{proposition}\label{prop:crpqfin-crpq-upper}
	The \textup{\CRPQfin/\CRPQ} containment problem is in \pitwo, under all semantics.
\end{proposition}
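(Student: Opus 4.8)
Fix $\star \in \set{st,\qni,\ani}$ and CRPQs $Q_1 \in \CRPQfin$, $Q_2 \in \CRPQ$. Write $\star' = \ani$ if $\star = \ani$, and otherwise let a \emph{$\star'$-expansion} mean an ordinary expansion. The starting point is the uniform characterization obtained by combining the containment characterizations of Propositions~\ref{prop:cont-char-exp-st},~\ref{prop:cont-char-qinj} and~\ref{prop:cont-char-ainj} with the semantics characterizations of Propositions~\ref{prop:semantics-st-qinj} and~\ref{prop:semantics-ainj}: $Q_1 \subseteq_\star Q_2$ if{f} for every $\star'$-expansion $E_1(\bar y)$ of $Q_1$ we have $\bar y \in Q_2(E_1)^\star$, viewing $E_1$ as a graph database. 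Hence $Q_1 \not\subseteq_\star Q_2$ if{f} there is a $\star'$-expansion $E_1(\bar y)$ of $Q_1$ that is a counter-example, i.e.\ $\bar y \notin Q_2(E_1)^\star$.

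The plan is to show that this characterization yields a \sigmatwo algorithm for non-containment. The key observation is that, since $Q_1$ has no Kleene star, every regular expression of $Q_1$ denotes a finite language whose longest word has length at most the size of the expression; consequently every expansion of $Q_1$ has size polynomial in $|Q_1|$, and so does every $\ani$-expansion $F_1 = \widetilde{F}^{\collapse}$ with $\widetilde{F} = E \land J$, since $J$ ranges over equality atoms on the polynomially many variables of the expansion $E$. Moreover, given a candidate polynomial-sized CQ together with an expansion profile $\varphi$ (and, in the $\ani$ case, the conjunction $J$), one can check in polynomial time that it is a legal $\star'$-expansion of $Q_1$: for ordinary expansions this amounts to verifying that each atom word lies in the corresponding finite language, and for $\ani$-expansions one additionally computes the equivalence classes of $=_{\widetilde{F}}$ and verifies that no two distinct $\varphi$-atom-related variables get identified.

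The algorithm for $Q_1 \not\subseteq_\star Q_2$ then proceeds as follows: existentially guess a polynomial-sized CQ $E_1(\bar y)$ together with its generating data, verify in polynomial time that $E_1$ is a legal $\star'$-expansion of $Q_1$, and then verify that $\bar y \notin Q_2(E_1)^\star$. This last test is exactly the complement of an instance of the evaluation problem for \CRPQ under $\star$-semantics with database $E_1$, query $Q_2$ and tuple $\bar y$, which by Proposition~\ref{prop:eval-np-c} is in \np; hence the test is in co-\np. The overall procedure is thus in \sigmatwo, so the containment problem is in \pitwo, for all semantics.

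The only point requiring some care is the $\ani$ case, where one must invoke the characterization of Proposition~\ref{prop:cont-char-ainj}, which quantifies over the \emph{larger} space $\Exp{\ani}(Q_1)$ rather than $\Exp{}(Q_1)$, and check that membership in this space — and the size of the resulting object — stays polynomial; as noted above this holds because the equality atoms in $J$ only involve variables of a polynomial-sized ordinary expansion. Everything else is a routine combination of the expansion-based characterizations of containment with the \np-membership of evaluation.
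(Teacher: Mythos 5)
Your proposal is correct and follows essentially the same route as the paper's proof: guess a $\star$-expansion of $Q_1$ (which is of polynomial --- in fact linear --- size since $Q_1$ has no Kleene star) and decide $\bar y \notin Q_2(E_1)^\star$ in co-\np via Proposition~\ref{prop:eval-np-c}, yielding a \sigmatwo procedure for non-containment. The extra care you take with the legality check of the guessed expansion and with the $\ani$-expansion space is a welcome elaboration of details the paper leaves implicit, but it is not a different argument.
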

\begin{proof}
	Let $\star \in \set{st,\ani,\qni}$ and let $Q_1(\bar x),Q_2(\bar x)$ be an instance. One can test non-containment by guessing a $\star$-expansion $E_1(\bar x) \in \Exp{\star}(Q_1(\bar x))$ (of linear size) and test that $\bar x \not\in Q_2(E_1)$ under $\star$ semantics, which is in co-\np by Proposition~\ref{prop:eval-np-c}.
\end{proof}
\fi
\end{document}
\endinput